\newtheorem{theorem}{Theorem}
\newtheorem{pretheorem}{Theorem}
\newtheorem{lemma}{Lemma}
\newtheorem{corollary}{Corollary}
\newcommand{\LOCAL}{\mathsf{LOCAL}}
\newcommand{\ignore}[1]{}
\newcommand{\logstar}{{\log^*}}
\newcommand{\ident}{{\operatorname{ID}}}
\newcommand{\ext}{\textrm{Ext}}
\newcommand{\extra}{\mathcal{EXT}}
\newcommand{\AAA}{\mathcal A}
\newcommand{\veee}{\mathop{\vee}}
\newcommand{\out}{\operatorname{out}}
\newcommand{\ing}{\operatorname{in}}
\newcommand{\none}{\operatorname{none}}
\newcommand{\II}{\mathcal I}
\newcommand{\JJ}{\mathcal J}
\newcommand{\GG}{\mathcal{G}}
\newcommand{\YY}{\mathcal{Y}}
\newcommand{\ZZ}{\mathcal{Z}}
\newcommand{\NN}{\mathbb{N}}
\newcommand{\OO}{\mathcal{O}}
\newenvironment{myabstract}
{\list{}{\listparindent 1.5em%
		\itemindent    \listparindent
		\leftmargin    1cm
		\rightmargin   1cm
		\parsep        0pt}%
	\item\relax}
{\endlist}
\newenvironment{mycover}
{\list{}{\listparindent 0pt
		\itemindent    \listparindent
		\leftmargin    1cm
		\rightmargin   1cm
		\parsep        0pt}%
	\raggedright
	\item\relax}
{\endlist}
\newcommand{\myemail}[1]{\,$\cdot$\, {\small #1}}
\newcommand{\myaff}[1]{\,$\cdot$\, {\small #1}\par\medskip}
\begin{document}
	
	\begin{mycover}
		{\huge\bfseries\boldmath An Automatic Speedup Theorem for Distributed Problems \par}
		\bigskip
		\bigskip

		\textbf{Sebastian Brandt}
		\myemail{brandts@ethz.ch}
		\myaff{ETH Zurich}
		
	\end{mycover}
	
	\medskip
	\begin{myabstract}
		\noindent\textbf{Abstract.}
		Recently, Brandt et al.\ [STOC'16] proved a lower bound for the distributed Lov\'asz Local Lemma, which has been conjectured to be tight for sufficiently relaxed LLL criteria by Chang and Pettie [FOCS'17].
		At the heart of their result lies a speedup technique that, for graphs of girth at least $2t+2$, transforms any $t$-round algorithm for one specific LLL problem into a $(t-1)$-round algorithm for the same problem.
		We substantially improve on this technique by showing that such a speedup exists for \emph{any} locally checkable problem $\Pi$, with the difference that the problem $\Pi_1$ the inferred $(t-1)$-round algorithm solves is not (necessarily) the same problem as $\Pi$.
		Our speedup is \emph{automatic} in the sense that there is a fixed procedure that transforms a description for $\Pi$ into a description for $\Pi_1$ and \emph{reversible} in the sense that any $(t-1)$-round algorithm for $\Pi_1$ can be transformed into a $t$-round algorithm for $\Pi$.
		In particular, for any locally checkable problem $\Pi$ with exact deterministic time complexity $T(n, \Delta) \leq t$ on graphs with $n$ nodes, maximum node degree $\Delta$, and girth at least $2t+2$, there is a sequence of problems $\Pi_1, \Pi_2, \dots$ with time complexities $T(n, \Delta)-1, T(n, \Delta)-2, \dots$, that can be inferred from $\Pi$.
		
		As a first application of our generalized speedup, we solve a long-standing open problem of Naor and Stockmeyer [STOC'93]: we show that weak $2$-coloring in odd-degree graphs cannot be solved in $o(\logstar \Delta)$ rounds, thereby providing a matching lower bound to their upper bound. 
	\end{myabstract}

	\thispagestyle{empty}
	\setcounter{page}{0}
	\newpage

	\section{Introduction}\label{sec:introduction}

	In this work, we study the question of determining the time complexity of distributed graph problems from the perspective of \emph{round elimination}.
	More concretely, we ask: Given a problem $\Pi$, can we find another problem $\Pi'$ which can be solved in exactly one round less?
	Is it perhaps even possible to infer such a problem $\Pi'$ from $\Pi$ in an automated fashion, so that we can obtain a sequence of problems with decreasing complexities until we end up with a problem that can be solved in $0$ rounds?
	We will show that, given certain (reasonable) conditions, the answer to both questions is yes.
	As a concrete evidence of the power of this automatic speedup, we resolve the complexity of odd-degree weak $2$-coloring, an open question asked by Naor and Stockmeyer in 1993 \cite{DBLP:conf/stoc/NaorS93}.

	\paragraph{Model}
	Our distributed model of computation is a variant of the well-known $\LOCAL$ model \cite{DBLP:journals/siamcomp/Linial92, Peleg2000}, a synchronous message passing model where the nodes of a given input graph $G$ are processors that have the task to collaboratively solve some graph problem on $G$.
	The essential difference between our model and the $\LOCAL$ model is that in our case nodes are not equipped with unique identifiers, but (potentially) have some other symmetry breaking information available.
	The class of problems we consider are \emph{locally checkable} problems, i.e., problems where the global validity of a solution can be checked locally by the nodes in constant time.
	More precisely, for some constant $r$, each node has a set of acceptable output configurations for its radius-$r$ neighborhood, and a global solution is considered valid if and only if the output configuration of each node's radius-$r$ neighborhood is acceptable.
	Our speedup results are about the \emph{deterministic} time complexity of locally checkable problems; however, there exist known techniques to lift the obtained bounds to both the randomized and the deterministic $\LOCAL$ model.
	We will elaborate on these techniques in Section\ \ref{sec:lift}.

\subsection{Our Contributions}\label{sec:contrib}
	
	Our main contributions are twofold:
	1) We present a speedup theorem that generalizes the speedup technique of Brandt et al.\ \cite{Brandtlll}: In that work, the authors obtain a lower bound for the distributed Lov\'asz Local Lemma by developing a round elimination technique for a problem called sinkless orientation. We show that such a speedup exists for arbitrary locally checkable problems.
	In particular, we provide a method that takes the description of a problem $\Pi$ as input and outputs the description of a problem $\Pi'$ that can be solved one round faster but not any faster than that.
	This facilitates exciting new approaches for obtaining time complexity lower (and upper) bounds.
	2) We show that our speedup technique is a powerful tool by using it to prove a tight lower bound for weak $2$-coloring in odd-degree graphs, answering the long-standing open question by Naor and Stockmeyer.
	
	\paragraph{Speedup Results}
	We provide an automatic procedure that transforms any given locally checkable problem $\Pi$ into a (locally checkable) problem $\Pi_1$ such that, informally speaking, the following holds.
	\begin{pretheorem}[informal]\label{prethm:speedup}
		Let $\Pi$ be a locally checkable problem and $\Pi_1$ the problem obtained by applying our speedup transformation.
		On graphs of girth at least $2t+2$, the following two statements are equivalent:
		\begin{itemize}
			\item[(1)] There is an algorithm solving $\Pi$ in time $t$. 
			\item[(2)] There is an algorithm solving $\Pi_1$ in time $t-1$.
		\end{itemize}
	\end{pretheorem}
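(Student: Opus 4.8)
The plan is to prove the equivalence by a \emph{round elimination} argument in the spirit of Brandt et al.\ \cite{Brandtlll}, lifted from sinkless orientation to an arbitrary locally checkable $\Pi$, with the engine being a purely syntactic transformation on problem descriptions. First I would normalize $\Pi$ so that a solution is a labelling of half-edges by a finite alphabet $\Sigma$ subject to a node constraint $\mathcal N$ (admissible multisets of labels around a node, one family per degree) and an edge constraint $\mathcal E$ (admissible pairs on the two half-edges of an edge); any constant checking radius can be folded into $\Sigma$, which is essentially without loss of generality on graphs of girth $\ge 2t+2$ since there the pertinent neighbourhoods are trees. The transformed problem $\Pi_1$ then uses as its alphabet a suitable family of subsets of $\Sigma$; its constraints are obtained from $\mathcal N$ and $\mathcal E$ by a mixture of universal and existential quantification over choices of one label from each relevant subset, and a subset written on a half-edge is to be read as ``the set of $\Sigma$-labels that a $t$-round algorithm for $\Pi$ could conceivably place on the corresponding half-edge''. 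This map $\Pi \mapsto \Pi_1$ is manifestly computable and $\Pi_1$ is again locally checkable, so ``automatic'' and ``reversible'' will follow for free once the two implications are established; I expect that the genuinely delicate part of even \emph{defining} $\Pi_1$ is calibrating these quantifiers so that \emph{both} implications close.

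For (2)$\Rightarrow$(1) I would argue by direct simulation. A node $v$ equipped with its radius-$t$ view runs the assumed $(t-1)$-round algorithm $\mathcal B$ for $\Pi_1$ in its head, both at itself and at each of its neighbours (each such simulation needs only a radius-$(t-1)$ view, which $v$ sees). From the subsets $\mathcal B$ produces on the half-edges around $v$ and on the far ends of $v$'s incident edges, node $v$ selects, for each incident half-edge, an actual $\Sigma$-label using a canonical witness of the appropriate projected constraint, for instance the lexicographically least selection realizing membership in the projected node constraint. That the resulting labelling satisfies $\mathcal N$ at $v$ will follow from the universally quantified part of $\Pi_1$'s node constraint; that it satisfies $\mathcal E$ on each edge will follow from the existentially quantified part together with the fact that the two endpoints of an edge compute the same witness, for which they need a tie-breaking rule consistent across the edge, supplied by the symmetry-breaking information of the model. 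I expect this direction to be largely bookkeeping.

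For (1)$\Rightarrow$(2), the substantive direction, a node $v$ with only its radius-$(t-1)$ view outputs, on each incident half-edge, the set of all labels that the assumed $t$-round algorithm $\mathcal A$ could place on the corresponding half-edge, ranging over all girth-$\ge 2t+2$ completions of the part of the pertinent radius-$t$ view that lies outside $v$'s own view; because the model has no identifiers, $\mathcal A$ is well defined on every such completion and this set is meaningful. One of $\Pi_1$'s constraints is then satisfied for free by taking the completion to be the true input graph $G$ (which has girth $\ge 2t+2$ by hypothesis), so that the actual behaviour of $\mathcal A$ witnesses the existential part. \textbf{The hard part will be} the universally quantified constraint: one must show that \emph{every} choice of one label per subset around $v$---a choice that in general mixes labels coming from many different completions---can be realized by a \emph{single} completion, whence that choice coincides with the output of $\mathcal A$ somewhere and so obeys the corresponding constraint of $\Pi$. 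This is a graph-surgery lemma, and its proof is exactly where the girth bound $2t+2$ is spent: one must show that in such a graph the radius-$t$ views feeding the subsets around $v$ share only a common tree-like core, on which all completions already agree, while the parts one is free to vary sit in pairwise-disjoint subtrees hanging off the distinct half-edges (equivalently, off the distinct neighbours of $v$), so that the individual completions can be stitched together into one without conflict. Determining the precise radius that must be tree-like---hence why $2t+2$ suffices---together with the degree bookkeeping and the normalization step, is the remaining careful-but-routine work.
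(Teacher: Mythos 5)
Your high-level plan is the right one — normalize to half-edge labelings, have a radius-$(t-1)$ node output sets of conceivable labels, and calibrate a universal/existential pair of constraints — but there is a genuine gap in the definition of $\Pi_1$, and it breaks the hard direction exactly where you predicted trouble. You take the alphabet of $\Pi_1$ to be subsets of $\Sigma$, read the subset at $(v,e)$ as ``the $\Sigma$-labels $\mathcal{A}$ could place there over all completions of $N^{t-1}(v)$,'' and make the node constraint universal. The surgery lemma you propose is true (it is the paper's $t$-independence property), but it operates at the level of \emph{completions}, not of \emph{labels}: stitching the completion $C_i$ that realizes $y_i$ along edge $e_i$ into one hybrid completion $C^*$ only shows that $\mathcal{A}$ outputs \emph{some} tuple from $Y_1\times\cdots\times Y_d$ on $C^*$, not the prescribed tuple $(y_1,\dots,y_d)$, because $\mathcal{A}$'s output at $(v,e_i)$ depends on the entire radius-$t$ view and not only on the branch through $e_i$. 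Concretely, if $\Pi$ forces every node to output the same letter on all of its half-edges and $\mathcal{A}$ outputs all-$a$ on some views and all-$b$ on others, then every $Y_i=\{a,b\}$ and your universal node constraint demands that the mixed multiset be admissible, which it is not; so the derived algorithm does not solve your $\Pi_1$ and (1)$\Rightarrow$(2) fails. Flipping that quantifier to existential repairs this but then (2)$\Rightarrow$(1) loses the node constraint, so no single-level assignment of quantifiers closes both directions.

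The paper's fix is the missing idea: split the round into two half-rounds via an intermediate problem $\Pi_{1/2}$ whose computing entities are the \emph{edges}. An edge $e$ sees $N^t(e)$ and outputs a subset of $\Sigma$ at each of its two half-edges; for $\Pi_{1/2}$ the edge constraint is universal (the two extensions of $N^t(e)$, one beyond each endpoint, really are independent at girth $\geq 2t+2$) and the node constraint is existential (witnessed by the true graph). Then $\Pi_1$ is obtained by applying the dual transformation to $\Pi_{1/2}$, so its alphabet is $2^{2^{\Sigma}}$ and its node constraint is universal over choices of one $\Pi_{1/2}$-label per incident edge, with the \emph{inner, existential} condition of $h_{1/2}$ imposed on each such choice. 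Because a $\Pi_{1/2}$-label at $(v,e_i)$ is a function of $N^t(e_i)$ alone, choosing one element from each set around $v$ is the same as choosing one extension per incident edge — exactly what your girth argument can combine — and once the full view is fixed, the actual outputs of $\mathcal{A}$ supply the inner existential witness. Your edge-existential observation and your (2)$\Rightarrow$(1) simulation survive essentially unchanged in this two-level setting, but without the half-step the construction does not go through.
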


	If $\Pi$ has an exact deterministic time complexity $T(n, \Delta) \leq t$ on graphs with $n$ nodes, maximum node degree $\Delta$, and girth at least $2t+2$, then applying this speedup iteratively yields a sequence of problems $\Pi_1, \Pi_2, \dots$ with time complexities $T(n, \Delta)-1, T(n, \Delta)-2, \dots$.
	Now determining the time complexity of just a single $\Pi_i$ in the sequence will automatically determine the time complexities of all other $\Pi_j$, and, most importantly, of our initial problem $\Pi$.
	We will give a detailed explanation how to apply this technique to infer bounds for the complexity of a problem in Section \ref{sec:apply}, but let us first consider a concrete application.
	We remark that, apart from the following application, our speedup theorem also semi-automatically reproduces previously known techniques, such as the sinkless orientation speedup \cite{Brandtlll} and color reduction on rings (leading to the $O(\log^* n)$ upper bound for $3$-coloring a ring \cite{DBLP:conf/stoc/ColeV86, DBLP:conf/stoc/GoldbergPS87}), as we will see in Sections \ref{sec:sinkless} and\ \ref{sec:colred}.
	Moreover, building on our speedup technique, Balliu et al.\ \cite{DBLP:journals/corr/abs-1901-02441} very recently proved new lower bounds for maximal matching and maximal independent set, in both the deterministic and the randomized setting.

	\paragraph{Odd-Degree Weak $2$-Coloring}
	Weak $k$-coloring is the problem of coloring the nodes of a given input graph with $k$ colors such that each node (of degree at least $1$) has \emph{at least one} neighbor with a different color.
	In their seminal work \cite{DBLP:conf/stoc/NaorS93}, Naor and Stockmeyer proved that in graphs where each node has odd degree, a weak $2$-coloring can be found in time $O(\logstar \Delta)$.
	As one of their three open questions they asked whether this bound can be improved.
	While the question seemed simple enough, surprisingly no progress has been made over the past 25 years, indicating that the available lower bound techniques might not be sufficient for showing that the upper bound is tight.
	Using our speedup results, we close this gap by showing the following theorem.
	\setcounter{pretheorem}{3}
	\begin{pretheorem}\label{prethm:lsd}
		There is no $o(\logstar \Delta)$-time algorithm solving weak $2$-coloring in odd-degree graphs.
	\end{pretheorem}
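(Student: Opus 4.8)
The plan is to feed weak $2$-coloring into the automatic speedup of Theorem~\ref{prethm:speedup}, iterate it, and then show that the resulting chain of problems needs $\Omega(\logstar\Delta)$ steps before it can possibly become solvable in $0$ rounds. Setup: for every odd $\Delta$ and every $g$ there exist $\Delta$-regular graphs of girth at least $g$, and a lower bound on this subfamily implies Theorem~\ref{prethm:lsd}; moreover a $0$-round algorithm cannot solve weak $2$-coloring, since on a vertex-transitive instance every node would output the same label, which is a forbidden configuration. Encode weak $2$-coloring as a locally checkable problem $\Pi=\Pi_0$ with output labels $\{1,2\}$ whose only forbidden node configuration is the monochromatic one, so that Theorem~\ref{prethm:speedup} applies. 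Assume towards a contradiction that some algorithm solves $\Pi$ on odd-degree graphs in $t=t(\Delta)$ rounds with $t(\Delta)=o(\logstar\Delta)$. Fix a large odd $\Delta$, restrict to $\Delta$-regular graphs of girth at least $2t+2$, and apply Theorem~\ref{prethm:speedup} $t$ times, obtaining problems $\Pi_0,\Pi_1,\dots,\Pi_t$ with $\Pi_i$ solvable in $t-i$ rounds on these graphs (the girth hypothesis is met at every step since $2t+2\ge 2(t-i)+2$); in particular $\Pi_t$ is solvable in $0$ rounds.

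The core of the proof is the claim that $\Pi_t$ cannot be $0$-round solvable unless $\Delta$ is at most a tower of exponentials of height $O(t)$. I would establish this by induction on $i$: for an absolute constant $c$ there is a threshold $\delta_i\le \mathsf{tower}(c\cdot i)$ (a height-$ci$ tower of $2$'s) such that $\Pi_i$ admits no $0$-round algorithm on $\Delta$-regular odd-degree graphs with $\Delta>\delta_i$. The base case $i=0$ is the observation above ($\delta_0=0$). For the inductive step I would unfold the explicit description of the speedup applied to $\Pi_{i-1}$ --- its label set being assembled from (sets of) configurations over $\Pi_{i-1}$'s labels --- and show that any $0$-round algorithm for $\Pi_i$ on $\Delta$-regular graphs can be decoded into a $0$-round algorithm for $\Pi_{i-1}$ on $\Delta'$-regular graphs for some $\Delta'\gtrsim\log\Delta$: morally, a single $\Pi_i$-label already records the outcome over all of a node's $\Delta$ half-edges, so fixing it amortizes away an exponential amount of degree. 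Since $\Delta>\delta_i=\mathsf{tower}(ci)$ forces $\Delta'\gtrsim\log\Delta>\mathsf{tower}(c(i-1))=\delta_{i-1}$ (for $c$ chosen large enough), this contradicts the inductive hypothesis and proves the claim. Consequently, $\Pi_t$ being $0$-round solvable on all sufficiently large $\Delta$-regular odd-degree graphs of girth $2t+2$ forces $\Delta\le\mathsf{tower}(ct)$, i.e.\ $\logstar\Delta\le ct$, contradicting $t(\Delta)=o(\logstar\Delta)$. Finally I would transfer the bound to the randomized and deterministic $\LOCAL$ models via the lifting machinery of Section~\ref{sec:lift}.

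The main obstacle is precisely the inductive step: turning a $0$-round solution of $\Pi_i$ into a $0$-round solution of $\Pi_{i-1}$ at exponentially smaller degree, together with the bookkeeping that the effective degree loses (at least) a logarithm per speedup while the non-triviality threshold climbs only one tower level. This requires a hands-on understanding of what the generic transformation of Theorem~\ref{prethm:speedup} actually produces when applied to weak $2$-coloring, which is far more concrete than the black-box equivalence the theorem states. A secondary point to handle with care is the odd-degree hypothesis of Theorem~\ref{prethm:lsd}: it is exactly what makes the $O(\logstar\Delta)$ upper bound of Naor and Stockmeyer \cite{DBLP:conf/stoc/NaorS93} possible, so the bound is tight precisely in this regime, and one should make sure that the instances invoked throughout the induction --- odd-degree $\Delta$-regular graphs of large girth, which exist for every odd $\Delta$ --- suffice at each step.
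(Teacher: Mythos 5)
The overall skeleton (iterate the speedup $t$ times, show the resulting problem is still not $0$-round solvable unless $t=\Omega(\logstar\Delta)$, handle IDs via order-invariance) matches the paper, but the core of your argument --- the inductive step --- rests on a mechanism that does not exist and is not how the paper proceeds. You claim that a $0$-round algorithm for $\Pi_i$ on $\Delta$-regular graphs can be ``decoded'' into a $0$-round algorithm for $\Pi_{i-1}$ on $\Delta'$-regular graphs with $\Delta'\gtrsim\log\Delta$. Nothing in the speedup supports this: the reverse direction of Theorem~\ref{thm:speedup} converts a $(t-1)$-round algorithm for $\Pi_1$ into a $t$-round algorithm for $\Pi$ \emph{on the same graph class, at the same degree} --- it trades rounds, not degree. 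Applying it to a $0$-round algorithm for $\Pi_t$ merely recovers the $t$-round algorithm for $\Pi_0$ you assumed, which is no contradiction. The intuition that ``a single $\Pi_i$-label records the outcome over all $\Delta$ half-edges, so fixing it amortizes away an exponential amount of degree'' conflates the growth of the \emph{label set} (which does explode doubly exponentially per step) with a reduction in the \emph{degree} of the instances, and I see no way to make that trade rigorous.

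What the paper actually does is keep $\Delta$ fixed throughout and track a completely different quantity: it relaxes $\Pi$ to a problem called superweak $k$-coloring (colors plus ``demanding'' and ``accepting'' pointers, with strictly more demanding than accepting pointers per node and at most $k$ accepting ones), and proves (Lemmas~\ref{lem:infty}--\ref{lem:supersuper}, using a structural analysis of $h_1(\Delta)$ and Hall's marriage theorem) that one speedup step turns superweak $k$-coloring into a problem that relaxes to superweak $k'$-coloring with $k'=2^{2^{5^k}}$. The parameter that climbs one tower level per round is thus $k$, not the non-triviality threshold on $\Delta$; the contradiction comes from showing that superweak $k^*$-coloring with $k^*\le\log\Delta$ is not $0$-round solvable (via a pigeonhole argument on IDs and an adversarial port numbering --- note also that your base-case argument ``every node outputs the same label on a vertex-transitive instance'' is too quick once orientations, ports, and IDs are present). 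Finding the right relaxation whose parameter degrades controllably under the speedup is precisely the missing technical content, and your proposal does not supply a substitute for it.
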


	Given the speedup framework, the main technical ingredient in our proof is a generalization of weak $2$-coloring to a problem we call \emph{superweak $k$-coloring} that has the following two nice properties:
	\begin{enumerate}
		\item If we set $\Pi$ to be superweak $k$-coloring, then the problem $\Pi_1$ obtained by applying our speedup is at least as hard as superweak $k'$-coloring, for some $k' > k$.
		\item Relaxing $\Pi_1$ to superweak $k'$-coloring (and then continuing to apply the speedup technique) is sufficiently \emph{tight}, in the sense that we essentially still need $\Omega(\logstar \Delta)$ speedup steps until we obtain a $0$-round solvable problem. 
	\end{enumerate}
	More concretely, a rough proof outline goes as follows.
	Relax weak $2$-coloring to superweak $2$-coloring, apply our speedup, relax the obtained problem to superweak $k$-coloring for some $k>2$, apply our speedup, relax to superweak $k'$-coloring for some $k'>k$, etc.
	Then, show that any problem obtained after $o(\logstar \Delta)$ steps of speedup and relaxation is still not solvable in $0$ rounds.
	Our speedup results then immediately imply that there is no $o(\logstar \Delta)$-algorithm for weak $2$-coloring.

\subsection{Related Work}\label{sec:related}

	\paragraph{Follow-up Work}
	In a recent breakthrough, Balliu et al.\ \cite{DBLP:journals/corr/abs-1901-02441} used our speedup technique to show that, both for maximal matching and maximal independent set, there is no randomized algorithm with runtime $o(\Delta + \log \log n / \log \log \log n)$ and no deterministic algorithm with runtime $o(\Delta + \log n / \log \log n)$.
	As documented in \cite[Section 3.7]{DBLP:journals/corr/abs-1901-02441}, apart from the speedup provided by Theorem \ref{prethm:speedup}, the authors also apply both of our simplification techniques (see Section \ref{sec:apply}) to achieve their lower bounds.

	\paragraph{Further Related Work}
	The first occurrence of the round elimination speedup technique we extend to any locally checkable problem was seen in \cite{Brandtlll}, where the authors show that such a speedup works for the problem of ($\Delta$-regular) sinkless orientation, resulting in a randomized lower bound of $\Omega(\log \log n)$ that also applies to the (constructive symmetric) distributed Lov\'asz Local Lemma (LLL) and $\Delta$-coloring.
	In \cite{chang16separation}, Chang et al.\ proved that this result can be extended to a deterministic lower bound of $\Omega(\log_{\Delta} n)$, which is tight (for sinkless orientation) due to a matching upper bound by Ghaffari and Su \cite{DBLP:conf/soda/GhaffariS17}, who also gave a matching upper bound for the deterministic case.
	Chang and Pettie \cite{DBLP:conf/focs/ChangP17} conjectured that the (randomized) lower bound for the distributed LLL is tight for sufficiently relaxed LLL criteria; despite a recent improvement of the upper bound by Ghaffari et al.\ \cite{DBLP:conf/focs/GhaffariHK18}, this conjecture is still open.
	In \cite{DBLP:conf/soda/ChangHLPU18}, Chang et al.\ simplified the randomized speedup technique of \cite{Brandtlll} and showed that the sinkless orientation lower bounds also imply an $\Omega(\log_{\Delta} n)$ deterministic and an $\Omega(\log_{\Delta} \log n)$ randomized lower bound for $(2\Delta - 2)$-edge coloring.

	Weak $k$-coloring on odd-degree graphs was introduced by Naor and Stockmeyer \cite{DBLP:conf/stoc/NaorS93} as an example of a non-trivial problem that can be solved in constant time \emph{on graphs of bounded degree}.
	As they show, the odd-degree condition is a necessary requirement; there are graph classes where nodes are allowed to have even degrees for which no constant-time weak coloring algorithm exists.
	Very recently, Balliu et al.\ \cite{DBLP:journals/corr/abs-1811-01643} refined our knowledge in this regard by proving a tight lower bound of $\Omega(\log^* n)$ for weak $2$-coloring on regular trees.
	While, from a theory perspective, weak coloring is interesting as ``a problem with minimal symmetry breaking requirements" \cite[p.139]{DBLP:conf/spaa/Kuhn09}, a more concrete application exists in the form of certain resource allocation problems \cite{DBLP:conf/stoc/NaorS93}.
	Naor and Stockmeyer provided an $O(\logstar \Delta)$-algorithm for odd-degree weak $2$-coloring, which was subsequently simplified and adapted to the dynamic setting by Mayer et al.\ \cite{DBLP:conf/istcs/MayerNS95}.
	While progress has been made for other relaxations of the standard node coloring problem, such as defective \cite{DBLP:conf/spaa/Kuhn09, DBLP:conf/stoc/BarenboimE09} or arbdefective \cite{DBLP:conf/podc/BarenboimE10} coloring, the question by Naor and Stockmeyer whether their bound can be improved has remained open until now.

	Naor and Stockmeyer's work also initiated, together with \cite{DBLP:journals/siamcomp/Linial92}, the line of research on local algorithms, and introduced the concept of \emph{locally checkable labeling} (LCL) problems.
	This class of problems has been subject to many investigations, resulting in an almost complete understanding of the respective complexity landscape very recently \cite{DBLP:conf/wdag/BalliuBOS18, DBLP:conf/stoc/BalliuHKLOS18, Brandtlll, DBLP:conf/podc/BrandtHKLOPRSU17, chang16separation, DBLP:conf/focs/ChangP17, DBLP:conf/wdag/FischerG17, DBLP:conf/focs/GhaffariHK18}.
	We remark that, while our speedup applies to LCL problems, it is not restricted to them: in particular, we do not require the considered graphs to be of constant degree.

\section{Technical Overview}
	In this section, we will outline how to apply the speedup technique to obtain new bounds, including further helpful techniques for the application of the speedup.
	Moreover, we will give an overview of the available techniques to extend bounds achieved from our speedup (in our model) to the LOCAL model and to randomized complexities.

\subsection{How to Apply the Speedup Results}\label{sec:apply}
	The most natural application of our speedup technique is to prove a lower bound for some given problem $\Pi$.
	The roadmap is as follows.
	Starting with $\Pi$, we apply our speedup theorem iteratively, resulting in a problem sequence $\Pi, \Pi_1, \Pi_2, \dots$, where each problem can be solved exactly one round faster than the previous one.
	Given the mentioned conditions, our speedup works until we reach a problem that can be solved in $0$ rounds; hence, in theory, the only thing we have to do is to look at our sequence $\Pi, \Pi_1, \Pi_2, \dots$ and to determine which is the first problem in this sequence that is solvable in $0$ rounds (usually depending on our parameters $n$ and/or $\Delta$).
	If $\Pi_t$ is the first problem solvable in $0$ rounds, the problem $\Pi$ we are interested in has time complexity $t$.
	However, since Theorem \ref{prethm:speedup} requires girth at least $2t+2$, only the \emph{lower} bound of $t$ holds for general graphs.

	While $0$-round solvable problems have a simple characterization, there is a catch: in general, the description of an inferred problem $\Pi_i$ is much more complex than the description of the original problem.
	In fact, dealing with this explosion in complexity is one of the main challenges in applying our speedup.
	To this end, we provide two simplification techniques. 

	\paragraph{Relaxation}
	After inferring a new problem $\Pi_{i+1}$ from $\Pi_i$ via the speedup, we can try to find a relaxed version of $\Pi_{i+1}$ (i.e., a problem that is provably not harder than $\Pi_{i+1}$) that has a much simpler description, and use this problem as the starting point for the next speedup step.
	Alternating between relaxation and speedup, we continue this process until we reach a problem $\Pi_t$ that is solvable in $0$ rounds.
	Then $t$ is a \emph{lower bound} for the time complexity of our initial problem $\Pi$. 
	Of course, we can also stop before we reach a $0$-round solvable problem, and the respective index of the problem is also a lower bound.

	If, informally speaking, we relax the problems obtained after each speedup step too much, the lower bound we obtain in the end might be asymptotically worse than the correct (tight) bound, or no improvement on existing bounds at all, so finding the right relaxation is a challenging problem.
	Moreover, in order to avoid having to find ``good" relaxations for many very different problems, it is desirable to find relaxed problems that are \emph{similar} (perhaps with different parameters of some kind) to \emph{previous} problems in the problem sequence.
	The above outline captures exactly what we do in our lower bound for weak $2$-coloring.
	
	We remark that a dual version of the relaxation technique exists, where we obtain upper bounds on high-girth graphs by making problems harder instead of relaxing them.
	We will see a concrete example for this dual technique when we consider color reduction on rings as a special case of our speedup in Section\ \ref{sec:colred}.
	
	\paragraph{Description Simplification} As the second tool in our toolbox for managing the increasing description complexity, we provide a ``maximality constraint" that can be applied twice per speedup step in order to decrease the set of allowed outputs and thereby simplify the problem.
	Despite its simplicity, this technique can significantly reduce the description complexity of a problem, as we will see in Section\ \ref{sec:app2col}.
	As we show in Theorem\ \ref{thm:primespeedup}, this simplification comes at no cost, keeping the complexity of the problem under consideration as it is.

\subsection{How to Lift Bounds to the $\LOCAL$ Model.}\label{sec:lift}
	A requirement for our speedup result is that the class of input graphs satisfies a property that we call $t$-independence.
	Informally, a graph class is $t$-independent if the following holds: If any node\footnote{If we want to be a bit more precise, the same also has to hold for any edge, where for simplicity, we also consider edges as computational entities.} $v$ that has gathered all information in its radius-$(t-1)$ neighborhood extends its view by one hop along some edge, then the new information $v$ obtains does not affect what information $v$ might see if it extends its view by one hop along any other edge.
	In particular, if the nodes are equipped with globally unique identifiers, then $t$-independence does not hold: if a node sees some ID in the extended view along some edge, it knows that this ID cannot be in any of the extended views along the other edges, due to our girth condition.
	While almost every other kind of symmetry breaking information commonly used, such as node colorings, edge colorings, edge orientations, or combinations thereof, satisfy $t$-independence, extending bounds obtained by our speedup technique to the setting with unique IDs, i.e., the $\LOCAL$ model, requires additional techniques.
	Note that \emph{upper} bounds obtained by our technique immediately apply to the $\LOCAL$ model (as basically every other symmetry breaking information can be inferred from unique IDs), hence we will focus on lower bounds in the following.
	
	\paragraph{Method I: Randomization}
	As demonstrated in \cite{DBLP:journals/corr/abs-1901-02441, Brandtlll}, by explicitly incorporating error probabilities into the speedup steps, lower bounds in our setting can be lifted to the randomized $\LOCAL$ model (which essentially guarantees $t$-independence since no unique IDs are required).
	The obtained bounds are weaker than the original bounds from our setting, which is to be expected considering that allowing randomization can only lower the complexity of a problem.
	In a second step, the randomized bounds can then be lifted to the (deterministic) $\LOCAL$ model, by exploiting gaps in the complexity landscape of so-called LCL problems \cite{chang16separation}, or by explicitly showing that the existence of a deterministic algorithm of some complexity would imply the existence of a randomized algorithm that violates the randomized lower bound \cite{DBLP:journals/corr/abs-1901-02441}.
	The available evidence \cite{DBLP:journals/corr/abs-1901-02441, chang16separation} suggests that this detour via the randomized complexity does not weaken the deterministic bound: in both cases, the bound in our setting is identical to the bound in the $\LOCAL$ model.
	This is not too surprising since the uniqueness of IDs might simply not be enough to change the complexity of a problem (as compared to, say, a setting with non-unique IDs); however, a proof for this is not known and would be a valuable step forward. 

	\paragraph{Method II: Order-Invariant Algorithms}
	A second technique to lift bounds to the $\LOCAL$ model comes into play when we are interested in time complexities as a function of the maximum node degree $\Delta$ of the input graph.
	By a result of Naor and Stockmeyer \cite{DBLP:conf/stoc/NaorS93}, if there is a constant-time algorithm solving a locally checkable problem in the $\LOCAL$ model, then there is also an order-invariant algorithm with the same runtime, where order-invariant indicates that any node only uses the \emph{relative} IDs, i.e., the order of the IDs it sees, but not the actual ID values, in order to determine its output.
	Hence, if there is an algorithm solving a locally checkable problem with runtime independent of $n$, then we can essentially restrict attention to order-invariant algorithms.
	We provide an extension of our speedup result (Theorem\ \ref{thm:ordinv}) that shows that for order-invariant algorithms our speedup holds also in the case of unique IDs, i.e., in the $\LOCAL$ model.
	We will make use of this extension when we prove our lower bound for weak $2$-coloring (in the $\LOCAL$ model).

	\section{Preliminaries}\label{sec:prel}

	\paragraph{Graphs}
	All graphs we consider throughout the paper will be simple, undirected and connected.
	We denote the set of nodes of a graph $G$ by $V(G)$ and the set of edges by $E(G)$, and we set $n := |V(G)|$.
	Furthermore, we write $d(v)$ for the degree of a node $v$ and denote the maximum node degree of a graph by $\Delta$.
	A \emph{($\Delta$-)regular} graph is a graph where $d(v) = \Delta$ for all $v \in V(G)$.
	The \emph{girth} of a graph is the length of the smallest cycle.
	A \emph{matching} is a set $M \subseteq E(G)$ such that no two distinct edges from $M$ share an endpoint. 

	An important component in designing and proving our speedup is the idea to split the output of a node into parts that \emph{belong} to incident edges.
	As the basis for a convenient representation of such a split output, define $B(G)$ as the set of all pairs $(v, e)$ where $e \in E(G)$ and $v \in V(G)$ is an endpoint of $e$.
	Finally, for a graph class $\GG$, we denote the subclass of $\GG$ consisting of the contained graphs with $n$ nodes and maximum degree $\Delta$ by $\GG_{n, \Delta}$, for all non-negative integers $n, \Delta$.
	Similarly, the subclass of graphs with maximum degree $\Delta$ is denoted by $\GG_{\Delta}$.

	\paragraph{Input-Labeled Graphs}
	Commonly, locally checkable problems are defined by allowed configurations of output labels and a specification of the given inputs and the considered graph class.
	Definitionwise, we will strictly separate between the outputs on one side (which will define what we call a problem) and the inputs\footnote{Note that in this work, we focus on the common case of problems where the \emph{correctness} of the output does not depend on the given inputs. Adapting the speedup results to the case where output correctness depends on the inputs is not hard, but carries a significant technical overhead which would needlessly impair readability.} and the graph class on the other side (which will be given by input-labeled graphs).
	This enables us to give a very general definition of the setting in which our results are applicable; more importantly though, this separation caters to the fact that the sequence of problems we obtain by repeatedly speeding up a given problem is \emph{independent} of the considered inputs and the considered graph class.

	For the definition of input-labeled graphs, we will also use the set $B(G)$ defined above, which allows for a convenient way to encode, e.g., edge orientations.
	Furthermore, we will not restrict attention to graphs with bounded degree or to bounded input label sets; instead we will use the following more complicated, but also more general definition, which allows, e.g., to define graph classes of unbounded degree with an input edge coloring (which requires $\Omega(\Delta)$ labels), or unique IDs (which come from a set that is a function of $n$).
	
	Let $\Sigma$ be a (possibly infinite) set of input labels and $\iota : \NN^2 \rightarrow 2^{\Sigma}$ a function such that $\Sigma_{n, \Delta} :=\iota(n, \Delta)$ is a \emph{finite} subset\footnote{Note that, throughout the paper, we use the expression $2^S$ for the power set of set $S$, as opposed to the set of functions from $S$ to $\{ 0, 1\}$.} of $\Sigma$, for all $(n, \Delta) \in \NN^2$.
	A {$\Sigma$-input-labeled graph} is a pair $(G, \varphi_G)$, where $G$ is a graph and $\varphi_G$ is a function $\varphi_G : B(G) \rightarrow \Sigma_{n, \Delta}$, where $n$ and $\Delta$ are the number of nodes and the maximum degree of $G$, respectively.
	For simplicity, we will usually omit the function $\varphi_G$ and simply denote the $\Sigma$-input-labeled graph by $G$.
	We extend the notion of being $\Sigma$-input-labeled to graph classes and say that a graph class $\GG$ is $\Sigma$-input-labeled if each graph in $\GG$ is $\Sigma$-input-labeled.
	Throughout the paper, all considered graph classes are assumed to be input-labeled, if not stated otherwise.
	Furthermore, if we want to avoid that nodes have to be able to compute uncomputable functions during the distributed computation, we can require additionally that the function that maps each pair $(n, \Delta)$ to the graph class $\GG_{n, \Delta}$ (as well as any other function involved in specifying parts of a distributed problem) is computable.

	\paragraph{Problems}
	The speedup results we present apply to all locally checkable problems; however, formally, we will only consider problems where the validity of a global output essentially\footnote{Due to our particular way of splitting the output of a node into partial outputs for each incident edge, our formal definition of a problem will contain acceptable configurations for both edges and nodes; a more precise term for edge-checkability would thus be node-and-edge-checkability.\label{nodeedgefoot}} can be checked on edges, i.e., there is a set of acceptable output configurations for the two endpoints of an edge, and the global output is correct if and only if the configuration for each edge is acceptable.
	Restriction to these problems does not lose generality for our purposes: by requiring that each node outputs the computed output labels (and the topology\footnote{Note that in general graphs where nodes do not have unique identifiers, the information a node obtains in $t$ rounds may not be enough to determine the exact topology of the subgraph induced by all nodes in distance at most $t$; however, since we will only consider radius-$t$ neighborhoods in graphs that have girth at least $2t+2$, the subgraph topology in each radius-$t$ neighborhood is a tree which implies that each node can determine the exact topology of its radius-$t$ neighborhood.\label{repfoot}}) of its whole radius-$r$ neighborhood for some suitably large constant $r$, any locally checkable problem can be transformed into an edge-checkable\footref{nodeedgefoot} problem with the same asymptotic time complexity.
	For the definition of (our restricted version of) a problem, we will need the notion of a \emph{multiset}, which is simply a set in which elements can have \emph{multiplicity} larger than $1$, but where, as usual, the order of elements does not matter.

	Formally, for the scope of this paper, a problem $\Pi$ is defined by
	\begin{enumerate}
		\item a (possibly infinite) set $\mathcal O$ of output labels,
		\item a function $f : \NN \rightarrow 2^{\mathcal O}$ such that $f(\Delta)$ is a \emph{finite} subset of $\mathcal O$, for all $\Delta \in \NN$,
		\item a function $g$ that maps each $\Delta \in \NN$ to a set of $2$-element multisets where both elements are taken from $f(\Delta)$, and
		\item a function $h$ that maps each $\Delta \in \NN$ to a set of multisets with at most $\Delta$ elements, all taken from $f(\Delta)$ .
	\end{enumerate}
	
	\noindent The specification of $f$ ensures that problems that require the set of output labels to depend\footnote{Similarly to the case of unique IDs as input labels, one could also allow for problems where the set of output labels depends on $n$. However, we are not aware of commonly studied problems of this kind, which is why we chose the simpler definition that only allows a dependence on $\Delta$.} on $\Delta$, such as $(\Delta + 1)$-coloring, are included in our problem definition.
	The sets $g(\Delta)$ and $h(\Delta)$ formalize which output configurations are allowed on an edge $e=\{u, v\}$ (i.e., at $(u,e) \in B(G)$ and $(v,e) \in B(G)$), resp.\ at a node $v$ (i.e., at $(v, e_1), \dots, (v, e_{d(v)})$, where $e_1, \dots, e_{d(v)}$ are the edges incident to $v$).
	For instance, the problem of $(\Delta + 1)$-coloring can be described by setting $\mathcal O := \NN^+$, $f(\Delta) := \{ 1, \dots, \Delta\}$, $g(\Delta) := \{ \{ c_1, c_2 \} \mid c_1, c_2 \in f(\Delta), c_1 \neq c_2 \}$, $h(\Delta) := \{ \{ c_1, \dots, c_i \} \mid 0 \leq i \leq \Delta, c_1 = \dots = c_i \}$.

	Combining problems and input-labeled graphs, we define a \emph{realized problem} as a pair $(\Pi, \GG)$, where $\Pi$ is a problem and $\GG$ a $\Sigma$-input-labeled graph class.
	For convenience, we may simply use the term ``problem" for $(\Pi, \GG)$.
	We say that an algorithm $\AAA$ solves a realized problem $(\Pi, \GG)$ (or, equivalently, that $\AAA$ solves $\Pi$ on $\GG$) if, for any $\Delta$ and any graph $G \in \GG_{\Delta}$, $\AAA$ assigns an output $o_{v,e} \in f(\Delta)$ to each pair $(v,e) \in B(G)$ such that, for each edge $e = \{u, v\} \in E(G)$, the multiset $\{o_{u,e}, o_{v,e}\}$ is contained in $g(\Delta)$, and for each node $v \in V(G)$, the multiset $\{ o_{v, e_1}, \dots, o_{v,e_{d(v)}} \}$ is contained in $h(\Delta)$.

	\paragraph{Model}
	Since we defined inputs to be part of the considered graph class (and hence assigned the duty of providing sufficient symmetry-breaking information to the choice of the graph class), we can use a very weak model of computation.
	This has the advantage that essentially all problems that are defined in a stronger model, such as the $\LOCAL$ model, can also be formulated in our model.
	The only requirement that we need is that nodes are able to distinguish between their neighbors (or incident edges), which is why we formally choose the port numbering model \cite{DBLP:conf/stoc/Angluin80} as our model of computation.
	
	In the port numbering model, each node $v$ of the input graph $G \in \GG$ has $d(v)$ many ports $1, \dots, d(v)$ which correspond to the edges incident to $v$; the two endpoints of an edge may have different ports corresponding to the connecting edge.
	Each node can communicate with its neighbors by sending messages along the connecting edges.
	Computation proceeds in synchronous rounds where in each round each node first sends arbitrarily large messages to its neighbors and then, upon receiving the messages sent by its neighbors, performs some arbitrarily complex local computation.
	Each node executes the same algorithm and has to terminate at some point, upon which it outputs its local part of the global solution (e.g., if the task is to find a proper node coloring, each node outputs its own color).
	The runtime of such a distributed algorithm is the number of rounds until the last node terminates.
	In the beginning of the computation, each node $v$ is aware of the parameters $n$ and $\Delta$ and sees the input label assigned to $(v,e) \in B(G)$ for all incident edges $e$, i.e., one input label per port.
	When terminating, a node $v$ assigns an output label from $f(\Delta)$ to each $(v,e)$.
	We are interested in the \emph{worst-case} runtime of a distributed algorithm, i.e., for worst-case input-labeled graphs with worst-case assignments of port numbers to edges.

	A distributed algorithm is correct if the output labels assigned to the elements in $B(G)$ satisfy the constraints encoded in $g(\Delta)$ and $h(\Delta)$ at each edge, resp.\ node.
	More precisely, we say that a distributed algorithm solves a realized problem $(\Pi, \GG)$ if the above is true for each graph $G \in \GG$ (where $f(\Delta), g(\Delta), h(\Delta)$ are defined by $\Pi$).
	In this work, we formally only consider deterministic algorithms; the implications for randomized algorithms have been discussed in Section\ \ref{sec:lift}.
	
	It is well-known that the best a node can do in $t$ rounds of communication is to gather the whole input information contained in its radius-$t$ neighborhood (as well as the topology\footref{repfoot}), and then decide on its output using only the collected information.
	Hence, a $T(n, \Delta)$-round algorithm can be equivalently described as a function that maps each possible radius-$T(n, \Delta)$ neighborhood of a node $v$ to a tuple of $d(v)$ output labels (one for each port of $v$, i.e., each $(v,e) \in B(G)$).
	Note that in this description the degree of $v$ is only fixed when a radius-$T(n, \Delta)$ neighborhood has been chosen.

	\paragraph{Neighborhoods}
	Let $G$ be a $\Sigma$-input-labeled graph.
	For any node $v \in V(G)$ and any non-negative integer $t$, we define the \emph{radius-$t$ neighborhood} $N^t(v)$ of $v$ (in $G$) as the collection of information that $v$ can obtain in $t$ rounds, i.e., the topology of the subgraph of $G$ induced by the set of all nodes in distance at most $t$ from $v$, together with the input information each of these nodes possesses at the very beginning.\footref{repfoot}
	Similarly, for any edge $e = \{ u, v \}$, we define the \emph{radius-$t$ neighborhood} $N^t(e)$ of $e$ (in $G$) as the collection of information that \emph{both} $u$ and $v$ can obtain in $t$ rounds, i.e., the topology of the subgraph of $G$ induced by the set of all nodes in distance at most $t$ from both $u$ and $v$, together with the respective input information.
	For convenience, we may occasionally forget about assigned labels or edges and consider $N^t(v)$ or $N^t(e)$ as unlabeled graphs or sets of nodes.
	We write $N^t_G(v)$ and $N^t_G(e)$ if we want to specify the underlying graph $G$.
	We say that the radius-$t$ neighborhood of a node $v$ in graph $G$ and the radius-$t$ neighborhood of a node $u$ in graph $H$ are \emph{isomorphic} and write $N^t_G(v) \cong N^t_H(u)$ if the collection of information $v$ can obtain in $t$ rounds in $G$ is the same as the collection of information $u$ can obtain in $t$ rounds in $H$.
	We define isomorphisms for radius-$t$ edge neighborhoods analogously.

	For simplicity, we will abuse notation, and use set operations to describe further collections of information, in the canonical way.
	For instance, we write $N^t(e) = N^t(u) \cap N^t(v)$.
	In particular, we are interested in ``extensions" of such information collections, e.g., the information a node $v$ can obtain in $t$ rounds that a neighbor $u$ cannot obtain in $t$ rounds, or the information two neighboring nodes $u$ and $v$ can both obtain in $t$ rounds that $v$ cannot obtain in $t-1$ rounds.
	Hence, for any node $v$, any edge $e = \{ u, v \}$, and any positive integer $t$, we define $\ext^t_e(v) := N^t(e) \setminus N^{t-1}(v)$ and $\ext^t_v(e) :=  N^{t}(v) \setminus N^{t}(e)$.
	Again, we write $\ext^t_{e,G}(v)$ and $\ext^t_{v,G}(e)$ to specify the underlying graph.

	Consider some radius-$t$ neighborhood $N^t(e)$ of an edge $e = \{ u, v \} \in E(G)$, and let $\GG$ be a graph class containing $G$.
	We say that $N^t(e)$ \emph{has an extension $\ext^t_v(e)$ in $\GG$ (along $v$)} if there is a graph $H \in \GG$ such that $N^t(e) \cup \ext^t_v(e) \cong N^t_H(w)$, where $w$ is some node of $H$.
	For convenience, we will identify $v$ and $w$ (and other nodes and edges in isomorphic neighborhoods), allowing us, e.g., to specify the output label of $(v,e)$ in graph $H$.
	Analogously, we say that $N^{t-1}(v)$ \emph{has an extension $\ext^t_e(v)$ in $\GG$ (along $e$)} if there is a graph $H \in \GG$ such that $N^{t-1}(v) \cup \ext^t_e(v) \cong N^t_H(e')$, where $e'$ is some edge of $H$, and we will identify nodes (and edges) similarly as above.

	\paragraph{$t$-Independence}
	As mentioned in Section\ \ref{sec:lift}, our speedup results require a property called $t$-independence.
	This property essentially only exists in graph classes with sufficiently high girth which is the reason (together with the difficulty of determining neighborhood topologies if the girth is too small) why we restrict attention to these classes of graphs.
	Roughly speaking, $t$-independence is satisfied if for any fixed radius-$(t-1)$ neighborhood of a node $v$, the set of extensions along one incident edge $e$ is independent of the sets of extensions along the other incident edges (i.e., fixing an extension along one edge does not influence which extensions are possible along the other edges), and if a similar statment holds for edge neighborhoods.
	See Figure \ref{fig:indep} for an illustration.
	\begin{figure}
		\centering
		\includegraphics[scale=0.8]{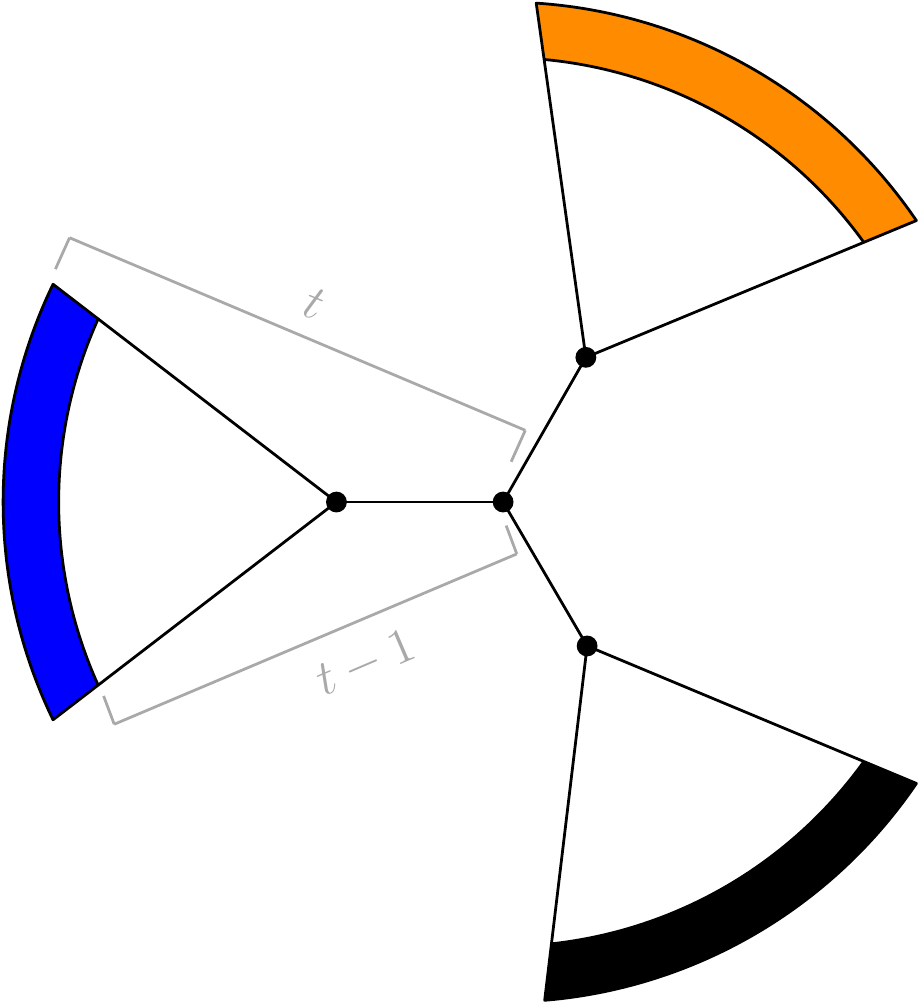}
		\caption{One of the two requirements for $t$-independence is that, for any node $v$, the sets of extensions of $N^{t-1}(v)$ along the different incident edges are independent of each other. In other words, for the depicted node of degree $3$, if we fix one of the possibilities for the topology and the input labels in the blue area, then this does not change which topologies and input label combinations are possible in the orange and the black area, and vice versa.}
		\label{fig:indep}
	\end{figure}
	Formally, we define $t$-independence as follows.

	Let $\GG_{n, \Delta}$ be a $\Sigma$-input-labeled graph class consisting of graphs with $n$ nodes and maximum degree $\Delta$ and let $t$ be a positive integer.
	We say that $\GG_{n, \Delta}$ is \emph{$t$-independent} if for any graph $G \in \GG_{n, \Delta}$, any node $v \in V(G)$, and any edge $e = \{ u, v \} \in E(G)$, the following two properties are satisfied:
	\begin{enumerate}
		\item Set $\extra^t_v(e) := \{ \ext^t_{w,H}(e') \mid H \in \GG_{n, \Delta}, e'\in E(H), \textrm{ and } H, e' \textrm{ satisfy } N^t_H(e') \cong N^t_G(e) \textrm{ where} \\ \textrm{$v$ and $w$ } \textrm{are corresponding nodes under this isomorphism} \}$, and define $\extra^t_u(e)$ analogously. Then, for each element $X \in \extra^t_v(e)$ and each element $X' \in \extra^t_u(e)$, there is a graph $H \in \GG_{n, \Delta}$ and an edge $e'\in E(H)$ such that $N^t_H(e') \cong N^t_G(e)$, $\ext^t_{w,H}(e') \cong X$, and $\ext^t_{x,H}(e') \cong X'$, where, under the given isomorphism, $w$ corresponds to $v$, and $x$ to $u$.
		\item For each edge $e' \in E(G)$ incident to $v$, set $\extra^t_{e'}(v) := \{ \ext^t_{e'',H}(w) \mid H \in \GG_{n, \Delta}, w \in V(H), \textrm{ and } H, w \textrm{ satisfy } N^{t-1}_H(w) \cong N^{t-1}_G(v) \textrm{ where $e'$,$e''$ are corresponding edges under this}$ $\textrm{isomorphism} \}$. Then, for each indexed family $(X_{e'})_{e' \in E(G):v \in e'}$ with $X_{e'} \in \extra^t_{e'}(v)$, there is a graph $H \in \GG_{n, \Delta}$ and a node $w \in V(H)$ such that $N^{t-1}_H(w) \cong N^{t-1}_G(v)$ and, for all $e'$ incident to v, $\ext^t_{e'',H}(w) \cong X_{e'}$, where, under the given isomorphism, $e''$ corresponds to $e'$.
	\end{enumerate}
	While this exact definition of $t$-independence is cumbersome, the intuition behind it makes it straightforward to check that $t$-independence is satisfied for the usual symmetry breaking inputs that do not include unique IDs, such as node colorings, edge colorings\footnote{We assume all symmetry breaking inputs to be given in the natural way, i.e., input for a node $v$ to be encoded at all $(v,e')\in B(G)$, and input for an edge $e$ to be encoded at all $(u,e)\in B(G)$. In particular, in a $0$-round algorithm each node is aware of the colors and orientations of all incident edges (otherwise $t$-independence might not be satisfied).}, edge orientations, or combinations thereof (in common graph classes with graphs of girth at least $2t+2$).

	If $\GG_{n, \Delta}$ is $t'$-independent for each $1 \leq t' \leq t$, then we say that $\GG_{n, \Delta}$ is \emph{$(\leq t)$-independent}.
	If $\GG$ is a graph class such that, for each non-negative $n, \Delta$, $\GG_{n, \Delta}$ is $t$-independent (resp.\ $(\leq t)$-independent), then we say that $\GG$ is $t$-independent (resp.\ $(\leq t)$-independent).
	Note that if $\GG_{n, \Delta}$ is empty, it is trivially $t$-independent for any positive $t$.

	\section{The Speedup Theorem}\label{sec:speedup}
	In this section, we will present our speedup results and apply them to several problems.
	We start by defining our automatic speedup that produces a sequence of problems with decreasing runtimes, and subsequently prove our main theorem (Section\ \ref{sec:thethe}).
	In Section\ \ref{sec:maxisim}, we introduce an important simplification technique that reduces the complexity of the descriptions of the problems in our sequence (by transforming the problems) and show that this technique is compatible with the main theorem.
	Then we prove that the main theorem can be extended to settings with unique IDs if we restrict attention to order-invariant algorithms (Section\ \ref{sec:oia}).
	In Sections\ \ref{sec:sinkless} and \ref{sec:colred}, we will see how to obtain two known results by applying our speedup, confirming the viability and generality of the speedup technique.
	Finally, in Section\ \ref{sec:app2col}, we will examine the effect of our speedup on weak $2$-coloring, giving some intuition for the generalization of weak $2$-coloring that is essential for our lower bound proof in Section\ \ref{sec:weak}. 

	\subsection{The Theorem}\label{sec:thethe}
	In the following, we describe how to transform a given problem $\Pi$ into a problem $\Pi_1$ that can be solved one round faster.
	Our speedup consists of two steps: First we transform $\Pi$ into a problem $\Pi_{1/2}$ that, in some sense, can be solved half a round faster, then we transform $\Pi_{1/2}$ into $\Pi_1$.
	Recall that a $t$-round distributed algorithm (for $\Pi$) is nothing else than a function that maps each possible radius-$t$ neighborhood $N^t(v)$ to a tuple of outputs for $v$.
	By saying that $\Pi_{1/2}$ can be solved half a round faster, we mean that there is a distributed algorithm for $\Pi_{1/2}$ where each node $v$ looks only at a neighborhood that is smaller than its radius-$t$ neighborhood $N^t(v)$, but larger than its radius-$(t-1)$ neighborhood $N^{t-1}(v)$.
	More precisely for deciding on the output at $(v,e)$, node $v$ looks only at the radius-$t$ neighborhood $N^t(e)$ of $e$.
	An alternative way to look at this is to consider the \emph{edges} of the input graph as the computing entities in an algorithm for $\Pi_{1/2}$; each edge $e = \{u, v\}$ then decides on the outputs at $(u,e)$ and $(v,e)$.
	This highlights the inherent \emph{duality} of the two steps in our speedup: both steps are essentially the same with the difference that the role of nodes and edges is swapped. 

	\paragraph{Deriving Problems}
	Let $\Pi$ be a problem, and let $\OO$ be the set of output labels and $f, g, h$ the functions used to define $\Pi$.
	We define the problem $\Pi_{1/2}$ by specifying the set $\OO_{1/2}$ of output labels and the three required functions $f_{1/2}$, $g_{1/2}$, and $h_{1/2}$ as follows.
	
	We set $\OO_{1/2} := 2^{\OO}$ and $f_{1/2}(\Delta) := 2^{f(\Delta)}$, and we define $g_{1/2}(\Delta)$ as the set of all multisets $\{ Y, Z \}$, where $Y,Z \in f_{1/2}(\Delta)$, with the following property:
	\begin{enumerate}
		\item\label{prop:match} For any $y \in Y$, $z \in Z$, the multiset $\{y,z\}$ is contained in $g(\Delta)$.
	\end{enumerate}
	We define $h_{1/2}(\Delta)$ as the set of all multisets $\{ Y_1, \dots, Y_i \}$, where $i \leq \Delta$ and $Y_1, \dots, Y_i \in f_{1/2}(\Delta)$, with the following property:
	\begin{enumerate}[resume]
		\item\label{prop:eych} There exist elements $y_1 \in Y_1, \dots, y_i \in Y_i$ such that the multiset $\{y_1, \dots, y_i\}$ is contained in $h(\Delta)$.
	\end{enumerate}

	\noindent Since $f(\Delta)$ is finite for all $\Delta \in \NN$, also $f_{1/2}(\Delta)$ is finite.
	Hence, $\Pi_{1/2}$ indeed satisfies the definition of a problem.
	Similarly to how we derived $\Pi_{1/2}$ from $\Pi$, we will now derive $\Pi_1$ from $\Pi_{1/2}$.
	More precisely, $\Pi_1$ is defined as follows.

	We set $\OO_1 := 2^{\OO_{1/2}}$ and $f_1(\Delta) := 2^{f_{1/2}(\Delta)}$, and we define $g_1(\Delta)$ as the set of all multisets $\{ Y, Z \}$, where $Y,Z \in f_1(\Delta)$, with the following property:
	\begin{enumerate}[resume]
		\item\label{prop:gee} There exist elements $y \in Y, z \in Z$ such that the multiset $\{ y, z\}$ is contained in $g_{1/2}(\Delta)$.
	\end{enumerate}
	We define $h_1(\Delta)$ as the set of all multisets $\{ Y_1, \dots, Y_i \}$, where $i \leq \Delta$ and $Y_1, \dots, Y_i \in f_1(\Delta)$, with the following property:
	\begin{enumerate}[resume]
		\item\label{prop:matchtwo} For any $y_1 \in Y_1, \dots, y_i \in Y_i$, the multiset $\{y_1, \dots, y_i\}$ is contained in $h_{1/2}(\Delta)$.
	\end{enumerate}

	\noindent With the same reasoning as for $\Pi_{1/2}$, we see that $\Pi_1$ satisfies the definition of a problem.
	For an illustration of the definitions, we refer to the concrete examples in Sections\ \ref{sec:sinkless}--\ref{sec:app2col}.
	In the same way as we derived $\Pi_{1/2}$ and $\Pi_1$ from $\Pi$, we can derive problems $\Pi_{3/2}$ and $\Pi_2$ from $\Pi_1$.
	In general, set $\Pi_0 := \Pi$, and for any positive integer $k$, define recursively $\Pi_{k+1/2} := (\Pi_k)_{1/2}$ and $\Pi_{k+1} := (\Pi_k)_1$.
	
	The intuition behind the definition of $\Pi_{1/2}$ (and the proof of Theorem\ \ref{thm:speedup}) is that the restrictions for $g_{1/2}(\Delta)$ and $h_{1/2}(\Delta)$ given in Properties\ \ref{prop:match} and\ \ref{prop:eych} are \emph{just} weak enough that an algorithm that looks only at radius-$(t-1)$ edge neighborhoods can infer correct outputs by simulating an algorithm for $\Pi$ on the radius-$t$ neighborhoods obtained by extending the radius-$(t-1)$ edge neighborhoods (with all possible extensions).
	The tightness of the two properties makes it possible that, conversely, we can also infer a (half-round slower) algorithm for $\Pi$ from an algorithm for $\Pi_{1/2}$.
	A dual version of the above arguments holds for the relation between $\Pi_{1/2}$ and $\Pi_1$.
	In the following, we show how to derive the algorithms for $\Pi_{1/2}$ and $\Pi_1$ mentioned above from an algorithm for $\Pi$ in a black box manner (the converse direction will be part of the proof of Theorem\ \ref{thm:speedup}).

	\paragraph{Algorithm Speedup}
	Let $\AAA$ be an arbitrary algorithm that solves some problem $\Pi$ on some graph class $\GG$.
	Let $t_{n, \Delta}$ be the worst-case runtime of $\AAA$ on the graph class $\GG_{n, \Delta}$. 
	We define an algorithm $\AAA_{1/2}$ for $(\Pi_{1/2}, \GG)$ as follows.

	Consider an arbitrary graph $G \in \GG$, and let $n$ denote the number of nodes of $G$, and $\Delta$ the maximum degree.
	In the following, we describe how an arbitrary node $v \in V(G)$ executing $\AAA_{1/2}$ decides on the output label it will assign to $(v,e) \in B(G)$, where $e = \{ u, v \}$ is some arbitrary edge incident to $v$.
	Node $v$ starts executing Algorithm $\AAA_{1/2}$ by collecting the radius-$t_{n, \Delta}$ neighborhood of edge $e$.
	Note that $\AAA_{1/2}$ has access to $n$ and $\Delta$ and therefore can determine $t_{n, \Delta}$.
	Then, for each output label $o \in f(\Delta)$, node $v$ determines whether $N^{t_{n, \Delta}}(e)$ has an extension $\ext^t_v(e)$ in $\GG_{n, \Delta}$ such that $v$ would assign output $o$ to $(v,e)$ according to $\AAA$.
	Let $O^*$ be the set of all $o$ for which such an extension exists.
	In other words, $v$ simulates $\AAA$ on each radius-$t$ node neighborhood occurring in $\GG_{n, \Delta}$ that can be obtained by extending $N^{t_{n, \Delta}}(e)$ in direction of $v$ (seen from $e$), and collects in $O^*$ all output labels that $\AAA$ outputs in each such neighborhood at the node-edge pair corresponding to $(v,e)$.
	Finally, node $v$ outputs $O^* \in 2^{f(\Delta)}$ at $(v,e)$.
	This concludes the description of $\AAA_{1/2}$.
	We will argue about the correctness and the runtime of $\AAA_{1/2}$ (and $\AAA_1$) in the proof of Theorem\ \ref{thm:speedup}.

	Analogously, we define an algorithm $\AAA_1$ as follows, by deriving it from $\AAA_{1/2}$.
	Again, we focus on the output $v$ assigns to $(v, e) \in B(G)$ according to $\AAA_1$, where $G \in \GG_{n, \Delta}$.
	Similarly to before, $v$ starts executing Algorithm $\AAA_1$ by collecting the radius-$(t_{n, \Delta} - 1)$ neighborhood of $v$.
	Then, for each output label $o \in f_{1/2}(\Delta)$, node $v$ determines whether $N^{t_{n, \Delta} - 1}(v)$ has an extension $\ext^t_e(v)$ in $\GG_{n, \Delta}$ such that $v$ would assign output $o$ to $(v,e)$ according to $\AAA$.
	Finally, $v$ collects all $o$ for which such an extension exists in a set $O^* \in 2^{f_{1/2}(\Delta)}$, and outputs $O^*$ at $(v, e)$.
	Now we are set to prove our main speedup result.

	\begin{theorem}\label{thm:speedup}
		Consider some arbitrary problem $\Pi$ and the derived problem $\Pi_1$.
		Let $\GG_{n, \Delta}$ be a $\Sigma$-input-labeled graph class consisting of $n$-node graphs with maximum degree $\Delta$.
		Assume that $\GG_{n, \Delta}$ is $t$-independent for some positive integer $t$ and contains only graphs of girth at least $2t+2$.
		Then, the following two statements are equivalent:
		\begin{itemize}
			\item[(1)] There is an algorithm solving $(\Pi,\GG_{n, \Delta})$ in time $t$. 
			\item[(2)] There is an algorithm solving $(\Pi_1, \GG_{n, \Delta})$ in time $t-1$.
		\end{itemize}
	\end{theorem}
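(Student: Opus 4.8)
The plan is to prove the equivalence by establishing the two implications $(1)\Rightarrow(2)$ and $(2)\Rightarrow(1)$ separately, in each case via an explicit, black-box transformation of algorithms. The forward direction $(1)\Rightarrow(2)$ is handled by composing the two half-step constructions $\AAA\mapsto\AAA_{1/2}\mapsto\AAA_1$ described just before the theorem statement, so the real work is to verify correctness and runtime of each half-step. For $\AAA_{1/2}$: a node $v$ gathers $N^{t}(e)$ (one round less than $\AAA$'s $t$ rounds, but ``half a round more'' than $N^{t-1}(v)$), and outputs at $(v,e)$ the set $O^*$ of all labels $\AAA$ could produce there over all extensions of $N^t(e)$ along $v$ inside $\GG_{n,\Delta}$. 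To see edge-validity (Property~\ref{prop:match}), fix an edge $e=\{u,v\}$: the sets $O^*_u,O^*_v$ output at $(u,e),(v,e)$ satisfy that for any $y\in O^*_v$ and $z\in O^*_u$, there exist extensions realizing $y$ and $z$ respectively — and here $t$-independence (property~1 of the definition, applied at the edge $e$) is exactly what lets us \emph{combine} these two independent extensions into a single graph $H\in\GG_{n,\Delta}$ with $N^t_H(e')\cong N^t_G(e)$ realizing both; since $\AAA$ solves $\Pi$ on $H$, the pair $\{y,z\}\in g(\Delta)$. For node-validity (Property~\ref{prop:eych}) at $v$ with incident edges $e_1,\dots,e_i$: the actual run of $\AAA$ on (any extension of) $N^{t}(v)$ witnesses concrete labels $y_j$ assigned at $(v,e_j)$ with $\{y_1,\dots,y_i\}\in h(\Delta)$, and each $y_j\in O^*$ since that single extension is one of the extensions considered; so the ``there exist'' clause of Property~\ref{prop:eych} holds. (A subtle point: the extension of $N^{t-1}(v)$ to a full $N^t(v)$ used in this argument must lie in $\GG_{n,\Delta}$; this is guaranteed because $v$ itself lies in some $G\in\GG_{n,\Delta}$, so its own neighborhood is a valid extension.)

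The step $\AAA_{1/2}\mapsto\AAA_1$ is the formal dual: nodes and edges swap roles. Here $\AAA_1$ has node $v$ gather $N^{t-1}(v)$ and output at $(v,e)$ the set $O^*\in 2^{f_{1/2}(\Delta)}$ of all $\Pi_{1/2}$-labels that $\AAA_{1/2}$ could produce at $(v,e)$ over all extensions $\ext^t_e(v)$ of $N^{t-1}(v)$ along $e$ in $\GG_{n,\Delta}$. Edge-validity for $\Pi_1$ (Property~\ref{prop:gee}: \emph{exists} a compatible pair) follows from the actual run of $\AAA_{1/2}$ on $N^t(e)$, which produces a concrete pair $\{y,z\}\in g_{1/2}(\Delta)$ with $y\in O^*_v$, $z\in O^*_u$. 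Node-validity for $\Pi_1$ (Property~\ref{prop:matchtwo}: \emph{for all} choices) uses the second $t$-independence property at the node $v$: given arbitrary $y_j\in O^*_j$ for incident edges $e_1,\dots,e_i$, each $y_j$ is witnessed by some extension $X_j\in\extra^t_{e_j}(v)$, and $t$-independence produces a single $H\in\GG_{n,\Delta}$ with $N^{t-1}_H(w)\cong N^{t-1}_G(v)$ realizing all the $X_j$ simultaneously; running $\AAA_{1/2}$ on this $H$ gives outputs $y_1,\dots,y_i$ at $v$'s incident edges, which must form a multiset in $h_{1/2}(\Delta)$. The runtime bookkeeping: $\AAA_{1/2}$ uses $N^t(e)$, obtainable in $t-1$ rounds given girth $\ge 2t+2$ (so the neighborhood is a tree and topology is determined — footnote~\ref{repfoot}); then $\AAA_1$ uses $N^{t-1}(v)$, obtainable in $t-1$ rounds, giving a genuine $(t-1)$-round algorithm for $\Pi_1$.

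For the reverse direction $(2)\Rightarrow(1)$, I would again go through the half-integer intermediate: from a $(t-1)$-round algorithm $\BB_1$ for $\Pi_1$, construct a ``$(t-1/2)$-round'' algorithm $\BB_{1/2}$ for $\Pi_{1/2}$ (an algorithm where each edge $e$, seeing $N^{t-1}(e)$, decides the outputs at both its endpoints), and then from $\BB_{1/2}$ a genuine $t$-round algorithm $\BB$ for $\Pi$. The natural definition is to \emph{invert} the set-construction: if $\BB_1$ outputs the set $O^*_v$ at $(v,e)$ and $O^*_u$ at $(u,e)$, then since $\{O^*_u,O^*_v\}\in g_1(\Delta)$ there exist $y\in O^*_v$, $z\in O^*_u$ with $\{y,z\}\in g_{1/2}(\Delta)$; the edge $e$ — which sees $N^{t-1}(e)\supseteq N^{t-1}(v)\cup N^{t-1}(u)$, hence can recompute both $O^*_u$ and $O^*_v$ — picks such a canonical pair (say, lexicographically first under a fixed order on the finite set $f_{1/2}(\Delta)$) and assigns $y$ to $(v,e)$, $z$ to $(u,e)$. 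The point is that \emph{both} endpoints, when later reconstructing this edge's decision, will compute the \emph{same} canonical pair because the choice depends only on $N^{t-1}(e)$, which both can see. One then checks node-validity for $\Pi_{1/2}$ at $v$: $v$'s incident edges assign labels $y_1,\dots,y_i$ that are components of sets $O^*_1,\dots,O^*_i$ forming a multiset in $h_1(\Delta)$, and Property~\ref{prop:matchtwo} (the ``for all'' quantifier!) guarantees $\{y_1,\dots,y_i\}\in h_{1/2}(\Delta)$. The analogous dual step $\BB_{1/2}\mapsto\BB$ recovers a $t$-round algorithm for $\Pi$ by having each node $v$, seeing $N^t(v)$, for each incident edge $e$ recompute $\BB_{1/2}$'s set-output at $(v,e)$ and canonically extract a single label, then use Properties~\ref{prop:match} and~\ref{prop:eych} to certify validity. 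Crucially, this direction needs \emph{no} $t$-independence — it is purely a matter of consistent canonical choices — and the runtime is exactly $t$ since $N^t(v)$ is gathered in $t$ rounds.

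\textbf{Main obstacle.} I expect the delicate part to be the $t$-independence applications in the forward direction: one must be careful that the ``extensions'' ranged over when defining $O^*$ are exactly the elements of the sets $\extra^t_v(e)$ (resp.\ $\extra^t_{e}(v)$) from the definition, so that the combining guarantees apply verbatim, and one must double-check that the girth-$\ge 2t+2$ hypothesis is genuinely used — it is needed both so that all the radius-$t$ neighborhoods in question are trees (making ``topology'' well-defined from the gathered information, footnote~\ref{repfoot}) and implicitly in why $t$-independence is a reasonable hypothesis at all. A secondary subtlety is the quantifier matching: Properties~\ref{prop:match}/\ref{prop:matchtwo} are universally quantified and Properties~\ref{prop:eych}/\ref{prop:gee} existentially quantified, and one has to consistently pair ``universal clause $\leftrightarrow$ needs $t$-independence to combine arbitrary independent choices'' with ``existential clause $\leftrightarrow$ satisfied by the single actual run of the faster-is-being-built-from algorithm'' — getting the four cases in the right correspondence across the two directions and two half-steps is where a careless proof would go wrong.
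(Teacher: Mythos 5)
Your proposal is correct and follows essentially the same route as the paper's proof: the same two half-step decompositions in both directions, with $t$-independence invoked exactly for the two universally quantified constraints (Properties~\ref{prop:match} and~\ref{prop:matchtwo}) and the actual run of the existing algorithm witnessing the two existential ones (Properties~\ref{prop:eych} and~\ref{prop:gee}), and canonical consistent choices (needing no $t$-independence) in the reverse direction. The only blemishes are two index slips --- the joint view of an edge $e=\{u,v\}$ that determines both endpoints' $\Pi_1$-outputs is $N^t(e)$ (which under the girth hypothesis equals $N^{t-1}(u)\cup N^{t-1}(v)$), not $N^{t-1}(e)$, and gathering $N^t(e)$ takes a node $t$ rounds rather than $t-1$ --- neither of which affects the argument, since the final algorithm for $\Pi_1$ genuinely depends only on $N^{t-1}(v)$ and the final algorithm for $\Pi$ only on $N^t(v)\supseteq N^t(e)$.
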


	\begin{proof}
		We first show that (1) implies (2).
		Let $\AAA$ be an algorithm solving $(\Pi,\GG_{n, \Delta})$ in time $t$.
		Consider the algorithms $\AAA_{1/2}$ and $\AAA_1$ derived from $\AAA$ in the manner described above.
		We argue that $\AAA_{1/2}$ solves $(\Pi_{1/2},\GG_{n, \Delta})$ and that for deciding on the output assigned to some node-edge pair $(v,e)$ in the execution of $\AAA_{1/2}$, node $v$ only looks at $N^t(e)$.
		Furthermore, we argue that $\AAA_1$ solves $(\Pi_1,\GG_{n, \Delta})$ in time $t-1$.
		The statement that $v$ only looks at $N^t(e)$ directly follows from the definition of $\AAA_{1/2}$; analogously, the runtime of $t-1$ follows from the description of $\AAA_1$.
		Hence, what is left to show is that the two derived algorithms actually solve the two derived realized problems.
		We start with $\AAA_{1/2}$ and $(\Pi_{1/2},\GG_{n, \Delta})$.
		
		Let $G$ be an arbitrary graph in $\GG_{n, \Delta}$, and let $v \in V(G)$ be an arbitrary node, $e_1, \dots, e_{d(v)} \in E(G)$ the edges incident to $v$, and $u \in V(G)$ the other endpoint of $e = e_1$.
		Furthermore, let $O_{v, e}$ denote the output $\AAA_{1/2}$ assigns to $(v,e)$, and let the output for other node-edge pairs be denoted analogously.
		We have to show that A) the multiset $\{O_{v,e}, O_{u,e}\}$ is contained in $g_{1/2}(\Delta)$, and B) the multiset $\{ O_{v, e_1}, \dots, O_{v, e_{d(v)}} \}$ is contained in $h_{1/2}(\Delta)$.
		
		For A), observe that, by the definition of $\AAA_{1/2}$, for any two output labels $o \in O_{v,e}$ and $o' \in O_{u,e}$, $N^t(e)$ has both an extension $\ext^t_v(e)$ in $\GG_{n, \Delta}$ such that $\AAA$ would output $o$ at $(v,e)$ and an extension $\ext^t_u(e)$ in $\GG_{n, \Delta}$ such that $\AAA$ would output $o'$ at $(u,e)$.
		Due to the $t$-independence of $\GG_{n, \Delta}$, there is a graph $H \in \GG_{n, \Delta}$ such that the radius-$(t+1)$ neighborhood of some node $e' = \{ w, x \} \in E(H)$ is isomorphic to $N^t(e) \cup \ext^t_v(e) \cup \ext^t_u(e)$, which implies that, in $H$, $\AAA$ would output $o$ at $(w, e')$ \emph{and} $o'$ at $(x, e')$.
		Since $\AAA$ solves $(\Pi,\GG_{n, \Delta})$, it follows that the multiset $\{o, o'\}$ is contained in $g(\Delta)$.
		We conclude that for any $o \in O_{v,e}, o' \in O_{u,e}$, the multiset $\{o, o'\}$ is contained in $g(\Delta)$, which shows that the multiset $\{ O_{v,e}, O_{u,e} \}$ is contained in $g_{1/2}(\Delta)$, by Property\ \ref{prop:match} in the definition of $\Pi_{1/2}$.

		For B), consider $N^t_G(v)$, and let $o_1, \dots, o_{d(v)}$ be the outputs $v$ would assign to $(v, e_1), \dots, (v, e_{d(v)})$, respectively, when executing $\AAA$.
		Observe that, for any $1 \leq i \leq d(v)$, $N^t_G(e_i)$ has an extension in $\GG_{n, \Delta}$ such that $\AAA$ would output $o_i$ at $(v, e_i)$, namely $N^t_G(v) \setminus N^t_G(e_i)$.
		Hence, by the definition of $\AAA_{1/2}$, we have $o_i \in O_{v, e_i}$.
		Since the multiset $\{o_1, \dots, o_{d(v)}\}$ is contained in $h(\Delta)$ (due to $\AAA$ solving $(\Pi,\GG_{n, \Delta})$), it follows that the multiset $\{ O_{v, e_1}, \dots, O_{v, e_{d(v)}} \}$ is contained in $h_{1/2}(\Delta)$, by Property\ \ref{prop:eych} in the definition of $\Pi_{1/2}$.
		This concludes the proof showing that $\AAA_{1/2}$ solves $(\Pi_{1/2},\GG_{n, \Delta})$.
		
		The proof for showing that $\AAA_1$ solves $(\Pi_1,\GG_{n, \Delta})$ is analogous.
		Note that this proof does not rely in any way on the fact that $\Pi_{1/2}$ and $\AAA_{1/2}$ are derived from another problem, resp.\ algorithm.
		The proof works just as well if we replace $\Pi_{1/2}$ by any other problem and $\AAA_{1/2}$ by any other algorithm that solves the problem on $\GG_{n, \Delta}$ and has the property that each node $v$ decides on the output for $(v,e)$ by only looking at $N^t(e)$.

		Now, we show that (2) implies (1).
		Let $\AAA^*$ be an algorithm solving $(\Pi_1,\GG_{n, \Delta})$ in time $t-1$.
		We derive an algorithm $\AAA^*_{-1/2}$ as follows.
		For each graph $G \in \GG_{n, \Delta}$ and each edge $e = \{ u, v \} \in E(G)$, let $O^G_{v, e}$ and $O^G_{u, e}$ be the outputs $\AAA^*$ assigns to $(v, e)$ and $(u, e)$, respectively, when executed (by $v$) on $G$. 
		Since $\AAA^*$ solves $(\Pi_1,\GG_{n, \Delta})$, there exist elements $o \in O^G_{v, e}, o' \in O^G_{u, e}$ such that the multiset $\{ o, o' \}$ is contained in $g_{1/2}(\Delta)$, by Property\ \ref{prop:gee} in the definition of $\Pi_1$.
		We define algorithm $\AAA^*_{-1/2}$ to output $o$ at $(v,e)$ and $o'$ at $(u,e)$, when executed on $G$; thereby $\AAA^*_{-1/2}$ trivially satisfies one of the two conditions for solving $(\Pi_{1/2},\GG_{n, \Delta})$ (the one concerning $g_{1/2}(\Delta)$).
		By Property\ \ref{prop:matchtwo} in the definition of $\Pi_1$ (and the fact that $\AAA^*$ solves $\Pi_1$ on $G$), algorithm $\AAA^*_{-1/2}$ also satisfies the other condition, concerning $h_{1/2}(\Delta)$.
		Hence $\AAA^*_{-1/2}$ solves $(\Pi_{1/2},\GG_{n, \Delta})$.
		Observe that due to the design of $\AAA^*_{-1/2}$, each node $v$ can determine the output at $(v,e)$ according to $\AAA^*_{-1/2}$ by looking only at $N^t(e)$.

		Now, using an analogous argumentation, we can define an algorithm $\AAA^*_{-1}$ that solves $(\Pi,\GG_{n, \Delta})$, where each node $v$ can determine the output at $(v,e)$ by looking only at $N^t(v)$.
		Hence, there is an algorithm solving $(\Pi,\GG_{n, \Delta})$ in time $t$.
	\end{proof}

	Theorem\ \ref{thm:speedup} provides an explicit version of our speedup guarantees for graph classes $\GG_{n, \Delta}$.
	We formulate the implications for general graph classes in the following corollary of Theorem\ \ref{thm:speedup}.

	\begin{corollary}\label{cor:allgg}
		Consider some arbitrary problem $\Pi$, and let $\GG$ be a $\Sigma$-input-labeled graph class and $T: \NN \times \NN \rightarrow \NN^+$ some function. 
		If, for all $n, \Delta \in \NN$, it holds that $\GG_{n, \Delta}$ is $T(n, \Delta)$-independent and contains only graphs of girth at least $2T(n, \Delta)+2$, then the following two statements are equivalent:
		\begin{itemize}
			\item[(1)] There is an algorithm solving $(\Pi,\GG)$ in time $T(n, \Delta)$. 
			\item[(2)] There is an algorithm solving $(\Pi_1, \GG)$ in time $T(n, \Delta)-1$.
		\end{itemize}
	\end{corollary}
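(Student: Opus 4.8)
The plan is to reduce Corollary~\ref{cor:allgg} directly to Theorem~\ref{thm:speedup}, applied separately to each subclass $\GG_{n,\Delta}$, and then to stitch the per-subclass algorithms back together using the fact that in our model every node knows $n$ and $\Delta$ from the start of the computation. The key conceptual point, which I would make explicit first, is that ``an algorithm solves $(\Pi,\GG)$ in time $T(n,\Delta)$'' is, by the definition of the model, exactly the statement that for every pair $(n,\Delta)\in\NN^2$ this algorithm (restricted to inputs from $\GG_{n,\Delta}$) solves $(\Pi,\GG_{n,\Delta})$ within $T(n,\Delta)$ rounds. This equivalence legitimizes both the restriction step and the recombination step below.

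For the direction $(1)\Rightarrow(2)$: given an algorithm $\AAA$ solving $(\Pi,\GG)$ in time $T(n,\Delta)$, fix an arbitrary pair $(n,\Delta)$. By the observation above, $\AAA$ restricted to $\GG_{n,\Delta}$ solves $(\Pi,\GG_{n,\Delta})$ in time $T(n,\Delta)$; since $T(n,\Delta)\in\NN^+$ it is a positive integer as required, and by hypothesis $\GG_{n,\Delta}$ is $T(n,\Delta)$-independent and contains only graphs of girth at least $2T(n,\Delta)+2$. Hence Theorem~\ref{thm:speedup} (with $t=T(n,\Delta)$) yields an algorithm $\AAA^{(n,\Delta)}$ solving $(\Pi_1,\GG_{n,\Delta})$ in time $T(n,\Delta)-1$. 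I would then define the desired algorithm $\AAA_1$ for $(\Pi_1,\GG)$ by letting each node first read the values $n,\Delta$ available at the outset and then run $\AAA^{(n,\Delta)}$; on any $G\in\GG$ we have $G\in\GG_{n,\Delta}$ for the appropriate $(n,\Delta)$, so $\AAA_1$ is correct and runs in $T(n,\Delta)-1$ rounds.

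The direction $(2)\Rightarrow(1)$ is entirely symmetric: restrict a given algorithm for $(\Pi_1,\GG)$ to each $\GG_{n,\Delta}$, invoke the converse implication of Theorem~\ref{thm:speedup} to obtain an algorithm for $(\Pi,\GG_{n,\Delta})$ running in $T(n,\Delta)$ rounds, and recombine these by the same ``read $n,\Delta$ first, then dispatch'' construction to get an algorithm for $(\Pi,\GG)$ in time $T(n,\Delta)$.

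I do not expect a substantive obstacle; this is essentially bookkeeping on top of Theorem~\ref{thm:speedup}. The only two minor points that warrant a sentence each are: empty subclasses $\GG_{n,\Delta}$ pose no problem, since they are trivially $T(n,\Delta)$-independent and every algorithm vacuously solves any realized problem on them, so the hypotheses of Theorem~\ref{thm:speedup} hold for \emph{every} pair $(n,\Delta)$; and one should note that the runtime bound $T(n,\Delta)-1$ is nonnegative precisely because $T$ takes values in $\NN^+$, matching the ``positive integer $t$'' requirement of Theorem~\ref{thm:speedup}.
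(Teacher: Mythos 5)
Your proposal is correct and is exactly the argument the paper intends: the corollary is stated without an explicit proof precisely because it follows from applying Theorem~\ref{thm:speedup} to each subclass $\GG_{n,\Delta}$ and recombining via the nodes' a priori knowledge of $n$ and $\Delta$, which is what you do. Your two side remarks (empty subclasses, $T$ taking values in $\NN^+$) are accurate and harmless additions.
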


	Note that Theorem \ref{thm:speedup} and Corollary \ref{cor:allgg} do not contradict the existence of known gaps \cite{chang16separation, DBLP:conf/focs/ChangP17, DBLP:conf/stoc/NaorS93} in the time complexity landscape of LCL problems since the problem obtained after applying our speedup a non-constant number of times is not necessarily an LCL problem anymore (but is still locally checkable).

	\subsection{Simplification via Maximality}\label{sec:maxisim}	
	In this section, we will present a technique that simplifies the derived problems $\Pi_{1/2}$ and $\Pi_1$ without affecting the correctness of Theorem\ \ref{thm:speedup}.
	Consider the (common) case that the input labeling of a considered graph class contains some symmetry breaking information for the two endpoints of each edge, i.e., any two nodes $u$, $v$ connected by an edge $e$ can select the same of the two endpoints.
	We will consider the simplest case that the desired symmetry breaking can be achieved by both nodes by just looking at their $0$-round input information, i.e., the respective input graph contains edge orientations (where the orientation of an edge $e = \{u, v\}$ is encoded in both input labels assigned to $(u,e)$ and $(v,e)$).
	Note that if, instead, the nodes have to look up to a distance of $1$ (or, more generally, some constant $c$), as, e.g., in the case of an input node coloring or input IDs, then the simplification technique we present still works for all cases except for the speedup of a $1$-round (resp.\ any $(\geq c)$-round) algorithm to a $0$-round (resp.\ one-round-faster) algorithm.
	
	The idea behind the simplification technique is that we can reduce the number of output labels the derived algorithm for $\Pi_{1/2}$ can use, by making the set $g_{1/2}(\Delta)$ smaller; as we will show this can be done in a way that ensures that the algorithm (or, more precisely, its adapted version) can still solve (the new version of) $\Pi_{1/2}$ in the same runtime as before.
	More specifically, observing that the definition of $\Pi_{1/2}$ contains an existential and a universal statement, we can replace each multiset $\{Y,Z\}$ in $g_{1/2}(\Delta)$ by some $\{Y',Z\}$ with $Y' \supsetneq Y$ if $\{Y', Z\}$ still satisfies the universal Property\ \ref{prop:match} in the description of $\Pi_{1/2}$; the correctness of the existential Property\ \ref{prop:eych} is not affected by this and $g_{1/2}(\Delta)$ shrinks.
	The symmetry breaking on edges mentioned above is a necessary ingredient for performing this process in a consistent manner (in the dual case for $\Pi_1$, we need analogously a symmetry breaking for nodes; however, this is already provided by the port numbers).
	As we will see in Sections\ \ref{sec:sinkless}--\ref{sec:app2col}, this approach can lead to a significant decrease in usable outputs.
	For a simplified problem, we may compress the problem description by assuming that the respective $\mathcal O$ and $f(\Delta)$ contain only those outputs that can actually occur in a correct solution (i.e., that occur in both some multiset contained in $g(\Delta)$ and some multiset contained in $h(\Delta)$).
	
	\paragraph{Simplified Problems}
	We obtain the simplified version $\Pi'_{1/2}$ of $\Pi_{1/2}$ by replacing Property\ \ref{prop:match} in the description of $\Pi_{1/2}$ by the following extension:
	\begin{enumerate}[resume]
		\item\label{prop:replacematch} For any $y \in Y$, $z \in Z$, the multiset $\{y,z\}$ is contained in $g(\Delta)$ and the sets $Y$ and $Z$ are maximal with this property, i.e., for any $Y' \supsetneq Y$, the multiset $\{Y', Z\}$ does not satisfy Property\ \ref{prop:match}, and for any $Z' \supsetneq Z$, the multiset $\{Y, Z'\}$ does not satisfy Property\ \ref{prop:match}.
	\end{enumerate}

	\noindent Analogously, we obtain the simplified version of $\Pi_1$ by replacing Property\ \ref{prop:matchtwo} by the following extension:
	\begin{enumerate}[resume]
		\item For any $y_1 \in Y_1, \dots, y_i \in Y_i$, the multiset $\{y_1, \dots, y_i\}$ is contained in $h_{1/2}(\Delta)$ and the sets $Y_1, \dots, Y_i$ are maximal with this property, i.e., for any $1 \leq j \leq i$, it holds that, for any $Y'_j \supsetneq Y_j$, the multiset $\{Y_1, \dots, Y_{j-1}, Y'_j, Y_{j+1}, \dots, Y_i\}$ does not satisfy Property\ \ref{prop:matchtwo}.
	\end{enumerate}
	However, just performing the latter replacement as it is will result in a (simplified) problem that is derived from $\Pi_{1/2}$, not from $\Pi'_{1/2}$.
	Hence, we define $\Pi'_1$ by not only performing the above replacement, but also replacing $g_{1/2}$ by $g'_{1/2}$ in Property\ \ref{prop:gee}, where $g'_{1/2}$ is the analogue to $g_{1/2}$ in $\Pi'_{1/2}$ (and actually the only part in which the definitions of $\Pi_{1/2}$ and $\Pi'_{1/2}$ differ).
	Now, similarly to above we can define simplified problems recursively for any positive $k$ by setting $\Pi'_{k+1/2} := (\Pi'_k)'_{1/2}$ and $\Pi'_{k+1} := (\Pi'_k)'_1$.
	Note that for simplicity, we write, e.g., $g_{1/2}$ instead of $g'_{1/2}$ when talking about the simplified problem $\Pi'_{1/2}$ if it is clear which problem is considered.

	In the following theorem, we show that Theorem\ \ref{thm:speedup} still holds if we replace a derived problem by its simplified version.
	Naturally, this implies that also an analogous version of Corollary\ \ref{cor:allgg} holds for our simplified problems.
	Moreover, from the proof of Theorem\ \ref{thm:primespeedup}, it follows that we can arbitrarily decide after which performed problem derivation we apply the simplification technique and after which we do not, without affecting the runtime of the derived problem.

	\begin{theorem}\label{thm:primespeedup}
		If we augment the conditions of Theorem\ \ref{thm:speedup} by the requirement that the inputs for the graphs from $\GG_{n, \Delta}$ contain an orientation for each edge, then the following two statements are equivalent:
		\begin{itemize}
			\item[(1)] There is an algorithm solving $(\Pi,\GG_{n, \Delta})$ in time $t$. 
			\item[(2)] There is an algorithm solving $(\Pi'_1, \GG_{n, \Delta})$ in time $t-1$.
		\end{itemize}
	\end{theorem}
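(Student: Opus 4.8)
The plan is to follow the two‑implication structure of the proof of Theorem~\ref{thm:speedup}, the only new ingredient being a \emph{canonical enlargement} procedure that makes the output sets produced by the derived algorithms maximal.

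The implication $(2)\Rightarrow(1)$ requires essentially no new work. Directly from the definitions, adding the maximality clause of Property~\ref{prop:replacematch} can only remove configurations from $g_{1/2}(\Delta)$, so $g'_{1/2}(\Delta)\subseteq g_{1/2}(\Delta)$; likewise, using in addition that the maximality clause in the definition of $\Pi'_1$ only shrinks its node constraint, we get $g'_1(\Delta)\subseteq g_1(\Delta)$ and $h'_1(\Delta)\subseteq h_1(\Delta)$, while $\Pi_1$ and $\Pi'_1$ have the same output label sets. Hence any algorithm solving $(\Pi'_1,\GG_{n,\Delta})$ in time $t-1$ also solves $(\Pi_1,\GG_{n,\Delta})$ in time $t-1$, and Theorem~\ref{thm:speedup} then supplies an algorithm for $(\Pi,\GG_{n,\Delta})$ in time $t$.

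For $(1)\Rightarrow(2)$, I would start from an algorithm $\AAA$ for $(\Pi,\GG_{n,\Delta})$ in time $t$ and first reuse the construction from the proof of Theorem~\ref{thm:speedup} to obtain $\AAA_{1/2}$ solving $(\Pi_{1/2},\GG_{n,\Delta})$ with the property that each node $v$ computes its output $O^{*}_{v,e}$ at $(v,e)$ by inspecting only $N^t(e)$. I then modify $\AAA_{1/2}$ into $\AAA'_{1/2}$ using the edge orientations: for an edge $e=\{u,v\}$ oriented from $u$ to $v$ (an orientation both endpoints read off their $0$-round input, so both can recompute $Y:=O^{*}_{v,e}$ and $Z:=O^{*}_{u,e}$ from $N^t(e)$), first enlarge the tail's set to $Z':=\{\,z\in f(\Delta):\{y,z\}\in g(\Delta)\text{ for all }y\in Y\,\}$ and then the head's set to $Y':=\{\,y\in f(\Delta):\{y,z\}\in g(\Delta)\text{ for all }z\in Z'\,\}$, and let $\AAA'_{1/2}$ output $Y'$ at $(v,e)$ and $Z'$ at $(u,e)$. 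A short check shows $Y\subseteq Y'$, $Z\subseteq Z'$, that $\{Y',Z'\}$ satisfies Property~\ref{prop:match}, and, using $Y\subseteq Y'$ to rule out enlargements of $Z'$ and the very definition of $Y'$ to rule out enlargements of $Y'$, that $\{Y',Z'\}$ is maximal, i.e.\ lies in $g'_{1/2}(\Delta)$. As $O^{*}_{v,e_i}$ is contained in whatever $\AAA'_{1/2}$ outputs at $(v,e_i)$ for every incident edge $e_i$, the node constraint still satisfies the existential Property~\ref{prop:eych}, so $\AAA'_{1/2}$ solves $(\Pi'_{1/2},\GG_{n,\Delta})$ and still reads only $N^t(e)$ for the output at $(v,e)$. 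Applying the derivation underlying $\AAA_{1/2}\to\AAA_1$ to $\AAA'_{1/2}$ (which, as observed in the proof of Theorem~\ref{thm:speedup}, works for any problem together with any algorithm having this $N^t(e)$-locality) then produces, for each node $v$ from $N^{t-1}(v)$, sets $P^{*}_{v,e}$ whose edge configurations lie in $g'_1(\Delta)$ (the base derivation from $\Pi'_{1/2}$ already uses $g'_{1/2}$ in Property~\ref{prop:gee}) and whose node configurations lie in $h_1(\Delta)$.

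The final step is to lift the node configurations into $h'_1(\Delta)$, and here the needed symmetry breaking is free: node $v$ orders its incident edges $e_1,\dots,e_{d(v)}$ by port number and, processing them in order, replaces $P^{*}_{v,e_j}$ by the unique maximal $P'_{v,e_j}\supseteq P^{*}_{v,e_j}$ for which the multiset $\{P'_{v,e_1},\dots,P'_{v,e_j},P^{*}_{v,e_{j+1}},\dots,P^{*}_{v,e_{d(v)}}\}$ still satisfies Property~\ref{prop:matchtwo}, outputting $P'_{v,e_j}$ at $(v,e_j)$. An induction on $j$, maintaining that after step $j$ the product of the first $j$ (enlarged) sets with the remaining (original) sets still satisfies Property~\ref{prop:matchtwo} and that $P^{*}_{v,e_k}\subseteq P'_{v,e_k}$ for $k\le j$, shows the final tuple is maximal with respect to Property~\ref{prop:matchtwo}, hence lies in $h'_1(\Delta)$; and since enlarging output sets cannot destroy the existential Property~\ref{prop:gee}, the edge configurations remain in $g'_1(\Delta)$. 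This yields an algorithm for $(\Pi'_1,\GG_{n,\Delta})$ running in time $t-1$. I expect the main obstacle to be the consistency of the \emph{edge} enlargement: both endpoints of $e$ must arrive at the same maximal pair $\{Y',Z'\}$, which is exactly what the orientation hypothesis buys (the ``tail'' endpoint grows first and everything else is a deterministic function of $N^t(e)$, which both endpoints see). Finally, since the two enlargement steps are independent and each may be applied or skipped after the corresponding half-step derivation, the same argument justifies the remark that one may freely choose after which derivations to simplify, as well as the analogue of Corollary~\ref{cor:allgg} for the simplified problems.
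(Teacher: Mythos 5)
Your proposal is correct and follows essentially the same route as the paper's proof: the direction $(2)\Rightarrow(1)$ is the trivial containment $\Pi'_1\subseteq\Pi_1$ (as constraint sets), and the direction $(1)\Rightarrow(2)$ enlarges the derived algorithm's output sets to maximal ones, using the edge orientation to make both endpoints agree on the order of enlargement for the edge constraint and the port numbers for the node constraint. You merely spell out details the paper leaves as ``analogous'' or ``straightforward'' (the explicit formulas for the unique maximal enlargements, the sequential port-order greedy step, and the monotonicity argument showing the final tuple is maximal), all of which check out.
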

	
	\begin{proof}
		Recall that, as observed in the proof of Theorem\ \ref{thm:speedup}, the speedup of one round is actually obtained by two \emph{independent} speedups, going from $\Pi$ to $\Pi_{1/2}$ and from $\Pi_{1/2}$ to $\Pi_1$.
		In particular, the fact that $\Pi_{1/2}$ is a derived problem is not relevant for the second speedup.
		Therefore (and due to the two speedups being analogous), it is enough to argue that if there is an algorithm $\AAA$ solving $(\Pi_{1/2},\GG_{n, \Delta})$ by looking only at radius-$t$ edge neighborhoods, then there is also an algorithm $\AAA'$ solving $(\Pi'_{1/2},\GG_{n, \Delta})$ by looking only at radius-$t$ edge neighborhoods, and vice versa.
		Going from $\AAA'$ to $\AAA$ is easy: since $\Pi'_{1/2}$ is a more restrictive version of $\Pi_{1/2}$, any algorithm solving $(\Pi'_{1/2},\GG_{n, \Delta})$ also solves $(\Pi_{1/2},\GG_{n, \Delta})$, hence setting $\AAA := \AAA'$ is sufficient.
	
		For the other direction, given an algorithm $\AAA$ as described above, we transform it into an algorithm $\AAA'$ as described above, as follows.
		Consider an arbitrary edge $e = \{ u, v \}$ in some graph $G \in \GG_{n, \Delta}$.
		From the orientation of $e$, nodes $u$ and $v$ infer an ordering of the set $\{u, v\}$ in some fixed way (so, both $u$ and $v$ infer the same ordering, in $0$ rounds).
		Then, $u$ and $v$ compute the outputs $O_{u,e}$ and $O_{v,e}$ of $\AAA$ at $(u,e)$ and $(v,e)$, respectively.
		Since these two outputs are uniquely defined by $N^t(e)$ (and $\AAA$), both $u$ and $v$ can compute \emph{both} outputs by only looking at $N^t(e)$.
		Now both nodes do the following (internally, in $0$ additional rounds), where we assume w.l.o.g.\ that $u$ is first in the computed ordering.
		First, in a fixed way\footnote{More formally, this is to say that $u$ and $v$ apply a deterministic algorithm to infer which maximal superset of $O_{u,e}$ is picked (based on knowing $O_{u,e}$ and $O_{v,e}$), so that both nodes pick \emph{the same} superset.}, they pick a set $O \supseteq O_{u,e}$, $O \in f_{1/2}(\Delta)$ that is maximal among all such sets with the property that the multiset $\{ O, O_{v,e} \}$ satisfies Property\ \ref{prop:match} in the definition of $\Pi_{1/2}$.
		Then, analogously, they pick a set $O' \supseteq O_{v,e}$, $O' \in f_{1/2}(\Delta)$ that is maximal among all such sets with the property that the multiset $\{ O, O' \}$ satisfies Property\ \ref{prop:match}.
		Finally, $u$ outputs $O$ at $(u,e)$, and $v$ outputs $O'$ at $(v,e)$.
		Now it is straightforward to check that algorithm $\AAA'$, defined in this way, actually solves $(\Pi'_{1/2},\GG_{n, \Delta})$, and each node $v$ can determine the output at $(v,e)$ by only looking at $N^t(e)$.
	\end{proof}

	\subsection{Order-Invariant Algorithms}\label{sec:oia}
	As mentioned in Section\ \ref{sec:lift}, if we are interested in runtime bounds as a function of $\Delta$, then restricting our attention to order-invariant algorithms allows our speedup to be applied to models where nodes are equipped with unique IDs, such as the $\LOCAL$ model.
	The following theorem extends Theorems\ \ref{thm:speedup} and\ \ref{thm:primespeedup} (and, consequently, the respective corollaries) to the case of order-invariant algorithms and is an important building block in our lower bound proof for weak $2$-coloring.
	Recall that an order-invariant algorithm \cite{DBLP:conf/stoc/NaorS93} is defined as an algorithm where each node's output only depends on the relative order of the IDs it sees (and possibly other input information), but not their absolute values (i.e., if the input ID assignment in the view of a node $v$ is changed to another assignment with the same ordering of the nodes according to the IDs, then the output of $v$ must remain the same).

	\begin{theorem}\label{thm:ordinv}
		Consider some arbitrary problem $\Pi$ and the derived problems $\Pi_1$ and $\Pi'_1$.
		Let $\GG_{n, \Delta}$ be a $\Sigma$-input-labeled graph class consisting of $n$-node graphs with maximum degree $\Delta$.
		Assume that $\GG_{n, \Delta}$ is $t$-independent for some positive integer $t$ and contains only graphs of girth at least $2t+2$.
		Let $S \subseteq \NN^+$ be a finite set of identifiers satisfying $|S| \geq 4\Delta^{2t}$.
		Let $\GG'_{n, \Delta}$ be the class consisting of all\footnote{If $|S| < n$, then $\GG'_{n, \Delta}$ is empty. If one wants to avoid empty graph classes, one can simply require $|S| \geq \max (4\Delta^{2t}, n)$, here and in Lemma \ref{lem:supersuper}.} (input-labeled) graphs obtained by taking a graph from $\GG_{n, \Delta}$ and assigning unique identifiers from $S$ to the nodes of the graph.
		Then, the following statements are equivalent:
		\begin{itemize}
			\item[(1)] There is an order-invariant algorithm solving $(\Pi,\GG'_{n, \Delta})$ in time $t$. 
			\item[(2)] There is an order-invariant algorithm solving $(\Pi_1, \GG'_{n, \Delta})$ in time $t-1$.
			\item[(3)] There is an order-invariant algorithm solving $(\Pi'_1, \GG'_{n, \Delta})$ in time $t-1$.
		\end{itemize}
	\end{theorem}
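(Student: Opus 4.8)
The plan is to replay the proofs of Theorems~\ref{thm:speedup} and~\ref{thm:primespeedup} almost verbatim, making a single substitution: wherever those proofs invoke $t$-independence of the input-labeled graph class, we instead combine three ingredients available here --- $t$-independence of the underlying identifier-free class $\GG_{n,\Delta}$ (which holds by hypothesis), order-invariance of the algorithm at hand, and the bound $|S| \ge 4\Delta^{2t}$ (together with $|S| \ge n$, automatic when $\GG'_{n,\Delta} \neq \emptyset$), which guarantees that any linear order on the nodes of a radius-$(t+1)$ neighborhood --- a tree on at most $4\Delta^{2t}$ nodes --- can be realized by identifiers drawn from $S$. The only substantial implication is $(1)\Rightarrow(2)$; the remaining equivalences need little new input.

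For $(1)\Rightarrow(2)$, let $\AAA$ be an order-invariant $t$-round algorithm for $(\Pi,\GG'_{n,\Delta})$. Using order-invariance, we regard $\AAA$ as a function taking a neighborhood described only by its topology, its non-identifier input labels, and the \emph{relative order} of the identifiers, and returning the outputs. We then define $\AAA_{1/2}$ and $\AAA_1$ exactly as in the proof of Theorem~\ref{thm:speedup}, except that the quantification ``for each extension in $\GG'_{n,\Delta}$'' is replaced by ``for each identifier-free extension in $\GG_{n,\Delta}$ and each way of extending the observed identifier order to the new nodes'' (an equivalent quantification, since by $|S|\ge 4\Delta^{2t}$ and $|S|\ge n$ every such order extension is realized by some graph in $\GG'_{n,\Delta}$, and $\AAA$ ignores the concrete identifier values). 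This makes $\AAA_{1/2}$ and $\AAA_1$ order-invariant and well-defined by construction. The correctness proof is then the one from Theorem~\ref{thm:speedup}, verbatim, up to the one point where two --- or, for the node clause of $t$-independence, up to $\Delta$ --- extensions of some neighborhood $N$ must be combined into a single graph on which $\AAA$ (resp.\ $\AAA_{1/2}$) is run: here we first discard identifiers and use $t$-independence of $\GG_{n,\Delta}$ together with the girth bound to glue the topological parts into a graph $H_0 \in \GG_{n,\Delta}$ whose relevant neighborhood is a tree isomorphic to the union of $N$ with the glued extensions (so the ``fresh'' nodes contributed by different extensions are pairwise disjoint); then we assign identifiers from $S$ to all of $H_0$ so that the induced order restricts, on each extended piece, to the order prescribed by that extension --- such an order exists because all prescribed orders extend the common order on $N$, so they merge by placing, inside each gap of the $N$-order, the fresh nodes of one extension after those of the next; and finally we invoke order-invariance of $\AAA$ to conclude it produces the required outputs on $H := H_0$ with these identifiers, just as the $t$-independence step did in Theorem~\ref{thm:speedup}.

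For $(2)\Rightarrow(1)$, the corresponding half of the proof of Theorem~\ref{thm:speedup} uses no form of $t$-independence at all: it only selects, for each edge and each node, a representative element of output sets that $\AAA^*$ determines, via a fixed rule, and both endpoints of an edge see enough inside a radius-$t$ edge neighborhood to compute both relevant sets; since $\AAA^*$ is order-invariant these sets, hence the selected representatives, depend on the views only through the identifier orders, so the derived $\AAA^*_{-1/2}$ and $\AAA^*_{-1}$ are again order-invariant. For the simplified problem $\Pi'_1$, the implication $(3)\Rightarrow(2)$ is immediate because $\Pi'_1$ is a restriction of $\Pi_1$, whence $(3)\Rightarrow(1)$ follows; and $(1)\Rightarrow(3)$ is obtained by running the $(1)\Rightarrow(2)$ construction above through the maximality transformation from the proof of Theorem~\ref{thm:primespeedup}: the per-edge symmetry breaking needed to pick maximal supersets consistently is available because $t\ge 1$ forces both endpoints of an edge, with their identifiers, to be visible inside a radius-$t$ edge neighborhood, and comparing two identifiers is an order-invariant operation, while the dual per-node symmetry breaking is supplied by the port numbers as before.

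The step I expect to be the main obstacle is the ``glue topology in $\GG_{n,\Delta}$, then reinstall identifiers'' maneuver --- specifically, verifying that the orders prescribed by the individual extensions, each extending the order on $N$, always admit a common extension to the glued vertex set (this is exactly where disjointness of the fresh vertices, a consequence of girth $\ge 2t+2$, is used), and that the budget of $4\Delta^{2t}$ identifiers (plus $n$ for the rest of $H_0$) suffices regardless of how the identifiers already sitting in $N$ are arranged --- which is precisely why one reassigns all identifiers of $H_0$ from scratch rather than trying to keep those present in $N$. Everything else is a routine transcription of the arguments in the proofs of Theorems~\ref{thm:speedup} and~\ref{thm:primespeedup}.
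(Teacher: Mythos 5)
Your proposal is correct and follows essentially the same route as the paper's proof: both decouple the identifier issue from $t$-independence by applying $t$-independence to the identifier-free class $\GG_{n,\Delta}$ and then using order-invariance plus the size of $S$ to reinstall identifiers realizing the required relative orders, and both handle $(2)\Rightarrow(1)$ via order-invariant representative selection and the $\Pi'_1$ equivalences via orientations inferred from relative IDs. The only difference is presentational: the paper realizes the amalgamation of ID orders by internally re-spreading the IDs of $N^t(e)$ with gaps of size $2\Delta^t+1$ so that every order extension is realizable within those gaps, whereas you reassign all identifiers of the glued graph from scratch and merge the prescribed linear orders directly --- two implementations of the same idea.
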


	\begin{proof}
		We first show that (1) implies (2).
		Assume there is some order-invariant algorithm solving $(\Pi,\GG'_{n, \Delta})$ in time $t$, and for convenience, denote this algorithm by $\AAA$.
		Our theorem would follow from Theorem\ \ref{thm:speedup} if $\GG'_{n, \Delta}$ was $t$-independent; unfortunately, we only know that $\GG_{n, \Delta}$ is $t$-independent.
		We circumvent this issue by adapting our definition of the two algorithms $\AAA_{1/2}$ and $\AAA_1$.
		Recall the definition of $\AAA_{1/2}$.
		We change it only very slightly, as follows, again focusing on the output a node $v \in V(G)$ will assign to $(v, e) \in B(G)$, where $G \in \GG'_{n, \Delta}$.
		Directly after collecting the radius-$t$ neighborhood of edge $e$, node $v$ internally changes the IDs assigned to the nodes in $N^t(e)$.
		More precisely, in a fixed way, depending only on the \emph{relative} ID values (i.e., in an order-invariant way), each ID is changed (to some ID from $S$) so that the order of the nodes in $N^t(e)$ according to their IDs stays the same and any two of these IDs differ by at least $2\Delta^{t} + 1$.
		By our constraint on the size of $S$, this is possible as the number of nodes in $N^t(e)$ is at most $2\Delta^{t-1}$.
		After that, node $v$ proceeds as given in the definition of $\AAA_{1/2}$.
		Clearly, this new version of $\AAA_{1/2}$ is order-invariant and each node only looks at $N^t(e)$; in the following we argue that it also solves $(\Pi_{1/2},\GG'_{n, \Delta})$.

		For seeing that the argumentation in proof part B) of Theorem\ \ref{thm:speedup} still works, it is sufficient to observe that the set $O_{v,e}$ our modified $\AAA_{1/2}$ outputs at $(v,e)$ is a superset of the set the original $\AAA_{1/2}$ would output at $(v,e)$.
		This is the case due to the order-invariance of $\AAA$ and the fact that $2\Delta^{t}$ is at least as large as the number of nodes in $N^t(v) \setminus N^t(e)$ (so, for each actual ID assignment to $N^t(v)$, there is an ID assignment with the same order on the nodes in $N^t(v)$ that respects $v$'s internal ID choices for the nodes in $N^t(e)$).
		Looking at the individual steps in proof part A) of Theorem\ \ref{thm:speedup}, we see that, apart from changing $N^t(e)$ by updating the IDs as done by $v$ internally, the only argument we have to change due to our new definition of $\AAA_{1/2}$ is the argument where the $t$-independence is used.
		However, instead of using the (non-existent) $t$-independence of $\GG'_{n, \Delta}$, we can simply use the $t$-independence of $\GG_{n, \Delta}$, combined with the order-invariance of $\AAA$ and the fact that $2\Delta^{t}$ is at least as large as the number of nodes in $N^t(v) \setminus N^t(e)$ and $N^t(u) \setminus N^t(e)$ together (where $u$ is the other endpoint of $e$). An analogous adaptation of $\AAA_1$ (derived from the adapted $\AAA_{1/2}$) combined with an analogous proof adaptation yields our first implication.
		
		For the implication from (2) to (1), we can use an analogous argumentation to the one given in the proof of Theorem\ \ref{thm:speedup}.
		The only place where we have to be careful is when we choose elements $o \in O^g_{v,e}, o' \in O^G_{u,e}$ (such that $\{o,o'\} \in g_{1/2}(\Delta)$) during the definition of $\AAA^*_{-1/2}$ (and in the analogous place during the definition of $\AAA^*_{-1}$).
		Here, in order to ensure the order-invariance of $\AAA^*_{-1/2}$, we have to choose $o$ and $o'$ in a way that ensures that the same $o, o'$ are chosen for the same \emph{relative} ID assignments in $N^t(e)$.
		This is possible due to the order-invariance of $\AAA^*$.
		
		For the implication from (2) to (3), observe that the same algorithm transformation (``from $\AAA$ to $\AAA'$") as performed in the proof of Theorem\ \ref{thm:primespeedup} (where, in the current case, we infer the required edge orientations from the (relative) IDs) ensures that the obtained algorithm solves $(\Pi'_{1/2}, \GG'_{n, \Delta})$ in time $t-1$; moreover, as the outputs given by the transformed algorithm depend only on the outputs given by the initial algorithm and the inferred edge orientations, the order-invariance of the transformed algorithm follows from the order-invariance of the initial algorithm.
		Since the above statements hold analogously for the case of $(\Pi'_1, \GG'_{n, \Delta})$ (instead of $(\Pi'_{1/2}, \GG'_{n, \Delta})$), the desired implication follows.
		The implication from (3) to (2) is shown as in the proof of Theorem\ \ref{thm:primespeedup}.
	\end{proof}

	\subsection{Warm-up I: Sinkless Coloring and Sinkless Orientation}\label{sec:sinkless}
	In this section, we will apply our speedup technique to the problem of sinkless orientation.
	As we will see, our technique gives a deterministic version of the randomized speedup for sinkless orientation presented in \cite{Brandtlll}, showing that our speedup in some sense encompasses known techniques for concrete problems as special cases.
	In particular, the duality between sinkless orientation and sinkless coloring which is essential for the lower bound proof in \cite{Brandtlll} is automatically produced by our speedup.

	The problem of sinkless coloring (as it is given in \cite{Brandtlll}) is defined as follows:
	Given a $\Delta$-regular edge $\Delta$-colored graph, each node has to output an orientation for each incident edge such that the two endpoints of an edge agree on the orientation of the edge and each node has at least one outgoing edge.
	
	While our speedup does not require the edge coloring, we will add edge orientations to the input in order to be able to apply our simplification technique.
	Note that adding edge orientations or some other symmetry breaking information as input is \emph{required} in the deterministic case since the random bits used to break symmetry in the randomized case are not available and the edge coloring alone is not sufficient for breaking symmetry.
	
	While it is possible to set our initial problem $\Pi$ to be sinkless orientation (in which case the derived problem $\Pi_{1/2}$ is also sinkless orientation, roughly speaking because the output for sinkless orientation describes an \emph{edge} property), it is more natural to choose $\Pi$ to be the dual problem called sinkless coloring (since for this problem the output describes a \emph{node} property).
	Sinkless coloring is defined as follows:
	Given a $\Delta$-regular edge $\Delta$-colored graph, each node has to output one of the edge colors such that at least one of the two endpoints of any edge outputs a different color than the color of the edge.

	Due to our choice of having one output per element of $B(G)$, not per node, there is a simple (and canonical) way to encode this problem in our setting, even without an input edge coloring.
	Define $\Pi$ as follows (where an output of $1$ at $(v,e)$ indicates that $v$ chooses the color of $e$).
	Set
	\begin{align*}
		f(\Delta) &:= \OO := \{ 0, 1 \} \enspace,\\
		g(\Delta) &:= \{ \{ 0, 0 \}, \{ 0, 1 \} \} \enspace,\\ 
		h(\Delta) &:= \{ \{ 0, \dots, 0, 1 \} \} \enspace,
	\end{align*}
	where the only element of $h(\Delta)$ is a multiset with $\Delta$ elements.
	Applying our speedup transformation, we obtain for the derived problem $\Pi_{1/2}$
	\begin{align*}
		f_{1/2}(\Delta) &= \OO_{1/2} = \{ \quad \{\}, \quad \{0\}, \quad \{1\}, \quad \{0,1\} \quad \} \enspace,\\
		g_{1/2}(\Delta) &= \{ \quad \{ \{0\}, \{0\} \}, \quad \{ \{0\}, \{1\} \}, \quad \{ \{0\}, \{0,1\} \} \quad \} \enspace,\\
		h_{1/2}(\Delta) &= \{ \{ Y_1, \dots, Y_{\Delta} \} \mid Y_i \in f_{1/2}(\Delta) \textrm{ for all } 1 \leq i \leq \Delta, 1 \in Y_1, 0 \in Y_j \textrm{ for all } 2 \leq j \leq \Delta \} \enspace.
	\end{align*}
	Applying our simplification technique, we see that the only element of $g_{1/2}(\Delta)$ that is maximal is $\{ \{0\} \{0,1\} \}$. Hence, we obtain for $\Pi'_{1/2}$
	\begin{align*}
		f_{1/2}(\Delta) &= \OO_{1/2} = \{ \quad \{0\}, \quad \{0,1\} \quad \} \enspace,\\
		g_{1/2}(\Delta) &= \{ \quad \{ \{0\}, \{0,1\} \} \quad \} \enspace,\\ 
		h_{1/2}(\Delta) &= \{ \{ Y_1, \dots, Y_{\Delta} \} \mid Y_i \in f_{1/2}(\Delta) \textrm{ for all } 1 \leq i \leq \Delta, Y_1 = \{0,1\} \} \enspace.
	\end{align*}
	Writing $0$ for $\{0\}$ and $1$ for $\{ 0, 1 \}$, we can describe $\Pi'_{1/2}$ equivalently by
	\begin{align*}
		f_{1/2}(\Delta) &= \OO_{1/2} := \{ 0, 1 \} \enspace,\\
		g_{1/2}(\Delta) &= \{ \{0,1\} \} \enspace,\\ 
		h_{1/2}(\Delta) &= \{ \{ Y_1, \dots, Y_{\Delta} \} \mid Y_i \in \{0,1\} \textrm{ for all } 1 \leq i \leq \Delta, Y_1 = 1 \} \enspace.
	\end{align*}
	This specification describes exactly the problem of sinkless orientation, where an output of $1$ at $(v,e)$ indicates that $v$ orients $e$ away from itself, and an output of $0$ at $(v,e)$ that $v$ orients $e$ towards itself.
	Now, deriving $\Pi_{1/2}$ leads to the problem $\Pi_1$ specified by
	\begin{align*}
		f_1(\Delta) &= \OO_1 = \{ \quad \{\}, \quad \{0\}, \quad \{1\}, \quad \{0,1\} \quad \} \enspace,\\
		g_1(\Delta) &= \{ \{ Y, Z \} \mid Y, Z \in f_1(\Delta), 0 \in Y, 1 \in Z \} \enspace,\\ 
		h_1(\Delta) &= \{ \{ Y_1, \dots, Y_{\Delta} \} \mid Y_i \in f_1(\Delta)\setminus \{ \emptyset \} \textrm{ for all } 1 \leq i \leq \Delta, Y_1 = \{1\} \} \enspace.
	\end{align*}
	Observing that the only maximal element of $h_1(\Delta)$ is $\{ \{1\}, \{0,1\}, \dots, \{0,1\} \}$, we see that $\Pi'_1$ is given by
	\begin{align*}
		f_1(\Delta) &= \OO_1 = \{ \quad \{1\}, \quad \{0,1\} \quad \} \enspace,\\
		g_1(\Delta) &= \{ \quad \{ \{0,1\}, \{1\} \}, \quad \{ \{0,1\}, \{0,1\} \} \quad \} \enspace,\\ 
		h_1(\Delta) &= \{ \quad \{ \{1\}, \{0,1\}, \dots, \{0,1\} \} \quad \} \enspace.
	\end{align*}
	Writing $0$ for $\{0,1\}$ and $1$ for $\{1\}$, we see that we are back at sinkless coloring, i.e., $\Pi_1 = \Pi$.

	Applying Theorem\ \ref{thm:primespeedup} iteratively, we see that the existence of a $t$-independent graph class $G_{n, \Delta}$ of girth at least $2t+2$ implies that any $t$-round algorithm for sinkless coloring can be transformed into a $0$-round algorithm for sinkless coloring.
	Since the latter does not exist, whereas, for any $\Delta \geq 3$, such high-girth $t$-independent graph classes exist for some $t \in \Omega(\log n)$ (as can be shown using \cite[Chapter III, Theorem $1.4'$]{bollobas78extremal}, or, in the case of an input edge-coloring, \cite[Lemma 9]{Brandtlll}), we obtain an $\Omega(\log n)$ lower bound for sinkless coloring, and therefore also for sinkless orientation, for each $\Delta \geq 3$. 

	\subsection{Warm-up II: Color Reduction}\label{sec:colred}
	In this section, we will give an example of how to use our speedup to obtain a (well-known) upper bound.
	The problem we are interested in is the problem of properly $3$-coloring a ring in the $\LOCAL$ model.
	Cole and Vishkin \cite{DBLP:conf/stoc/ColeV86} and Goldberg et al.\ \cite{DBLP:conf/stoc/GoldbergPS87} gave an upper bound of $O(\logstar n)$ for this problem, which was subsequently proved to be tight by Linial \cite{DBLP:journals/siamcomp/Linial92}.
	The upper bound is based on the idea of interpreting the given unique IDs (which are bitstrings of length $O(\log n)$) as colors and then reduce the number of colors in each round of the distributed computation exponentially.
	Other methods of achieving such an exponential (or doubly exponential) color reduction were devised later (see, e.g., \cite{DBLP:conf/focs/Linial87, DBLP:conf/istcs/MayerNS95, DBLP:conf/stoc/SzegedyV93}).
	
	We show that our speedup provides a doubly exponential color reduction (on rings).
	In order to achieve this result, we make use of the technique for upper bounds outlined in Section\ \ref{sec:apply}: after deriving problem $\Pi_1$ from $\Pi$, we change $\Pi_1$ to a problem that is provably at least as hard as $\Pi_1$ and has a much simpler description.
	As we will see in the following, if $\Pi$ is the problem of $k$-coloring, then the resulting problem is $k'$-coloring for some $k'$ satisfying $k \in O(\log\log k')$.
	We will focus on the case of $k = 4$ before arguing about general $k$. 

	Formally, we can describe $4$-coloring as follows\footnote{Note that we use in the description that the input graphs we are interested in are rings. Since our notion of a problem, however, is independent of the actually considered input graphs, we still have to define the three functions $f, g, h$ for any $\Delta$. The problem descibed by our specification is also defined on graph classes that do not only contain rings, but it might be the case that there is no algorithm that solves the respective realized problem.}.
	\begin{align*}
		f(\Delta) &:= \OO := \{ 1, 2, 3, 4 \} \enspace,\\
		g(\Delta) &:= \{ \{ y, z \} \mid y, z \in \{1,2,3,4\}, y \neq z \} \enspace,\\ 
		h(\Delta) &:= \{ \{ y, y \} \mid 1 \leq y \leq 4 \} \enspace.
	\end{align*}
	For the derived and simplified problem $\Pi'_{1/2}$, we obtain
	\begin{align*}
		f_{1/2}(\Delta) &= \OO_{1/2} = \{ Y \subseteq \{ 1, 2, 3, 4 \} \mid 1 \leq |Y| \leq 3 \} \enspace,\\
		g_{1/2}(\Delta) &= \{ \{ Y, Z \} \mid Y, Z \in f_{1/2}(\Delta), Y \cap Z = \emptyset, Y \cup Z = \{1,2,3,4\} \} \enspace,\\ 
		h_{1/2}(\Delta) &= \{ \{ Y, Z \} \mid Y, Z \in f_{1/2}(\Delta), Y \cap Z \neq \emptyset \} \enspace.
	\end{align*}
	After the following derivation, we do not apply our simplification technique, hence obtaining problem $\Pi_1$ characterized by
	\begin{align*}
		f_1(\Delta) &= \OO_1 = 2^{\OO_{1/2}} \enspace,\\
		g_1(\Delta) &= \{ \{ \YY, \ZZ \} \mid \YY, \ZZ \in f_1(\Delta), \textrm{ and there exist $Y \in \YY, Z \in \ZZ$ with } \{Y,Z\} \in g_{1/2}(\Delta) \} \enspace,\\ 
		h_1(\Delta) &= \{ \{ \YY, \ZZ \} \mid \YY, \ZZ \in f_1(\Delta), \{Y,Z\} \in h_{1/2}(\Delta) \textrm{ for all $Y \in \YY, Z \in \ZZ$ } \} \enspace.
	\end{align*}
	Now we transform $\Pi_1$ into a problem $\Pi_1^*$ that is at least as hard as $\Pi_1$, by removing outputs from $f_1(\Delta)$ while also (potentially) removing elements from $g_1(\Delta)$ and $h_1(\Delta)$.
	More specifically, our new output set $f_1^*(\Delta) = \OO_1^*$ contains exactly the sets $\YY \in f_1(\Delta)$ with the following properties:
	\begin{itemize}
		\item For each $Y \in \YY$, we have $|Y| = 2$. 
		\item For each set $Z \subset \{1,2,3,4\}$ with $|Z|=2$, exactly one of $Z$ and the complement of $Z$ (in $\{1,2,3,4\}$) is an element of $\YY$.
	\end{itemize} 
	We argue that this particular choice for $f_1^*(\Delta)$ satisfies two desirable properties, namely, that $\{ \YY, \ZZ \} \in g_1(\Delta)$ for any $\YY, \ZZ \in f_1^*(\Delta)$ with $\YY \neq \ZZ$, and $\{ \YY, \YY \} \in h_1(\Delta)$ for any $\YY \in f_1^*(\Delta)$.
	For the first property, observe that, for any $\YY, \ZZ \in f_1^*(\Delta)$ with $\YY \neq \ZZ$, there must be some set $Z \subset \{1,2,3,4\}$ with $|Z|=2$ such that $Z$ is contained in exactly one of $\YY$ and $\ZZ$.
	It follows that the other of the two contains the complement of $Z$, which in turn implies that there are elements $Y \in \YY, Z' \in \ZZ$ which are complementary (in $\{1,2,3,4\}$).
	We conclude that $\{Y, Z\} \in g_{1/2}(\Delta)$, which implies $\{ \YY, \ZZ \} \in g_1(\Delta)$.

	For the second property, observe that, for any $\YY \in f_1^*(\Delta)$ and any $Y, Y' \in \YY$, we know that $Y$ and $Y'$ are not complementary.
	Since both $Y$ and $Y'$ are sets of cardinality $2$, it follows that $Y \cap Y' \neq \emptyset$, which in turn implies that $\{Y, Y'\} \in h_{1/2}(\Delta)$.
	Hence, $\{ \YY, \YY \} \in h_1(\Delta)$.

	Let $g_1^*(\Delta)$ be obtained from $g_1(\Delta)$ by removing all contained multisets $\{ \YY, \ZZ \}$ with $\YY = \ZZ$.
	Similarly, let $h_1^*(\Delta)$ be obtained from $h_1(\Delta)$ by removing all contained multisets $\{ \YY, \ZZ \}$ with $\YY \neq \ZZ$.
	Let $k'$ denote the number of elements of $f_1^*(\Delta)$.
	Due to the two properties of $f_1^*(\Delta)$ we just proved and by renaming the elements of $f_1^*(\Delta)$, we can now describe $\Pi_1^*$ as follows.
	\begin{align*}
		f_1^*(\Delta) &= \OO_1^* = \{ 1, \dots, k' \} \enspace,\\
		g_1^*(\Delta) &= \{ \{ y, z \} \mid y, z \in \{ 1, \dots, k' \}, y \neq z \} \enspace,\\ 
		h_1^*(\Delta) &= \{ \{ y, y \} \mid 1 \leq y \leq k' \} \enspace.
	\end{align*}
	We conclude that $\Pi_1^*$ is the $k'$-coloring problem.

	Counting the number of elements of $f_1^*(\Delta)$, we obtain $k' = 2^{\binom{4}{2}/2}$.
	Analogously to the special case of $k=4$, we can derive the problem $\Pi_1^*$ for general even $k \geq 4$ (where $\Pi$ is defined as $k$-coloring).
	A close look at the general version of $f_1^*(\Delta)$ reveals that we then obtain $k' = 2^{\binom{k}{k/2}/2}$, which implies $k' \geq 2^{2^{k/2}}$ if $k \geq 6$.
	Applying our speedup theorems, we conclude that the existence of a $k'$-coloring algorithm implies the existence of a $k$-coloring algorithm that requires at most one additional round.
  	Since $2^{2^{k/2}}$ is a monotone function, this implies an (asymptotically) doubly exponential speedup per round, which in turn implies an $O(\log^* n)$ upper bound for $3$-coloring a ring.
	Note that, since our speedup theorems require $t$-independence, we will not assume that the given IDs are globally unique, but only that any two adjacent nodes have different IDs (which constitutes a coloring).
	However, since this assumption is weaker than assuming globally unique IDs and we are proving an upper bound, the bound directly applies to the $\LOCAL$ model.

	\subsection{Weak $2$-coloring}\label{sec:app2col}
	The goal of this section is to give an intuition for the generalization of weak $2$-coloring we use to prove our lower bound in Section\ \ref{sec:weak}.
	Recall that, in order to apply our speedup technique iteratively without having to look at each problem in the produced problem sequence individually, we would like to start with a problem $\Pi$ for which applying our speedup technique results in some problem that can be relaxed to a problem that is \emph{very similar} to $\Pi$.
	Ideally, the relaxed problem is the same as $\Pi$ except for that some parameters in the problem description are different.
	Also, our initial problem $\Pi$ should be similar to the problem we want to prove a lower bound for, i.e., weak $2$-coloring; the simplest case would be that $\Pi$ is a relaxation of weak $2$-coloring so that lower bounds for $\Pi$ immediately apply to weak $2$-coloring.
	
	In order to obtain some intuition for the effects of our speedup on problems that are similar to weak $2$-coloring, the obvious approach is to apply the speedup to weak $2$-coloring itself, which we do in the following.
	Recall that in the weak $2$-coloring (resp.\ weak $k$-coloring) problem, each node has to output a color from, say, the set $\{1,2\}$ (resp.\ $\{1, \dots, k\}$), such that there is at least one neighbor with a different color (or the node has no neighbors).
	Note that this formulation of the problem is not even edge-checkable; before we can apply our speedup we first have to bring weak $2$-coloring in a form that adheres to our definition of a problem.

	\paragraph{Problem $\Pi$}
	Consider the following problem $\Pi$:
	Each node has to output a color from $\{ 1, 2 \}$  and a pointer to an adjacent node.
	The output is correct if each node $v$ points to a node that has a different color than $v$.
	Formally, we can describe $\Pi$ by setting
	\begin{align*}
		f(\Delta) &:= \OO := \{ 1, 2 \} \times \{ \rightarrow, \bullet \} \enspace,\\ 
		g(\Delta) &:= \{ \{ (y,y'), (z,z') \} \mid y,z\in \{1,2\}, y',z'\in\{ \rightarrow, \bullet\}, y \neq z \veee y'=\bullet =z' \} \enspace,\\
		h(\Delta) &:= \{ \{ (y_1,y'_1), \dots, (y_{\Delta},y'_{\Delta}) \} \mid y_1,\dots,y_{\Delta}\in \{1,2\}, y_1 = \dots = y_{\Delta}, y'_1 =\hspace{0.4em} \rightarrow , y'_2 = \dots = y'_{\Delta} = \bullet \} \enspace.
	\end{align*}
	Note that, in the formal description, we use the fact that we are only interested in $\Delta$-regular graphs since we will prove our lower bound in Section\ \ref{sec:weak} already for this class of graphs.

	Clearly, $\Pi$ satisfies our definition of a problem; furthermore, we argue that if there is an algorithm solving weak $2$-coloring, then it can be transformed into an algorithm for $\Pi$ that requires only one additional round.
	This is easy to see: each node can learn the color of each adjacent node (according to the algorithm for weak $2$-coloring) in this additional round and then point to a node of different color.
	We call $\Pi$ the \emph{pointer version} of weak $2$-coloring; however, in this section we will simply refer to $\Pi$ as weak $2$-coloring (as it is essentially the same problem).

	\paragraph{Problem $\Pi'_{1/2}$}
	In the following, we derive problem $\Pi'_{1/2}$ from $\Pi$.
	According to the process of deriving $\Pi'_{1/2}$, described in Sections\ \ref{sec:thethe} and\ \ref{sec:maxisim}, we obtain $f_{1/2}(\Delta) = \OO_{1/2} = 2^{\{ 1, 2 \} \times \{ \rightarrow, \bullet \}}$.
	Furthermore, $h_{1/2}(\Delta)$ consists of all multisets $\{ Y_1, \dots, Y_{\Delta} \}$ such that A) each $Y_j$ is a subset of $\{ 1, 2 \} \times \{ \rightarrow, \bullet \}$, and B) there is some $c \in \{ 1, 2 \}$ such that $(c, \rightarrow) \in Y_1$ and $(c, \bullet) \in Y_i$ for all $2 \leq i \leq \Delta$.
	The most interesting part of the definition of $\Pi'_{1/2}$ is $g_{1/2}(\Delta)$ as Property\ \ref{prop:replacematch} comes into play here.
	Observe that for each multiset $\{Y,Z\}$ contained in $g_{1/2}(\Delta)$, it holds that if $(c,\rightarrow) \in Y$, then both $(c, \rightarrow)$ and $(c, \bullet)$ are not contained in $Z$.
	Taking the maximality constraint of Property\ \ref{prop:replacematch} into account, we obtain that $g_{1/2}(\Delta)$ contains exactly the multisets $\{ Y, Z \}$ where
	\begin{alignat*}{2}
		Y &= \{ (1,\rightarrow), (1,\bullet), (2,\rightarrow), (2,\bullet) \}, \qquad &&Z = \{ \}, \textrm{ or}\\
		Y &= \{ (1,\rightarrow), (1,\bullet) \}, &&Z = \{ (2,\rightarrow), (2,\bullet) \}, \textrm{ or}\\
		Y &= \{ (1,\rightarrow), (1,\bullet), (2,\bullet) \}, &&Z = \{ (2,\bullet) \}, \textrm{ or}\\
		Y &= \{ (2,\rightarrow), (2,\bullet), (1,\bullet) \}, &&Z = \{ (1,\bullet) \}, \textrm{ or}\\
		Y &= \{ (1,\bullet), (2,\bullet) \}, &&Z = \{ (1,\bullet), (2,\bullet) \}.
	\end{alignat*}

	\noindent Note that the definition of $g_{1/2}(\Delta)$ allows for one of $Y$ and $Z$ being empty; however, our function $h_{1/2}(\Delta)$ ensures that an empty set cannot be chosen as output, thereby ruling out the first listed row as a possibility for outputs $Y, Z$ on some edge.
	Hence, as we can see from the description of $g_{1/2}(\Delta)$, there are only $7$ outputs that can be used by any correct algorithm for $\Pi'_{1/2}$.
	This indicates that there may be a simpler, or more accessible, (equivalent) description of $\Pi'_{1/2}$, and indeed there (arguably) is, as given in the following.

	\paragraph{An Equivalent Description}
	Set $f_{1/2}(\Delta) := \OO_{1/2} := \{ 01, 02, 10, 11, 12, 20, 21 \}$, i.e., the set of allowed outputs is the set of all trit\footnote{A trit is the ternary analogue of a bit.} sequences of length $2$ excluding the two sequences $00$ and $22$.
	For any trit sequence $a=a_1\dots a_k$ of length $k$, any $0 \leq i \leq 2$, and any $1 \leq j \leq k$, we say that \emph{$a$ has an $i$ at position $j$} if $a_j = i$.
	Define $h_{1/2}(\Delta)$ to consist of all multisets of cardinality $\Delta$ of trit sequences from $f_{1/2}(\Delta)$ such that there is an index $1 \leq j \leq 2$ such that (at least) one of the trit sequences in the multiset has a $2$ at position $j$ and none of the trit sequences has a $0$ at position $j$.
	Furthermore, define $g_{1/2}(\Delta)$ to be the set of multisets $\{ a, a' \}$ of trit sequences from $f_{1/2}(\Delta)$ such that the tritwise addition of $a$ and $a'$ yields the trit sequence $22$.
	For instance, the multiset $\{ 02, 11, \dots, 11, 12, 21\}$ of cardinality $\Delta$ is an element of $h_{1/2}(\Delta)$ (pick $j=2$), and the multiset $\{ 01, 21 \}$ is an element of $g_{1/2}(\Delta)$.
	
	Why is this an equivalent description for $\Pi'_{1/2}$?
	Consider mapping the set of the ``usable" $7$ outputs in the initial description to our new output set $\OO_{1/2}$.
	More specifically, map each output $Y$ to the trit sequence $a_1a_2$ that satisfies that $a_j$ is equal to the number of elements from $\{ (j, \rightarrow), (j, \bullet) \}$ contained in $Y$. 
	For instance, $Y = \{ (2,\rightarrow), (2,\bullet), (1,\bullet) \}$ is mapped to the trit sequence $12$, whereas $Y = \{ (1,\rightarrow), (1,\bullet) \}$ is mapped to $20$.
	Clearly, this mapping is a bijection.
	Note that if $Y$ is mapped to a trit sequence $a_1a_2$ satisfying $a_j = 1$ for some $1 \leq j \leq 2$, then $(j, \bullet)$ is contained in $Y$, whereas $(j, \rightarrow)$ is not.
	Using this observation, it is straightforward to check that the two functions $h_{1/2}(\Delta)$ in the two descriptions are equivalent.
	In order to see that the same holds for $g_{1/2}(\Delta)$, it is sufficient to observe that in the initial description of $\Pi'_{1/2}$, for each multiset $(Y,Z)$ contained in $g_{1/2}(\Delta)$, the set $Y \cup Z$ contains exactly two elements where the first entry is $1$, and exactly two elements, where the first entry is $2$.
	Hence, our two definitions describe the same problem.
	We will use the second, arguably simpler description of $\Pi'_{1/2}$ for the next step on our agenda, i.e., for the task of deriving $\Pi'_1$ from $\Pi'_{1/2}$.
	
	\paragraph{Problem $\Pi'_1$}
	According to the derivation process for $\Pi'_1$, we obtain $f_1(\Delta) = \OO_1 = 2^{\OO_{1/2}}$.
	Furthermore, $g_1(\Delta)$ consists of all multisets $\{ W, X \}$ such that A) $W, X \subseteq \{ 01, 02, 10, 11, 12, 20, 21 \}$, and B) there exist elements $w \in W, x \in X$ such that the tritwise sum of $w$ and $x$ is $22$.
	Finally, $h_1(\Delta)$ consists of all multisets $\{ W_1, \dots, W_{\Delta} \}$ of sets $W_i \subseteq \{ 01, 02, 10, 11, 12, 20, 21 \}$  such that A) for any $w_1 \in W_1, \dots, w_{\Delta} \in W_{\Delta}$, there is an index $1 \leq j \leq 2$ such that (at least) one $w_i$ has a $2$ at position $j$ and none of the $w_i$ has a $0$ at position $j$, and B) the multiset is maximal with Property A), i.e., if we add a new element from $\{ 01, 02, 10, 11, 12, 20, 21 \}$ to some arbitrary $W_i$ in the multiset, then the multiset does not satisfy Property A) anymore.

	While the description of $g_1(\Delta)$ provides a good intuition for which $2$-element multisets are actually contained in $g_1(\Delta)$, the maximality condition for $h_1(\Delta)$ makes it harder to see the same for $h_1(\Delta)$.
	In fact, $h_1(\Delta)$ is an excellent example for the power of the simplification technique introduced in Section\ \ref{sec:maxisim}: with some work, it is possible to show that $h_1(\Delta)$ actually contains only $9$ elements (or fewer if $\Delta$ is very small)!
	As the goal of this section is merely to provide intuition for our lower bound approach in Section\ \ref{sec:weak}, we will only highlight the important observations that can be made by examining those $9$ elements.

	\paragraph{Problem $\Pi^*_1$}
	As outlined in the beginning of this section, we would like to relax $\Pi'_1$ to a problem $\Pi^*_1$ that is similar to weak coloring and has a simpler description.
	A very natural idea is to try to relax $\Pi'_1$ to weak $9$-coloring as follows.
	Map each of the $9$ elements of $h_1(\Delta)$ to a different color and then show that any algorithm $\AAA$ for $\Pi'_1$ also solves weak $9$-coloring in the following way: Each node $v$ executes $\AAA$ to determine the outputs at all $(v,e)$, then checks which element of $h_1(\Delta)$ the outputs constitute together, and finally outputs the color corresponding to the element of $h_1(\Delta)$ at each $(v,e)$.
	On top of that, each node (somehow) infers from its output according to $\AAA$ an adjacent node it can safely point to (i.e. an adjacent node that must have a different color).
	
	While for most of the $9$ elements in $h_1(\Delta)$ this approach can be made to work\footnote{In fact, using several speedup-related tricks, it is possible to relax $\Pi'_1$ to to weak $11$-coloring; however, these tricks seem to apply only for our very specific case of $\Pi$ being weak $2$-coloring and do not generalize to weak $k$-coloring or similar problems.}, there is one element in $h_1(\Delta)$ that some node $v$ and \emph{all} its neighbors might ``output" (if we gather all partial outputs of the respective node according to algorithm $\AAA$).
	In other words, the envisioned solution of $\AAA$ for weak $9$-coloring can make $v$ and all its neighbors output the same color, which makes it impossible for $v$ to output a correct pointer.
	However, this particular element of $h_1(\Delta)$ (recall that such an element is a multiset of cardinality $\Delta$) has a very special form: we can write it as $Q = \{ Q_1, Q_2, Q_3, Q_4, \dots, Q_4 \}$, where $\{Q_1, Q_3\}$ and $\{Q_2, Q_3\}$ are the only two elements of $g_1(\Delta)$ that involve $Q_1$ or $Q_2$ (i.e., if the outputs of two adjacent nodes $u, v$ according to $\AAA$ both correspond to $Q$ and the output at $(u,e)$ is $Q_1$ or $Q_2$, then the output at $(v,e)$ must be $Q_3$, where $e=\{u,v\}$).
	
	This leads to the idea of modifying our weak $9$-coloring problem slightly: Instead of having to output one pointer that has to point to a differently colored node, a node $v$ can also choose to output two pointers and one ``inverse pointer" (at three different $(v,e)$), where a (non-inverse) pointer has to point to a differently colored node or to an inverse pointer.
	For this modified problem, any node $v$ has a simple way to infer a correct output if its output according to $\AAA$ corresponds to $Q$: node $v$ simply outputs a pointer at the two $(v,e)$ where $\AAA$ outputs $Q_1$ or $Q_2$, and an inverse pointer at the one $(v,e)$ where $\AAA$ outputs $Q_3$.
	
	As we will see (in Lemma\ \ref{lem:hall}), this particular property of $Q$ is not an isolated case but \emph{characteristic} for problems $\Pi'_1$ that are derived from the problems we generalize weak $2$-coloring to.
	More precisely, for any such $\Pi'_1$ and any $Q$ in the corresponding $h_1(\Delta)$, we can essentially write $Q$ as $\{ Q_1, \dots Q_i, Q'_1, \dots, Q'_j, Q''_1, \dots, Q''_{\ell}\}$ such that 1) $i > j$, and 2) if two nodes $u,v$ connected by an edge $e$ both output $Q$, and $u$ outputs some element from $\{ Q_1, \dots Q_i\}$ at $(u,e)$, then $v$ has to output some element from $\{ Q'_1, \dots, Q'_j \}$ at $(v,e)$ (by the definition of the corresponding $g_1(\Delta)$).
	This insight suggests the generalization of weak $2$-coloring that we present in Section\ \ref{sec:weak}.

	\section{A Tight Lower Bound for Odd-Degree Weak $2$-Coloring}\label{sec:weak}
	This section is devoted to proving a lower bound for weak $2$-coloring on odd-degree graphs that matches the upper bound presented by Naor and Stockmeyer \cite{DBLP:conf/stoc/NaorS93}.
	In Section\ \ref{sec:speele}, we introduce the problem of superweak $k$-coloring, which is a generalization of weak $2$-coloring, and prove a speedup lemma for this new problem.
	More precisely, we show that we can save one round in the runtime if we increase the parameter $k$ in the problem definition exponentially (a constant number of times).
	Subsequently, in Section\ \ref{sec:lowerbound}, we will use the speedup lemma to prove our lower bound.

	\subsection{A Speedup Lemma for Superweak $k$-Coloring}\label{sec:speele}
	We start this section by defining superweak $k$-coloring and applying our speedup to it, resulting in a problem $\Pi'_1$.
	By showing (Lemma\ \ref{lem:zeroround}) that, for some sufficiently large $k' > k$, superweak $k'$-coloring is a relaxed version of $\Pi'_1$ (i.e., superweak $k'$-coloring can be solved at least as fast as $\Pi'_1$), we can infer our speedup lemma (Lemma\ \ref{lem:supersuper}).

	Our main technical ingredient to prove Lemma\ \ref{lem:zeroround} is a statement (Lemma\ \ref{lem:hall}) that, roughly speaking, ensures\footnote{Note that this is essentially a reformulation of the insight obtained at the very end of Section\ \ref{sec:app2col}.} that for each output of a node $v$ executing $\Pi'_1$ (i.e., for each collection of the partial outputs at the $(v,e)$ for a fixed node $v$), we can select two disjoint sets $E'_v$ and $E''_v$ of edges incident to $v$ such that $|E'_v| > |E''_v|$ and the following property is satisfied:
	If two adjacent nodes $u, v$ give the same output and the connecting edge $e$ is contained in $E'_u$, then $e$ must also be contained in $E''_v$.
	The edges in $E'_v$, resp.\ $E''_v$, will be exactly those edges where $v$ outputs the demanding, resp.\ accepting, pointers specified in the definition of superweak $k$-coloring.
	
	In order to prove Lemmas\ \ref{lem:hall} and\ \ref{lem:zeroround}, we require the existence of a structural property of the elements in the set $h_1(\Delta)$ (defining $\Pi'_1$), which is provided by Lemma\ \ref{lem:infty}.
	In particular, this property is responsible for bounding the increase from $k$ to $k'$.

	\paragraph{Superweak $k$-Coloring}
	Let $k \geq 2$ be some integer.
	We define the problem of superweak $k$-coloring as follows (using the insights obtained in Section\ \ref{sec:app2col}).
	Each node has to output a color from $\{ 1, \dots, k \}$  and a number of pointers to adjacent nodes.
	There are two kinds of pointers, \emph{demanding} pointers and \emph{accepting} pointers.
	A node is not allowed to point to the same neighbor with two pointers; furthermore, the number of demanding pointers a node uses must be strictly greater than the number of accepting pointers it uses (the latter can be $0$).
	The number of accepting pointers a node is allowed to use is bounded from above by $k$. 
	The output is correct if for each demanding pointer from a node $v$ to a node $u$, the two nodes have different colors or $u$ has an accepting pointer pointing to $v$.
	For an illustration, see Figure \ref{fig:super}.
	\begin{figure}
		\centering
		\includegraphics[scale=1.1]{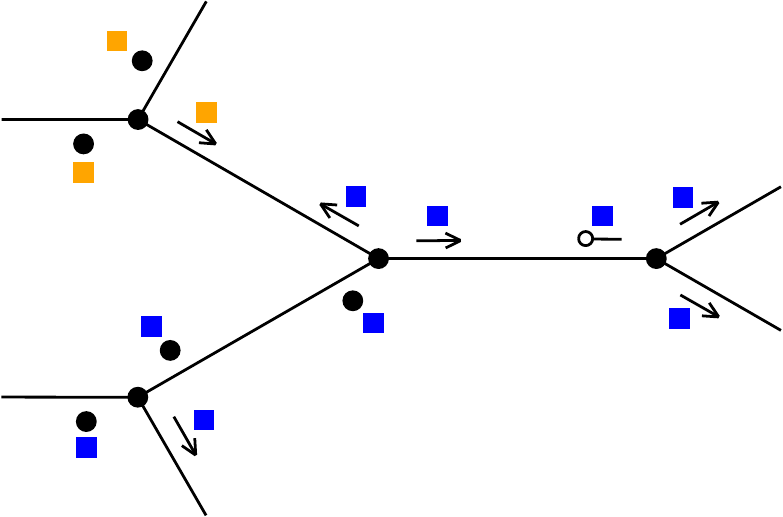}
		\caption{A locally correct output for superweak $k$-coloring, where $k \geq 2$ and $\Delta = 3$: Each node outputs the same color on all incident edges and strictly more demanding pointers than accepting pointers, and the number of accepting pointers per node is at most $k$. On each edge, we have different colors, no demanding pointer, or a demanding pointer and an accepting pointer.}
		\label{fig:super}
	\end{figure}
	
	Formally, we define superweak $k$-coloring as follows, again using that we are only interested in regular graphs (since our lower bound will hold already for regular graphs).
	Set
	\begin{align*}
		f(\Delta) &:= \OO := \{ 1, \dots, k \} \times \{ \rightarrow, \multimap, \bullet \} \enspace,\\ 
		g(\Delta) &:= \{ \{ (y, y'), (z, z') \} \mid y,z\in \{1, \dots, k\}, y',z'\in\{ \rightarrow, \multimap, \bullet\}, y \neq z \veee y'=\bullet =z' \veee \multimap \hspace{0.3em} \in \{ y', z' \} \} \enspace,\\
		h(\Delta) &:= \{ \{ (y_1,y'_1), \dots, (y_{\Delta},y'_{\Delta}) \} \mid y_1,\dots,y_{\Delta}\in \{1, \dots, k\}, y'_1,\dots,y'_{\Delta}\in\{ \rightarrow, \multimap, \bullet\}, y_1 = \dots = y_{\Delta},\\
		& \qquad \min \{ k+1, |\{ 1 \leq i \leq \Delta \mid y_i = \hspace{0.3em} \rightarrow \}| \} > |\{ 1 \leq i \leq \Delta \mid y_i = \hspace{0.3em} \multimap \}| \hspace{0.2em} \} \enspace.
	\end{align*}

	\paragraph{Problem $\Pi'_{1/2}$}
	Now we examine how the derived problem $\Pi'_{1/2}$ looks like, if we define $\Pi$ to be superweak $k$-coloring.
	We obtain $f_{1/2}(\Delta) = \OO_{1/2} = 2^{\{ 1, \dots, k \} \times \{ \rightarrow, \multimap, \bullet \}}$.
	Furthermore, $h_{1/2}(\Delta)$ consists of all multisets $\{ Y_1, \dots, Y_{\Delta} \}$ such that A) each $Y_j$ is a subset of $\{ 1, \dots, k \} \times \{ \rightarrow, \multimap, \bullet \}$, and B) there are some $c \in \{ 1, \dots, k \}$ and some partition $U_1 \mathop{\dot{\cup}} U_2 \mathop{\dot{\cup}} U_3 = \{ 1, \dots, \Delta \}$ such that $(c, \rightarrow) \in Y_i$ if $i \in U_1$, $(c, \multimap) \in Y_i$ if $i \in U_2$, $(c, \bullet) \in Y_i$ if $i \in U_3$, $|U_1| > |U_2|$, and $|U_2| \leq k$.
	Similar to the case of $\Pi'_{1/2}$ being derived from weak $2$-coloring, observe that for each multiset $\{Y,Z\}$ contained in $g_{1/2}(\Delta)$, it holds that if $(c,\rightarrow) \in Y$, then both $(c, \rightarrow)$ and $(c, \bullet)$ are not contained in $Z$, whereas $(c, \multimap)$ is always contained in both $Y$ and $Z$ for all $1 \leq c \leq k$.
	Again taking the maximality constraint of Property\ \ref{prop:replacematch} into account, we obtain that $g_{1/2}(\Delta)$ contains exactly the multisets $\{ Y, Z \}$ with the property that, for each $1 \leq c \leq k$, we have
	\begin{alignat*}{2}
		Y^c &= \{ (c,\rightarrow), (c, \multimap) (c,\bullet) \}, \qquad &&Z^c = \{ (c, \multimap) \}, \textrm{ or}\\
		Y^c &= \{ (c, \multimap), (c,\bullet) \}, &&Z^c = \{ (c, \multimap), (c,\bullet) \},
	\end{alignat*}
	where $Y^c$ and $Z^c$ denote the subset of $Y$ and $Z$, respectively, consisting of all elements where the first entry is $c$.
	
	\paragraph{An Equivalent Description}
	Analogously to our approach in the case of $\Pi$ being weak $2$-coloring, we can give a different description of the same problem:
	Define $f_{1/2}(\Delta) = \OO_{1/2}$ to be the set of all trit sequences of length $k$.
	Define $h_{1/2}(\Delta)$ to consist of all multisets of cardinality $\Delta$ of trit sequences of length $k$ such that there is an index $1 \leq j \leq k$ such that, in the multiset, the number of trit sequences that have a $2$ at position $j$ is strictly greater than the number of trit sequences that have a $0$ at position $j$, and there are at most $k$ trit sequences that have a $0$ at position $j$.
	Finally, define $g_{1/2}(\Delta)$ to be the set of multisets $\{ a, a' \}$ of trit sequences of length $k$ such that the tritwise addition of $a$ and $a'$ yields the trit sequence $22\dots 2$.

	In order to see that this new description is equivalent to the old one, we can use the same argumentation as in the analogous case: essentially, we map $\{ (c, \multimap) \}$ to a $0$ at position $c$, $\{ (c, \multimap), (c,\bullet) \}$ to a $1$, and $\{ (c,\rightarrow), (c, \multimap),   (c,\bullet) \}$ to a $2$, and then observe that the constraints are the same in both descriptions.
	
	\paragraph{Problem $\Pi'_1$}
	Unavoidably, the next step is to derive $\Pi'_1$ from $\Pi'_{1/2}$.
	We obtain $f_1(\Delta) = \OO_1 = 2^{\OO_{1/2}}$.
	Furthermore, $g_1(\Delta)$ consists of all multisets $\{ W, X \}$ such that A) $W, X$ are sets of trit sequences of length $k$, and B) there exist elements $w \in W, x \in X$ such that the tritwise sum of $w$ and $x$ is $22\dots 2$.
	Finally, $h_1(\Delta)$ consists of all multisets $\{ W_1, \dots, W_{\Delta} \}$ of sets $W_i$ of trit sequences of length $k$ such that A) for any $w_1 \in W_1, \dots, w_{\Delta} \in W_{\Delta}$, there is an index $1 \leq j \leq k$ such that the number of $w_i$ that have a $2$ at position $j$ is strictly greater than the number of $w_i$ that have a $0$ at position $j$, and there are at most $k$ many $w_i$ that have a $0$ at position $j$, and B) the multiset is maximal with Property A), i.e., if we add a new trit sequence of length $k$ to some arbitrary $W_i$ in the multiset, then the multiset does not satisfy Property A) anymore.
	
	The following lemma highlights a structural aspect of the elements in $h_1(\Delta)$.
	
	\begin{lemma}\label{lem:infty}
		Let $Q = \{ Q_1, \dots, Q_{\Delta} \}$ be an arbitrary multiset from $h_1(\Delta)$, and assume that $\Delta \geq 2^{4^k+1}$.
		Then there is a unique element $P_{\infty} \in Q$ that has multiplicity at least $\Delta - 2^{4^k}$ and contains the trit sequence $11\dots 1$.
	\end{lemma}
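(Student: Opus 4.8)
The plan is to split the statement into a trivial uniqueness part and a substantial existence part. For uniqueness, observe that two distinct elements of $Q$ of multiplicity at least $\Delta-2^{4^k}$ would contribute more than $\Delta$ to the total multiplicity as soon as $\Delta-2^{4^k}>\Delta/2$; moreover the structural analysis sketched below will show that all but $\poly(k)$ of the $Q_i$ coincide, which is far below $\Delta/2$ and far below $2^{4^k}$ (the displayed bound is deliberately lavish). Hence it suffices to exhibit a single set-value $P_\infty$ that occurs at least $\Delta-2^{4^k}$ times among $Q_1,\dots,Q_\Delta$, and to check that the all-ones sequence $11\dots 1$ lies in $P_\infty$.

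The engine of the whole argument is a per-position bound. First I would rewrite what the defining condition of $h_1(\Delta)$ (``every transversal lies in $h_{1/2}(\Delta)$'') forbids: an adversary choosing $w_i\in Q_i$ defeats it exactly when, for \emph{every} position $j\in\{1,\dots,k\}$, the chosen multiset either \emph{floods} position $j$ (at least $k+1$ of the $w_i$ carry a $0$ at position $j$) or \emph{balances} it (at position $j$ the number of $w_i$ carrying a $2$ is at most the number carrying a $0$); since $Q\in h_1(\Delta)$, no such choice exists. From this I would extract the key fact: for each position $j$, at most $k$ of the sets $Q_i$ contain a trit sequence with a $0$ at position $j$. (If $k+1$ of them did, the adversary could flood position $j$; one must then argue that the remaining, overwhelmingly many, indices can be committed so as to balance or flood every other position as well --- this is where the non-existence of bad transversals, together with maximality of $Q$, is spent.) Two consequences are immediate: since $P_\infty$ will have multiplicity exceeding $k$, no sequence of $P_\infty$ carries a $0$, i.e.\ $P_\infty\subseteq\{1,2\}^k$; and the number of indices $i$ with $Q_i\not\subseteq\{1,2\}^k$ is at most $k^2$. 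It then remains to bound the number of ``$0$-free but non-generic'' indices, which is where maximality of $Q$ is used head-on (if too many indices $i$ had $Q_i$ properly below the generic value, that generic value could still be enlarged somewhere), and to identify the common value $P_\infty$ of the remaining indices; crude accounting of all the cases absorbs the total deviation into the stated $2^{4^k}$.

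It remains to show $11\dots 1\in P_\infty$. Suppose not. Since $P_\infty\subseteq\{1,2\}^k$ and $11\dots 1$ is the only $0$-free sequence with no $2$, every sequence of $P_\infty$ carries a $2$ somewhere. By maximality of $Q$, adjoining $11\dots 1$ to one copy of $P_\infty$, say at index $i^\star$, destroys the $h_1$-condition, so there is a bad transversal $t$ of the enlarged multiset that uses $11\dots 1$ at $i^\star$. But any transversal of the enlarged multiset that uses $11\dots 1$ at $i^\star$ still has at least $\Delta-2^{4^k}-1$ coordinates taking values in $P_\infty$, each of which places a $2$ (and never a $0$) in one of only $k$ positions; by pigeonhole some position $j^\star$ collects at least $(\Delta-2^{4^k}-1)/k$ many $2$'s and, by the per-position bound, at most $k$ zeros. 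Since $\Delta\ge 2^{4^k+1}$ forces $(\Delta-2^{4^k}-1)/k$ to exceed $k$, position $j^\star$ is good for $t$, so $t$ is not bad --- contradicting the choice of $t$. Hence $11\dots 1\in P_\infty$.

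I expect the real difficulty to be the per-position bound in the second paragraph: because each choice $w_i\in Q_i$ simultaneously fixes trits at all $k$ positions, the positions cannot be treated independently, and one must show that an adversary who keeps one position over quota is thereby forced to let some other position become good --- this coupling is precisely where the $h_1$-condition does its work, and it must be dovetailed with the maximality of $Q$, which is what promotes ``same trit-profile at each position'' to ``same set'' and, in the final paragraph, renders the assumption $11\dots 1\notin P_\infty$ untenable. Everything past that bound is bookkeeping, and the exponential-looking constant $2^{4^k}$ is merely a convenient over-estimate for the number of exceptional indices.
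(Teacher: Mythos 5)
Your central claim --- the ``per-position bound'' asserting that for each position $j$ at most $k$ of the sets $Q_i$ contain a trit sequence with a $0$ at position $j$ --- is false, and since you explicitly build everything else on it (the inclusion $P_\infty\subseteq\{1,2\}^k$, the $k^2$ bound on exceptional indices, and the ``at most $k$ zeros at $j^\star$'' step in the final paragraph), the argument does not go through. A counterexample: take $k+1$ of the $Q_i$ equal to $\{022\dots 2\}$ (a $0$ at position $1$, $2$'s elsewhere) and the remaining $\Delta-k-1$ equal to $\{11\dots 1\}$. The unique transversal has $k+1>k$ zeros at position $1$, but position $2$ is good ($k+1$ twos, zero zeros), so Property A) holds; extending to a maximal multiset only enlarges the sets, so the resulting element of $h_1(\Delta)$ has $k+1$ sets containing a sequence with a $0$ at position $1$. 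The derived claim $P_\infty\subseteq\{1,2\}^k$ also fails: e.g.\ for $k=2$, one set equal to $\{22\}$ and $\Delta-1$ sets equal to $\{11,02\}$ satisfies Property A), and its maximalization has $P_\infty\ni 02$. You yourself flag the per-position bound as ``the real difficulty'' and only gesture at how the $h_1$-condition and maximality might yield it; since the statement is simply not true, no amount of dovetailing will produce it, and the remaining ``bookkeeping'' has nothing to stand on.

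The paper's proof avoids any per-position structure theory. It lets $P_1,\dots,P_{i'}$ be the elements of $Q$ of multiplicity at least $(k+1)\cdot 3^k$, replaces all but $(k+1)\cdot 3^k$ occurrences of each by the union $P=\bigcup_i P_i$, and shows the resulting multiset $Q'$ still satisfies Property A) by converting any bad transversal of $Q'$ into a bad transversal of $Q$ (matching multiplicities of individual trit sequences, with the pigeonhole principle supplying enough copies; the key observation is that each sequence either keeps the same multiplicity $\le k$ in both transversals or has multiplicity $\ge k+1$ in both, which preserves badness). Maximality then forces $Q=Q'$, so at most one element has multiplicity exceeding $(k+1)\cdot 3^k$, and a count of distinct sets gives the $\Delta-2^{4^k}$ bound. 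The membership $11\dots 1\in P_\infty$ is likewise obtained by a swap-and-compare argument (replace $11\dots 1$ in a bad transversal of the augmented multiset by a high-multiplicity element of $P_\infty$), not by exhibiting a good position directly. If you want to salvage your outline, you would need to replace the per-position bound by this kind of multiplicity-preserving substitution argument.
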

	
	\begin{proof}
		Let $\{ P_1, \dots, P_{i'} \}$ be the set of all elements (i.e., of all sets of trit sequences of length $k$) of $Q$ that occur in $Q$ with multiplicity at least $(k+1) \cdot 3^k$.
		Set $P := \bigcup_{i=1}^{i'} P_i$.
		Let $Q' = \{Q'_1, \dots, Q'_{\Delta}\}$ be the multiset obtained by replacing in $Q$ all occurrences of $P_i$, except exactly $(k+1) \cdot 3^k$ many, by $P$, for each $1 \leq i \leq i'$.
		Then the multiplicity of each $P_i$ in $Q'$ is exactly $(k+1) \cdot 3^k$.
		We claim that $Q' \in h_1(\Delta)$.

		For a contradiction, assume that $Q' \notin h_1(\Delta)$.
		Since $Q \in h_1(\Delta)$ (and $P \supseteq P_i$ for any $i$), it is not possible that $Q'$ violates \emph{only} Property B) in the description of $h_1(\Delta)$.
		Hence, $Q'$ violates Property A) in that description, which implies that there are $q'_1 \in Q'_1, \dots, q'_{\Delta} \in Q'_{\Delta}$ such that, for each index $1 \leq j \leq k$, the number of $q'_{\ell}$ (where $1 \leq \ell \leq \Delta$) that have a $0$ at position $j$ is greater than $k$ or at least as large as the number of $q'_{\ell}$ that have a $2$ at position $j$.
		
		For each $P_i$, let $p_i \in P_i$ be a trit sequence with multiplicity at least $k+1$ in the multiset $\{ q'_1, \dots, q'_{\Delta} \}$.
		Since each $P_i$ has multiplicity $(k+1) \cdot 3^k$ in $Q'$ (and the number of distinct trit sequences of length $k$ is $3^k$), such $p_i$ exist, by the pidgeonhole principle.
		Now consider trit sequences $q_1 \in Q_1, \dots, q_{\Delta} \in Q_{\Delta}$ specified as follows:

		First, for each index $\ell$ that satisfies $Q_{\ell} = Q'_{\ell}$ and $Q_{\ell} \neq P_i$ for all $1 \leq i \leq i'$, set $q_{\ell} := q'_{\ell}$.
		Then, for each trit sequence $p \in P$ that has multiplicity $m_p \leq k$ in $\{ q'_1, \dots, q'_{\Delta} \}$, first compute $m'_p := m_p - p^*$ where $p^*$ is the number of $q_{\ell}$ already set to $p$ in the first step, then pick both an arbitrary $P_i$ containing $p$ and $m'_p$ many so far unused indices $\ell$ with $Q_{\ell} = P_i$, and set $q_{\ell} := p$ for these indices.
		Then, for each trit sequence $p \in P$ that has multiplicity $m_p \geq k+1$ in $\{ q'_1, \dots, q'_{\Delta} \}$, pick both an arbitrary $P_i$ containing $p$ and $k+1$ many so far unused indices $\ell$ with $Q_{\ell} = P_i$, and set $q_{\ell} := p$ for these indices.
		Finally, for each remaining unused index $\ell$, which must necessarily satisfy $Q_{\ell} = P_i$ for some $1 \leq i \leq i'$, set $q_{\ell} := p_i$ (where $p_i$ is as specified above).

		Since each $P_i$ has multiplicity $(k+1) \cdot 3^k$ in $Q'$, such $q_1, \dots, q_{\Delta}$ exist, again by the pidgeonhole principle.
		Consider the two multisets $\{ q_1, \dots, q_{\Delta} \}$ and $\{ q'_1, \dots, q'_{\Delta} \}$.
		Observe that each trit sequence of length $k$ has either multiplicity at least $k+1$ in both multisets, or the same multiplicity $\leq k$ in both multisets.
		Hence, for each $1 \leq j \leq k$, the number of $q_{\ell}$ that have a $0$ at position $j$ is the same as the number of $q'_{\ell}$ that have a $0$ at position $j$, or both numbers are at least $k+1$.
		For analogous reasons, the statement still holds if we replace each $0$ by a $2$.
		Taking a close look at Property A), we conclude that if Property A) fails for $Q'$ due to choosing $q'_1 \in Q'_1, \dots, q'_{\Delta} \in Q'_{\Delta}$, then Property A) also fails for $Q$ due to choosing $q_1 \in Q_1, \dots, q_{\Delta} \in Q_{\Delta}$.
		Since we already established the former, we obtain a contradiction to $Q \in h_1(\Delta)$.
		Thus, $Q' \in h_1(\Delta)$.
		
		As $Q'_{\ell} \supseteq Q_{\ell}$ for each $1 \leq \ell \leq \Delta$ and $Q, Q' \in h_1(\Delta)$, we see that $Q = Q'$, by Property B) in the description of $h_1(\Delta)$.
		Hence, there is at most one element in $Q$ with multiplicity greater than $(k+1) \cdot 3^k$.
		Since $Q$ contains at most $2^{3^k}$ distinct elements, it follows that there is a unique element $P_{\infty} \in Q$ with multiplicity at least\footnote{Note that, for the sake of simplicity, we do not attempt to prove the sharpest bounds possible, but rather bounds that are convenient to work with.} $\Delta - 2^{4^k} \leq \Delta - (k+1) \cdot 3^k \cdot 2^{3^k}$, due to $k \geq 2$ and our assumption that $\Delta \geq 2^{4^k+1}$.

		In order to see that $P_{\infty}$ contains $11\dots 1$, assume for a contradiction that it does not, and define $Q'' = \{Q''_1, \dots, Q''_{\Delta}\}$ to be the multiset obtained by adding the trit sequence $11\dots 1$ to one of the $P_{\infty}$ contained in $Q$, say to $Q_{\ell} = P_{\infty}$.
		Analogously to the above, we can infer that $Q''$ violates Property A) and consider a violating choice $q''_1 \in Q''_1, \dots, q''_{\Delta} \in Q''_{\Delta}$.
		We must have $q''_{\ell} = 11\dots 1$ in that violating choice; if we now replace $q''_{\ell}$ by some trit sequence contained in $P_{\infty}$ that has multiplicity at least $k+1$ in the multiset $\{q''_1, \dots, q''_{\Delta}\}$ (which must exist due to the pidgeonhole principle), we obtain a choice $q''_1 \in Q_1, \dots, q''_{\Delta} \in Q_{\Delta}$ that ensures that $Q$ violates Property A), as can be shown using analogous arguments as above.
		This, again, yields a contradiction to $Q \in h_1(\Delta)$, thereby proving that $P_{\infty}$ contains $11\dots 1$.
	\end{proof}
	
	Note that the element $P_{\infty}$ in Lemma\ \ref{lem:infty} depends on the choice of $Q$.
	If we want to specify the respective $Q$, we write $P^Q_{\infty}$ instead of $P_{\infty}$.

	The following technical lemma provides us with two index sets $J^*$ and $N(J^*)$ (depending on the local output of an algorithm for $\Pi'_1$) with desirable properties that we will use in Lemma\ \ref{lem:zeroround} to infer where to output demanding and accepting pointers (when transforming an algorithm for $\Pi'_1$ into an algorithm for superweak $k'$-coloring).
	Curiously, our proof of Lemma\ \ref{lem:hall} requires the application of Hall's marriage theorem \cite{hall}.
	
	\begin{lemma}\label{lem:hall}
		Let $Q = \{ Q_1, \dots, Q_{\Delta} \}$ be an arbitrary multiset from $h_1(\Delta)$, and assume that $\Delta \geq 2^{4^k+1}$.
		Let $\alpha: \{ 1, \dots, \Delta\} \rightarrow \{ \out, \ing\}$ be a function that maps each $1 \leq i \leq \Delta$ to an element of $\{ \out, \ing \}$.
		Let $\II \subseteq \{ 1, \dots, \Delta\}$ be the subset of indices $i$ satisfying $\{ Q_i, P_{\infty} \} \notin g_1(\Delta)$ and\footnote{The statement that $\{ Q_i, P_{\infty} \} \notin g_1(\Delta)$ already implies that $11\dots 1 \notin Q_i$, by Lemma\ \ref{lem:infty}, but let us be very explicit here, for convenience.} $11\dots 1 \notin Q_i$.
		For any $\JJ \subseteq \II$, let $N(\JJ) \subseteq \{ 1, \dots, \Delta\}$ denote the subset of indices $i$ such that there exists an index $j \in \JJ$ such that $\{ Q_i, Q_j \} \in g_1(\Delta)$ and $\alpha(i) \neq \alpha(j)$.
		Then there is some $\JJ^* \subseteq \II$ such that $|\JJ^*| > |N(\JJ^*)|$ and for any $j \in \JJ^*, i \in N(\JJ^*)$, we have $\alpha(j) \neq \alpha(i)$.
	\end{lemma}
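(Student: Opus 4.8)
\emph{Proof proposal.} The plan is to argue by contradiction: assuming that no set $\JJ^*$ with the stated properties exists, I will exhibit a choice of one trit sequence $q_\ell\in Q_\ell$ per index $\ell$ that violates Property~A) in the description of $h_1(\Delta)$, contradicting $Q\in h_1(\Delta)$. Write $\II_{\out}$ and $\II_{\ing}$ for the indices $i\in\II$ with $\alpha(i)=\out$, resp.\ $\alpha(i)=\ing$. If $\JJ\subseteq\II_{\out}$ then $N(\JJ)$ consists only of indices $i$ with $\alpha(i)=\ing$, so the colour condition required of a valid $\JJ^*$ holds automatically for $\JJ$; hence $|N(\JJ)|\geq|\JJ|$ for every such $\JJ$, as otherwise $\JJ$ itself would be a valid $\JJ^*$. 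By Hall's marriage theorem \cite{hall}, applied to the bipartite graph with sides $\II_{\out}$ and $\{i:\alpha(i)=\ing\}$ in which $j$ and $i$ are adjacent whenever $\{Q_i,Q_j\}\in g_1(\Delta)$, there is a matching $M_1$ saturating $\II_{\out}$; symmetrically there is a matching $M_2$ saturating $\II_{\ing}$ whose edges join indices of $\II_{\ing}$ to indices with $\alpha=\out$. Recall from Lemma~\ref{lem:infty} that $P_\infty$ contains $11\dots 1$ and occurs in $Q$ with multiplicity at least $\Delta-2^{4^k}$; in particular any index $\ell$ with $Q_\ell=P_\infty$ satisfies $\{Q_\ell,P_\infty\}\in g_1(\Delta)$ and thus lies outside $\II$, so $\II$ has at most $2^{4^k}$ elements and $i\in\II$ iff $\{Q_i,P_\infty\}\notin g_1(\Delta)$.

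Next I would study the graph $H:=M_1\cup M_2$ on $\{1,\dots,\Delta\}$. Every edge of $H$ joins an index with $\alpha=\out$ to one with $\alpha=\ing$, and $H$ has maximum degree $2$, so its components are even cycles and paths. An index outside $\II$ is matched by neither $M_1$ nor $M_2$ and can only be the \emph{image} of a matching edge; since the images of $M_1$-edges have $\alpha=\ing$ and the images of $M_2$-edges have $\alpha=\out$, such an index is the image of at most one matching edge, hence has degree at most $1$ in $H$. Therefore every degree-$2$ vertex of $H$ lies in $\II$, so every cycle of $H$ and every internal vertex of every path of $H$ lies in $\II$. Moreover, a path of $H$ with $m$ edges cannot have both endpoints in $\II$ when $m$ is even: its $m/2+1$ vertices with $\alpha=\out$ would then all lie in the domain of $M_1$, yet $M_1$ would have to inject them into the $m/2$ vertices of the path with $\alpha=\ing$ --- impossible. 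Consequently, within each component of $H$ one can pick a sub-matching covering all of the component's $\II$-vertices (alternate edges of a cycle; all alternate edges of a path with an odd number of edges; and, for a path with an even number of edges, which has at most one endpoint in $\II$, the matching that leaves uncovered an endpoint not belonging to $\II$). Let $M'$ be the union of these sub-matchings. Then $M'$ covers every index of $\II$, and as $M'\subseteq M_1\cup M_2$ has at most $|\II_{\out}|+|\II_{\ing}|=|\II|$ edges, it covers at most $2|\II|$ indices and hence at most $|\II|$ indices outside $\II$.

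Let $R$ be the set of indices not covered by $M'$; every index of $R$ lies outside $\II$. At least $(\Delta-2^{4^k})-|\II|$ indices of $R$ carry the set $P_\infty$, whereas at most $2^{4^k}-|\II|$ of them carry a set different from $P_\infty$ (there are at most $2^{4^k}$ non-$P_\infty$ indices, and all $|\II|$ indices of $\II$ are non-$P_\infty$ and covered by $M'$); since $\Delta\geq 2^{4^k+1}$, the former number is at least the latter, so I can injectively assign to each $\ell\in R$ with $Q_\ell\neq P_\infty$ a distinct $\ell'\in R$ with $Q_{\ell'}=P_\infty$, then pair off the remaining $P_\infty$-indices of $R$ arbitrarily, leaving at most one unpaired. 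Now define $q_\ell$ for all $\ell$: for each edge $\{c,d\}$ of $M'$, use $\{Q_c,Q_d\}\in g_1(\Delta)$ to pick $w\in Q_c,\,x\in Q_d$ with tritwise sum $22\dots 2$ and set $q_c:=w,\,q_d:=x$; for a pair $(\ell,\ell')$ with $Q_\ell\neq P_\infty=Q_{\ell'}$, use $\{Q_\ell,P_\infty\}\in g_1(\Delta)$ to pick $w\in Q_\ell,\,x\in P_\infty$ with tritwise sum $22\dots 2$ and set $q_\ell:=w,\,q_{\ell'}:=x$; for a pair of $P_\infty$-indices set both values to $11\dots 1$; and for the possibly unpaired $P_\infty$-index set its value to $11\dots 1$.

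Every index now receives exactly one value, and the indices split into pairs (plus at most one singleton carrying $11\dots 1$) whose two members always sum tritwise to $22\dots 2$. Hence, for every position $1\leq j\leq k$, the number of the $q_\ell$ with a $2$ at position $j$ equals the number with a $0$ at position $j$: each pair contributes the same amount ($0$ or $1$) to both counts at every position, and the singleton contributes $0$ to both. In particular no position $j$ has strictly more $2$'s than $0$'s among the $q_\ell$, so $(q_\ell)_\ell$ witnesses that $Q$ fails Property~A), contradicting $Q\in h_1(\Delta)$; this contradiction proves that a valid $\JJ^*$ exists. I expect the graph-theoretic step --- producing the sub-matching $M'$ of $M_1\cup M_2$ covering $\II$, which relies on ruling out even-length paths of $H$ with both endpoints in $\II$ --- to be the crux; the rest is bookkeeping that exploits the fact that all but at most $2^{4^k}$ indices carry the same set $P_\infty$, giving an ample supply of $11\dots 1$'s with which to balance the remaining indices.
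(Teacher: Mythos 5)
Your proof is correct and follows essentially the same route as the paper's: a contradiction argument that uses Hall's marriage theorem to obtain a matching saturating $\II$, extracts from a union of matchings --- via its alternating path/cycle structure (your $H=M_1\cup M_2$ plays the role of the paper's ``touching edges'' decomposition into pathlike and ringlike pieces) --- a system of disjoint pairs covering $\II$ exactly once, and then pads the uncovered indices with $P_{\infty}$-partners and $11\dots 1$ to produce a choice of $q_\ell$'s in which every position has equally many $0$'s and $2$'s, violating Property A). The only organizational difference is that you apply Hall separately to $\II_{\out}$ and $\II_{\ing}$, which builds the colour condition on $\JJ^*$ in from the start (at the cost of your extra parity argument ruling out even-length paths with both endpoints in $\II$), whereas the paper applies Hall once to all of $\II$ and only splits $\JJ'$ by orientation in a final step.
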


	\begin{proof}
		Consider the bipartite (unlabeled) graph $G'$ defined as follows.
		One side of the bipartition is given by a node $v_i$ for each index $i \in \II$; the other side by a node $u_j$ for each index $1 \leq j \leq \Delta$.
		Deviating from our conventions, we denote the set of $v_i$ by $V(G')$ and the set of $u_j$ by $U(G')$; furthermore, for each subset $V' \subseteq V(G')$ and each subset $U' \subseteq U(G')$, we set $I(V') := \{ i \in \II \mid v_i \in V' \}$ and $I(U') := \{ j \in \{ 1, \dots, \Delta \} \mid u_j \in U' \}$, respectively.
		The edge set $E(G')$ is given by creating an edge between $v_i$ and $u_j$ if and only if $\{ Q_i, Q_j \} \in g_1(\Delta)$ and $\alpha(i) \neq \alpha(j)$.
		In other words, $U' \subseteq U(G')$ is the set of neighbors of some set $V' \subseteq V(G')$ in $G'$ if and only if $I(U') = N(I(V'))$.
		We denote the set of neighbors of a set $V' \subseteq V(G')$, resp.\ $U' \subseteq U(G')$, in $G'$ by $N(V')$, resp.\ $N(U')$.
		
		We start proving the lemma by showing that there exists some set $\JJ' \subseteq \II$ such that $|\JJ'| > |N(\JJ')|$.
		Assume for a contradiction that no such set $\JJ'$ exists, i.e., for each subset $\JJ \subseteq \II$, we have $|\JJ| \leq |N(\JJ)|$.
		This implies that for each subset $V' \subseteq V(G')$, we have $|V'| \leq |N(V')|$.
		Now we can apply Hall's marriage theorem \cite[Theorem 1]{hall} that states that in bipartite graphs $G'$ with the bipartition given by the two node sets $V(G')$ and $U(G')$, the above condition is sufficient (and also necessary) for the existence of a matching that covers all nodes in $V(G')$.
		Let $M \subseteq E(G')$ be such a matching.
		Hence, for each $v \in V(G')$, there is an edge $e \in M$ with $v \in e$.
		Note that no node $v_i$ can be matched to $u_i$ in $M$ since there is no edge in $E(G')$ connecting $v_i$ to $u_i$. 
		We say that two edges $e,e' \in M, e \neq e',$ are \emph{touching} if there is some $i \in \II$ such that both $e$ and $e'$ have an endpoint in $\{ v_i, u_i \}$.

		Our next goal is to find a matching $M' \subseteq M$ with the property that for each $i \in \II$, exactly one of $v_i$ and $u_i$ is matched.
		To this end, find a partition $M = M_1 \mathop{\dot{\cup}} \dots \mathop{\dot{\cup}} M_{r}$ such that two edges $e, e' \in M$ are contained in the same $M_j$ if and only if there exists a sequence of edges $e = e_1, e_2, \dots, e_s = e'$ such that $e_i$ and $e_{i+1}$ are touching for all $1 \leq i \leq s-1$.
		Clearly, such a partition exists, and it is easy to see that this partition is unique up to reordering of the $M_j$.
		Observe that the definition of touching edges implies that for each edge $e \in M$, there are at most two other edges $e' \in M$ such that $e$ and $e'$ are touching.
		It follows, due to the definition of our partition and $G'$, that each of the $M_j$ has exactly one of the two following forms:
		\begin{enumerate}
			\item\label{prop:path} There is a sequence $i_1, \dots, i_s$ of distinct indices from $\{1, \dots, \Delta\}$ such that $M_j = \{ \{ v_{i_{\ell}}, u_{i_{\ell+1}} \} \mid 1 \leq \ell \leq s-1 \}$, $i_s \notin \II$, and node $u_{i_1}$ is not matched in $M$.
			\item\label{prop:ring} There is a sequence $i_1, \dots, i_s$ of distinct indices from $\{1, \dots, \Delta\}$ such that $M_j = \{ \{ v_{i_{\ell}}, u_{i_{\ell+1}} \} \mid 1 \leq \ell \leq s-1 \} \cup \{ \{ v_{i_s}, u_{i_1} \} \}$.
		\end{enumerate}
		We say that $M_j$ that fall into the first category are \emph{pathlike}, and that $M_j$ from the second category are \emph{ringlike}.
		Note that for pathlike $M_j$, the choice of the sequence of indices is unique, whereas for ringlike $M_j$, the choice is only unique up to certain reorderings.
		In the latter case, fix an arbitrary one of those sequences.
		We observe that for ringlike $M_j$, the parameter $s$ is even:
		As the edges contained in $M_j$ are also elements of $E(G')$, we have $\alpha(i_1) \neq \alpha(i_2) \neq \dots \neq \alpha(i_s) \neq \alpha(i_1) $, by the definitions of $M_j$ and $E(G')$.
		Since $\alpha(i_1), \dots, \alpha(i_s) \in \{ \out, \ing \}$, we obtain $\alpha(i_1) = \alpha(i_3) = \dots$, which implies that $s$ must be even.

		In order to find our desired matching $M'$, we will describe for each $M_j$ which of the edges contained in $M_j$ are chosen as elements of $M'$.
		For both pathlike and ringlike $M_j$, choose the edges $\{ v_{i_{\ell}}, u_{i_{\ell+1}} \}$ where $\ell$ is odd into $M'$.
		We argue that the obtained matching $M'$ has indeed the property that for each $i \in \II$, exactly one of $v_i$ and $u_i$ is matched:
		Consider an arbitrary $i \in \II$.
		Either both $v_i$ and $u_i$ are matched in our initial matching $M$, in which case the two edges from $M$ responsible for this are in the same $M_j$ and exactly one of the two edges is chosen into $M'$; or only $v_i$ is matched in $M$, in which case the edge from $M$ responsible for this must be the edge $\{ v_{i_1}, u_{i_2} \}$ in some pathlike $M_j$ and this edge is chosen into $M'$ whereas $u_i$ remains unmatched also in $M'$.

		Now we will use our matching $M'$ to find a choice $q_1 \in Q_1, \dots, q_{\Delta} \in Q_{\Delta}$ that violates Property A) in the description of $h_1(\Delta)$.
		More specifically, we choose our $q_{\ell}$ as follows.
		
		First, for each edge $\{ v_i, u_j \} \in M'$, choose $q_i, q_j$ such that the tritwise sum of $q_1$ and $q_2$ is $22\dots 2$.
		This is possible, as $\{ Q_i, Q_j \} \in g_1(\Delta)$, due to the definition of $E(G')$ and the fact that $M' \subseteq E(G')$.
		Moreover, since $M'$ is a matching with the property described above, the above rule fixes each $q_{\ell}$ at most once, and it fixes $q_{\ell}$ for all $\ell \in \II$.
		Second, consider all $\ell \notin \II$ that satisfy $11\dots 1 \notin Q_{\ell}$.
		Due to the definition of $\II$, all these $\ell$ satisfy $\{ Q_{\ell}, P_{\infty} \} \in g_1(\Delta)$.
		For each such $\ell$, find a partner index $j_{\ell}$ such that $Q_{j_{\ell}} = P_{\infty}$, and all these $j_{\ell}$ are distinct and have not been used in the above (i.e., $q_{j_{\ell}}$ is not fixed yet).
		Since, by Lemma\ \ref{lem:infty}, $P_{\infty}$ has a multiplicity of at least $\Delta/2$ in $Q$, this is possible. 
		Now, for each such $\ell$, choose $q_{\ell}, q_{j_{\ell}}$ such that the tritwise sum of $q_{\ell}$ and $q_{j_{\ell}}$ is $22\dots 2$ (which, again, is possible due to $\{ Q_{\ell}, P_{\infty} \} \in g_1(\Delta)$).
		Finally, for each $\ell$ for which $q_{\ell}$ is not fixed yet, set $q_{\ell} := 11\dots 1$ (which is possible as all leftover $\ell$ satisfy $11\dots 1 \in Q_{\ell}$).

		Since each $q_{\ell}$ is equal to $11\dots 1$ or has a unique partner such that the tritwise sum of the two is $22\dots 2$, the above choice of the $q_{\ell}$ ensures that, for each $1 \leq j \leq k$, the number of $q_{\ell}$ that have a $0$ at position $j$ is exactly the same as the number of $q_{\ell}$ that have a $2$ at position $j$.
		By the definition of $h_1(\Delta)$, it follows that $Q \notin h_1(\Delta)$, yielding a contradiction.
		Hence, there exists some set $\JJ' \subseteq \II$ such that $|\JJ'| > |N(\JJ')|$.

		In order to use $\JJ'$ to obtain a set $\JJ^* \subseteq \II$ as specified in the lemma, partition $\JJ'$ into two sets $\JJ'_{\ing}$ and $\JJ'_{\out}$, where $\JJ'_{\ing} := \{ j \in \JJ' \mid \alpha(j)=\ing \}$ and $\JJ'_{\out} := \{ j \in \JJ' \mid \alpha(j)=\out \}$.
		Consider $N(\JJ'_{\ing})$ and $N(\JJ'_{\out})$.
		By the definition of the function $N()$, each $i \in N(\JJ'_{\ing})$ has to satisfy $\alpha(i) = \out$, and each $i \in N(\JJ'_{\out})$ has to satisfy $\alpha(i) = \ing$.
		Since $N(\JJ') = N(\JJ'_{\ing}) \cup N(\JJ'_{\out})$, it follows that $N(\JJ')$ is actually the \emph{disjoint} union of $N(\JJ'_{\ing})$ and $N(\JJ'_{\out})$.
		Hence, we have $|\JJ'_{\ing}| + |\JJ'_{\out}| = |\JJ'| > |N(\JJ')| = |N(\JJ'_{\ing})| + |N(\JJ'_{\out})|$, which implies that $|\JJ'_{\ing}| > |N(\JJ'_{\ing})|$ or $|\JJ'_{\out}| > |N(\JJ'_{\out})|$.
		
		Now, if $|\JJ'_{\ing}| > |N(\JJ'_{\ing})|$, set $\JJ^* := \JJ'_{\ing}$, otherwise set $\JJ^* := \JJ'_{\out}$.
		By our above observations, we conclude that $\JJ^*$ satisfies the properties stated in the lemma.
	\end{proof}

	The following lemma shows that superweak $k'$-coloring is a relaxation of our problem $\Pi'_1$ derived from superweak $k$-coloring, if $k'$ is sufficiently large compared to $k$.
	The proof relies heavily on Lemma\ \ref{lem:hall}.

	\begin{lemma}\label{lem:zeroround}
		Let $\Pi$ be superweak $k$-coloring for some $k \geq 2$, and let $t$ be an arbitrary non-negative integer.
		Let $\GG$ be a graph class such that every contained graph is regular with (maximum) degree at least $2^{4^k+1}$ and the given input includes an orientation for each edge.
		Set $k' := 2^{2^{5^k}}$.
		If there is an algorithm that solves $(\Pi'_1, \GG)$ in time $t$, then there is also an algorithm that solves superweak $k'$-coloring on $\GG$ in time $t$.
	\end{lemma}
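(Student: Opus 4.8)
The plan is, given an algorithm $\AAA$ solving $(\Pi'_1,\GG)$ in time $t$, to build $\AAA'$ for superweak $k'$-coloring on $\GG$ by appending a $0$-round local post-processing, so the runtime stays $t$. After $\AAA$ terminates on a graph $G\in\GG$, node $v$ knows its output sets $W_{v,e}\in f_1(\Delta)$ at all incident $(v,e)$, hence the multiset $Q_v:=\{\,W_{v,e}\mid e\ni v\,\}\in h_1(\Delta)$ (by correctness of $\AAA$), and it knows the orientation of each incident edge from the input. Since $G$ is regular with $\Delta\ge 2^{4^k+1}$, Lemma~\ref{lem:infty} applies to $Q_v$: it singles out the element $P_\infty:=P^{Q_v}_\infty$, of multiplicity at least $\Delta-2^{4^k}$ and containing $11\dots1$; I call an incident edge $e$ of $v$ \emph{small} if $W_{v,e}\ne P_\infty$, so $v$ has at most $2^{4^k}$ small edges, independently of $\Delta$.

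\textbf{Colors.} Node $v$ outputs the same color on all incident edges, namely the pair $c(v):=(P_\infty,\{\,(W_{v,e},d_v(e))\mid e\text{ small for }v\,\})$, where $d_v(e)\in\{\out,\ing\}$ records whether the orientation of $e$ points toward $v$. This pair consists of one of at most $2^{3^k}$ choices for $P_\infty$ together with a multiset of size at most $2^{4^k}$ over the fixed universe $f_1(\Delta)\times\{\out,\ing\}$ of size $2\cdot 2^{3^k}$, so the number of colors that can occur depends only on $k$; a routine estimate against $k'=2^{2^{5^k}}$ shows it is at most $k'$ for every $k\ge 2$, and we fix once and for all an injection of these colors into $\{1,\dots,k'\}$.

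\textbf{Pointers.} Node $v$ sets $\alpha_v:=d_v$ and invokes Lemma~\ref{lem:hall} on $(Q_v,\alpha_v)$, obtaining $\JJ^*\subseteq\II$ with $|\JJ^*|>|N(\JJ^*)|$. I would make this choice \emph{canonical}: $\II$, the function $N(\cdot)$, and both conclusions of Lemma~\ref{lem:hall} see an incident edge only through its label $(W_{v,e},d_v(e))$, so $\JJ^*$ may be taken with prescribed per-label sizes, and those sizes are a function of $c(v)$ alone — crucially because $\II$ and $N(\JJ^*)$ only involve small edges (as $11\dots1\in P_\infty$ and $\{P_\infty,Q_j\}\notin g_1(\Delta)$ for $j\in\II$), so the orientations of the non-small edges are irrelevant. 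Thus $v$ computes the plan from $c(v)$, realizes it on its actual edges by a fixed tie-breaking rule, and outputs a demanding pointer on the edges in $\JJ^*$, an accepting pointer on those in $N(\JJ^*)$, and $\bullet$ on the rest. These sets are disjoint (if $i\in\JJ^*\cap N(\JJ^*)$, the last conclusion of Lemma~\ref{lem:hall} with $j=i$ gives $\alpha_v(i)\ne\alpha_v(i)$); moreover $|\JJ^*|>|N(\JJ^*)|$ and $|N(\JJ^*)|\le 2^{4^k}\le k'$, which yields all the counting constraints of superweak $k'$-coloring (strictly more demanding than accepting pointers, at most $k'$ accepting pointers, no neighbor pointed at twice), and the common color in $\{1,\dots,k'\}$.

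\textbf{Correctness of demanding pointers (the crux).} Suppose $v$ places a demanding pointer at $u$ along $e=\{u,v\}$. If $c(v)\ne c(u)$ the two endpoints differ in color and we are done, so assume $c(v)=c(u)$; since $u,v$ lie in the same graph they share $\Delta$, hence $Q_u=Q_v$ and the small-edge label multisets agree, so $u$ and $v$ compute the \emph{same} canonical plan. Because $v$ demanded along $e$, the plan assigns at least one edge of label $(W_{v,e},d_v(e))$ to $\JJ^*$, so $u$'s realization contains some incident edge $e'$ with $W_{u,e'}=W_{v,e}$ and $d_u(e')=d_v(e)$ in its $\JJ^*$. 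Now at $u$ the edge $e$ has label $(W_{u,e},d_u(e))$ with $d_u(e)=\overline{d_v(e)}\ne d_v(e)$ (a shared edge points toward exactly one endpoint), and $\{W_{u,e},W_{u,e'}\}=\{W_{u,e},W_{v,e}\}\in g_1(\Delta)$ because this is the output configuration on $e$ and $\AAA$ is correct; hence, by the definition of $N(\cdot)$ with witness $e'$, edge $e$ lies in $u$'s set $N(\JJ^*)$, i.e.\ $u$ places an accepting pointer at $v$ along $e$, which is exactly what is required. The main obstacle is this last step: one must engineer $c(v)$ so that same-colored neighbors run identical pointer plans, whereupon the only remaining asymmetry at a shared edge is the flipped orientation — precisely the slack $\alpha(i)\ne\alpha(j)$ that Lemma~\ref{lem:hall} is designed to exploit — with the confinement of $\II$ and $N(\JJ^*)$ to small edges, and the ample size of $k'$, being exactly what makes the scheme fit together.
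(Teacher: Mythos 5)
Your proposal is correct and follows essentially the same route as the paper's proof: a $0$-round post-processing that colors a node by (an injective encoding of) the multiset of its non-$P_\infty$ output-sets paired with the incident edge orientations, places demanding pointers on a canonically chosen $\JJ^*$ from Lemma~\ref{lem:hall} and accepting pointers on $N(\JJ^*)$, and resolves a same-colored demanding edge by transferring the demanded label to the neighbor's identical plan and exploiting the flipped orientation together with $\{W_{u,e},W_{v,e}\}\in g_1(\Delta)$. The only differences are cosmetic (the paper encodes the color via the full multiset $R^v$ with a $\none$ marker on $P_\infty$ entries, and phrases the counting bound as $(3\cdot 2^{3^k})^{2^{4^k}+1}\le k'$), so nothing further is needed.
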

	
	\begin{proof}
		Let $\AAA$ be an algorithm that solves $(\Pi'_1, \GG)$ in time $t$.
		We will construct an algorithm $\AAA'$ that solves superweak $k'$-coloring on $\GG$ in time $t$ as follows.
		For executing $\AAA'$ each node $v$ first executes $\AAA$ and then applies a function that takes the obtained outputs at all $(v,e)$, where $e$ is incident to $v$, and the input information and port numbers $v$ knows from the very beginning as input and returns new outputs for all the $(v,e)$.
		As applying the function requires no communication between the nodes, $\AAA'$ has the same runtime as $\AAA$.
		In the following we describe the function, i.e., how $v$ transforms an arbitrary output for $\Pi'_1$ (at all $(v,e)$) into an output for superweak $k'$-coloring.

		Let $G \in \GG$ and $v \in V(G)$, and denote the maximum degree of $G$ by $\Delta$.
		Let $Q^v = \{ Q_1, \dots, Q_{\Delta} \}$ denote the multiset that contains, for each edge $e$ incident to $v$, the output of $\AAA$ at $(v,e)$.
		Since $\AAA$ solves $(\Pi'_1, \GG)$, we know that $Q^v \in h_1(\Delta)$.
		Let $e_1, \dots, e_{\Delta}$ denote the edges incident to $v$ according to the respective port numbers at $v$, and assume w.l.o.g.\ that $Q_i$ is the output at $(v, e_i)$.
		We start by assigning a value $\alpha(i) \in \{ \out, \ing \}$ to each index $1 \leq i \leq \Delta$ by setting $\alpha(i) := \ing$ if edge $e_i$ is oriented towards $v$, and $\alpha(i) := \out$ if edge $e_i$ is oriented away from $v$.
		Define values $\beta(i)$, by setting $\beta(i) := \alpha(i)$ if $Q_i \neq P_{\infty}$, and $\beta(i) := \none$ if $Q_i = P_{\infty}$.
		Let $R^v$ denote the multiset $\{ (Q_1, \beta(1)), \dots, (Q_{\Delta}, \beta({\Delta})) \}$.
		Let $H_1(\Delta)$ be the set of all possible $R^v$, i.e., $H_1(\Delta) := \{ \{ (Q'_1, \beta(1)), \dots, (Q'_{\Delta}, \beta({\Delta})) \} \mid Q' := \{ Q'_1, \dots, Q'_{\Delta} \} \in h_1(\Delta), \beta(1), \dots, \beta({\Delta}) \in \{ \out, \ing, \none \}, \beta(i) = \none \iff Q'_i = P^{Q'}_{\infty} \textrm{ for all } 1 \leq i \leq \Delta \}$.
		
		Recall that each output for the problem of superweak $k'$-coloring is a pair $(c,\gamma) \in \{ 1, \dots, k'\} \times \{ \rightarrow, \multimap, \bullet \}$.
		Let $(c_e, \gamma_e)$ denote the output $v$ will assign to $(v,e)$ according to $\AAA'$.
		For the first entry $c_e$, we find an arbitrary injective function $c : H_1(\Delta) \rightarrow \{ 1, \dots, k' \}$ in some fixed deterministic way (i.e., all nodes use the same function), and then set $c_e := c(R^v)$ for all $e$ incident to $v$.
		Since each element $Q' \in h_1(\Delta)$ is a multiset of cardinality $\Delta$ in which, by Lemma\ \ref{lem:infty}, all contained sets $\neq P^{Q'}_{\infty}$ together have multiplicity at most $2^{4^k}$, and since each contained set is a set of trit sequences of length $k$ (of which there are $2^{3^k}$ distinct ones), there are most $\left(3 \cdot 2^{3^k}\right)^{2^{4^k}+1} \leq k'$ distinct elements in $H_1(\Delta)$, by the definition of $H_1(\Delta)$.
		Hence, an injective function as described above exists.
		
		For the second entry $\gamma_e$, node $v$ computes a set $\JJ^*$ as described in Lemma\ \ref{lem:hall} (and whose existence is guaranteed by Lemma\ \ref{lem:hall}) in a fixed deterministic way.
		We require that the multiset $\{ (Q_i,\beta(i)) \mid i \in \JJ^* \}$ does not depend on the order of the pairs $(Q_i, \beta(i))$ fixed by the port numbers, i.e., that any two nodes $u, v$ with $R^u = R^v$ obtain the same \emph{multiset} $\{ (Q_i, \beta(i)) \mid i \in \JJ^* \}$ by their choice of $\JJ^*$.
		However, since whether a set $\JJ^*$ has the properties described in Lemma\ \ref{lem:hall} depends only on the multisets $\{ (Q_i, \alpha(i)) \mid i \in \JJ^* \} = \{ (Q_i, \beta(i)) \mid i \in \JJ^* \}$ and $\{ (Q_i, \alpha(i)) \mid i \in \{ 1, \dots, \Delta \}, Q_i \neq P_{\infty} \} = \{ (Q_i, \beta(i)) \mid i \in \{ 1, \dots, \Delta \}, Q_i \neq P_{\infty} \}$ (due\footnote{Note that, for the first equality, we use the fact that, for all $i \in \JJ^* \subseteq \II$, we have $Q_i \neq P_{\infty}$ since all indices $\ell \in \II$ satisfy $11\dots 1 \notin Q_{\ell}$ whereas $11\dots 1 \in P_{\infty}$, by Lemma\ \ref{lem:infty}.} to the definition of $\II$ in Lemma\ \ref{lem:hall}), such a choice for $\JJ^*$ exists.
		Now, set $\gamma_{e_i} := \hspace{0.3em} \rightarrow$ if $i \in \JJ^*$, $\gamma_{e_i} := \hspace{0.3em} \multimap$ if $i \in N(\JJ^*)$, and $\gamma_{e_i} := \bullet$ if $i \notin \JJ^* \cup N(\JJ^*)$.
		Note that $\JJ^* \cap N(\JJ^*) = \emptyset$, by Lemma\ \ref{lem:hall}.

		We argue that the algorithm $\AAA'$ specified by the above indeed solves superweak $k'$-coloring on $\GG$.
		Clearly $v$ outputs the same color $c$ at each $(v,e)$ according to $\AAA'$.
		Moreover, the number of demanding pointers $\rightarrow$ that $v$ outputs is strictly larger than the number of accepting pointers $\multimap$ since $|\JJ^*| > |N(\JJ^*)|$, by Lemma\ \ref{lem:hall}; also, by Lemma\ \ref{lem:infty}, the latter number is at most $2^{4^k} \leq k'$ since, for each $i \in N(\JJ^*)$, we have $Q_i \neq P_{\infty}$, by the definition of $\II$ in Lemma\ \ref{lem:hall} and the fact that $\JJ^* \subseteq \II$.
		What is left to show is that there is no conflict on an edge, i.e., that for each edge $e = \{ u, v \}$, $u$ and $v$ output different colors at $(u,e)$ and $(v,e)$, or both output $\bullet$, or at least one of them outputs $\multimap$.
		
		Hence, consider the case that $u$ and $v$ output the same color at $(u,e)$ and $(v,e)$, and that one of the two, say $u$, also outputs $\rightarrow$ at $(u,e)$.
		By the definition of $\AAA'$, this implies that the multisets $R^u$ and $R^v$ are equal (due to the injectivity of the above function), and that the set $Q_{u,e}$ that $\AAA$ outputs at $(u,e)$ satisfies $Q_{u,e} \in \{ Q_i \in Q^u \mid i \in \JJ^*_u \}$, where $\JJ^*_u$ is the index set $\JJ^*$ computed by $u$.
		Let $i$ be the port number for $e$ at $u$, and $j$ the port number for $e$ at $v$, and add the sub- or superscript $u$, resp.\ $v$, to the usual definitions to specify which node they correspond to.
		As $\{Q_{v,e}, Q_{u,e} \} \in g_1{\Delta}$ (by the correctness of $\AAA$) and $\alpha_u(i) \neq \alpha_v(j)$, we see that there is some index $\ell \in N_u(\JJ^*_u)$ with $Q^u_\ell = Q_{v,e} = Q^v_j$ and $\beta_u(\ell) = \beta_v(j)$, due to $R^u = R^v$.
		Since $\{ (Q^u_{r}, \beta_u(r)) \mid r \in \JJ^*_u \} = \{ Q^v_{r} \beta_v(r)) \mid r \in \JJ^*_v \}$ (due to $R^u = R^v$ and the specified way of choosing $\JJ^*$), there must be an index $r \in N_v(\JJ^*_v)$ such that $Q^v_r = Q^v_j$ and $\beta_v(r) = \beta_v(j)$.
		Observe that there cannot be two indices $j_1 \in N_v(\JJ^*_v), j_2 \notin N_v(\JJ^*_v)$ with $Q^v_{j_1} = Q^v_{j_2}$ and $\beta_v(j_1) = \beta_v(j_2)$, due to the definition of the function $N()$.
		Hence, we obtain $j \in N_v(\JJ^*_v)$, which implies that $v$ outputs $\multimap$ at $(v,e)$, by the definition of $\AAA'$.	
	\end{proof}
	
	Due to our speedup results in Section\ \ref{sec:speedup}, the following speedup lemma for superweak $k$-coloring is essentially a corollary of Lemma\ \ref{lem:zeroround}.
	As the setting for weak $2$-coloring used by Naor and Stockmeyer includes unique IDs, we formulate Lemma\ \ref{lem:supersuper} for order-invariant algorithms.
	An analogous speedup lemma can be achieved for general algorithms in a setting without unique IDs.
	
	\begin{lemma}\label{lem:supersuper}
		Let $k\geq 2$, and fix some $\Delta \geq 2^{4^k+1}$.
		Let $\GG_{n,\Delta}$ be a $\Sigma$-input-labeled graph class such that every contained graph has $n$ nodes, is $\Delta$-regular and the given input includes an orientation for each edge.
		Assume that $\GG_{n, \Delta}$ is $t$-independent for some positive integer $t$ and contains only graphs of girth at least $2t+2$.
		Let $S \subseteq \NN^+$ be a finite set of identifiers satisfying $|S| \geq 4\Delta^{2t}$.
		Let $\GG'_{n, \Delta}$ be the class consisting of all (input-labeled) graphs obtained by taking a graph from $\GG_{n, \Delta}$ and assigning unique identifiers from $S$ as inputs to the nodes of the graph.
		Set $k' := 2^{2^{5^k}}$.
		If there is an order-invariant algorithm solving superweak $k$-coloring on $\GG'_{n, \Delta}$ in time $t$, then there is also an order-invariant algorithm solving superweak $k'$-coloring on $\GG'_{n, \Delta}$ in time $t-1$.
	\end{lemma}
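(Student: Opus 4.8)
The plan is to obtain Lemma~\ref{lem:supersuper} by composing two tools already at our disposal: the order-invariant speedup of Theorem~\ref{thm:ordinv} and the relaxation of Lemma~\ref{lem:zeroround}. Assume there is an order-invariant algorithm solving superweak $k$-coloring on $\GG'_{n, \Delta}$ in time $t$, and set $\Pi$ to be superweak $k$-coloring. All hypotheses of Theorem~\ref{thm:ordinv} are in force: $\GG_{n, \Delta}$ is $t$-independent, consists of $n$-node $\Delta$-regular graphs of girth at least $2t+2$, the identifier set $S$ has size at least $4\Delta^{2t}$, and $\GG'_{n, \Delta}$ is exactly the class obtained from $\GG_{n, \Delta}$ by assigning unique identifiers from $S$. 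So the implication (1)$\Rightarrow$(3) of Theorem~\ref{thm:ordinv} yields an order-invariant algorithm $\AAA$ that solves $(\Pi'_1, \GG'_{n, \Delta})$ in time $t-1$, where $\Pi'_1$ is the (simplified) derived problem analyzed in Section~\ref{sec:speele}.

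Next I would feed $\AAA$ into Lemma~\ref{lem:zeroround}, applied with the graph class $\GG := \GG'_{n, \Delta}$ and with its runtime parameter instantiated to $t-1$. Its hypotheses hold: every graph in $\GG'_{n, \Delta}$ inherits $\Delta$-regularity from $\GG_{n, \Delta}$ with $\Delta \geq 2^{4^k+1}$, and it carries an edge orientation as part of its input (the presence of the extra unique identifiers is irrelevant to Lemma~\ref{lem:zeroround}). Hence $\AAA$ is transformed into an algorithm $\AAA'$ solving superweak $k'$-coloring on $\GG'_{n, \Delta}$ in time $t-1$, with $k' = 2^{2^{5^k}}$ exactly as claimed.

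The remaining point is that $\AAA'$ is again order-invariant, which is not asserted by Lemma~\ref{lem:zeroround} and so has to be checked. I would argue this directly from the construction in the proof of Lemma~\ref{lem:zeroround}: $\AAA'$ is $\AAA$ followed by a communication-free local post-processing step in which each node $v$ computes its superweak $k'$-coloring output as a fixed deterministic function of the multiset $Q^v$ of $\AAA$'s outputs at the pairs $(v,e)$, the orientations of the incident edges (encoding $\alpha$, and thus $\beta$), and the port numbers at $v$ (which single out $\JJ^*$ and $N(\JJ^*)$). None of these ingredients --- not the injection $c : H_1(\Delta) \to \{1, \dots, k'\}$, not the choice of $\JJ^*$ --- refers to absolute identifier values. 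Since $\AAA$ is order-invariant, $Q^v$ depends only on the relative order of the identifiers in $v$'s radius-$(t-1)$ view, and the other inputs to the post-processing function are identifier-independent; therefore $\AAA'$'s output at $v$ depends only on that relative order, i.e., $\AAA'$ is order-invariant. This closes the chain. I expect this last verification --- confirming that no identifier dependence is smuggled in through the deterministic choices of $c$ and $\JJ^*$ --- to be the only genuine obstacle, and it amounts to a routine inspection of the construction in Lemma~\ref{lem:zeroround}.
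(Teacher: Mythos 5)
Your proposal is correct and follows exactly the paper's own route: apply Theorem~\ref{thm:ordinv} to pass from superweak $k$-coloring to an order-invariant $(t-1)$-round algorithm for $\Pi'_1$, then apply Lemma~\ref{lem:zeroround} to relax $\Pi'_1$ to superweak $k'$-coloring, and finally observe that order-invariance is preserved because the post-processing in Lemma~\ref{lem:zeroround} depends only on the outputs of the order-invariant algorithm and the ID-independent initial inputs (orientations and port numbers). Nothing is missing.
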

	
	\begin{proof}
		Set $\Pi$ to be superweak $k$-coloring and let $\AAA$ be an order-invariant algorithm solving $(\Pi, \GG'_{n, \Delta})$ in time $t$.
		Then, there is an order-invariant algorithm solving $(\Pi'_1, \GG'_{n, \Delta})$ in time $t-1$, by Theorem\ \ref{thm:ordinv}.
		It follows that there is also an algorithm $\AAA'$ solving superweak $k'$-coloring on $\GG'_{n, \Delta}$ in time $t-1$, by Lemma\ \ref{lem:zeroround}.
		The order-invariance of $\AAA'$ follows from the proof of Lemma\ \ref{lem:zeroround}, or, more precisely, from the fact that $\AAA'$, as defined in that proof, only takes the outputs of (the order-invariant algorithm) $\AAA$ and the initial input information into account.
	\end{proof}

	\subsection{Proving the Lower Bound}\label{sec:lowerbound}
	
	In this section, we show that there is no algorithm solving weak $2$-coloring in time $o(\logstar \Delta)$ in odd-degree graphs in the setting used by Naor and Stockmeyer \cite{DBLP:conf/stoc/NaorS93}, or in the $\LOCAL$ model \cite{DBLP:journals/siamcomp/Linial92, Peleg2000}.
	Those two models differ from the port numbering model we use in that they provide each node with a globally unique ID, where the $\LOCAL$ model (commonly) uses IDs that are $O(\log n)$-bit strings while Naor and Stockmeyer assume arbitrarily large IDs.
	As the setting with a bound on the size of the IDs, i.e., the $\LOCAL$ model, clearly makes proving a lower bound harder, we will formally prove our lower bound for the $\LOCAL$ model.

	As mentioned in Section\ \ref{sec:lift}, Naor and Stockmeyer themselves provide a tool to circumvent the complications arising from unique IDs: as they show in their work, if there is a constant-time algorithm for some problem, then there is also an order-invariant algorithm for the same problem with the same runtime.
	We will make use of this fact in the proof of Theorem\ \ref{thm:lsd}.

	\begin{theorem}\label{thm:lsd}
		There is no $o(\logstar \Delta)$-time algorithm solving weak $2$-coloring in odd-degree graphs.
	\end{theorem}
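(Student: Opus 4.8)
The plan is to assume, for contradiction, an algorithm $\AAA$ solving weak $2$-coloring in odd-degree graphs in time $T(\Delta)=o(\logstar\Delta)$, and to run the speedup lemma for superweak coloring (Lemma~\ref{lem:supersuper}) $T(\Delta)+1$ times until it produces a $0$-round algorithm for superweak $k$-coloring with $k$ still far below $\Delta$, which cannot exist. First I would restrict to the subclass of $\Delta$-regular odd-degree graphs (this only makes the problem easier), equip it with an arbitrary input edge orientation and with unique identifiers (neither hurts $\AAA$), and fix an odd $\Delta$. Since $T(\Delta)$ is then a constant, the order-invariance theorem of Naor and Stockmeyer \cite{DBLP:conf/stoc/NaorS93} turns $\AAA$ into an \emph{order-invariant} algorithm of the same runtime; one extra round lets each node learn all neighbors' colors and pick (deterministically, from its port numbers) a neighbor of different color, so the pointer version $\Pi$ of weak $2$-coloring from Section~\ref{sec:app2col} is solved order-invariantly in $T(\Delta)+1$ rounds, and since superweak $2$-coloring is a relaxation of $\Pi$ (use one demanding pointer and no accepting pointer; $1>0$ and $0\le 2$), so is superweak $2$-coloring. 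Set $t_0:=T(\Delta)+1$.

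Next I would fix the graph class $\GG_{n,\Delta}$ to consist of all $\Delta$-regular odd-degree $n$-node graphs of girth at least $2t_0+2$ carrying an arbitrary edge orientation; for $n$ large such graphs exist \cite{bollobas78extremal}, and the class is $(\le t_0)$-independent since we allow every orientation and the girth is large. Starting from $k_0:=2$ and applying Lemma~\ref{lem:supersuper} repeatedly (with an identifier set large enough that $|S|\ge 4\Delta^{2t_0}$, a constant for fixed $\Delta$), I obtain order-invariant algorithms for superweak $k_i$-coloring on $\GG_{n,\Delta}$ (with identifiers) in time $t_0-i$, where $k_{i+1}=2^{2^{5^{k_i}}}$, for $i=0,1,\dots,t_0$ — provided $\Delta\ge 2^{4^{k_{i-1}}+1}$ at the $i$-th step, the binding case being $i=t_0$. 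Each $k_i$ is at most a tower of twos of height $O(i)$, so $\Delta\ge 2^{4^{k_{t_0-1}}+1}$ is equivalent to $\logstar\Delta\ge\Theta(t_0)$; as $t_0=o(\logstar\Delta)$, this holds for all sufficiently large odd $\Delta$ along a subsequence on which $T(\Delta)/\logstar\Delta\to 0$. For such $\Delta$ the chain reaches $i=t_0$, yielding an order-invariant $0$-round algorithm for superweak $k$-coloring on $\GG_{n,\Delta}$ with $k:=k_{t_0}$, which still satisfies $\Delta>2k+1$.

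The remaining and most substantial step, which I would isolate as a small lemma, is that superweak $k$-coloring has \emph{no} $0$-round order-invariant algorithm on $\Delta$-regular graphs once $\Delta>2k+1$. A $0$-round order-invariant algorithm cannot exploit identifiers, so it is a map $\sigma\mapsto(c(\sigma),(p_j(\sigma))_j)$ from orientation patterns $\sigma\in\{\ing,\out\}^{\Delta}$ of the incident edges to a color and a pointer type per port, obeying the node constraint ``number of $\rightarrow$-ports $>$ number of $\multimap$-ports $\ge 0$'' and ``number of $\multimap$-ports $\le k$''. Given such an algorithm I would search for two patterns $\sigma,\tau$ with $c(\sigma)=c(\tau)$, a port $i$ at which $\sigma$ emits a demanding pointer, and a port $j$ of $\tau$ of orientation opposite to $\sigma_i$ with $p_j(\tau)\ne\multimap$; gluing a $\sigma$-node $v$ to a $\tau$-node $u$ along an edge via ports $i$ and $j$, and completing to a $\Delta$-regular odd-degree graph of girth $\ge 2t_0+2$, makes $v$ demand along that edge at a same-colored $u$ that does not accept back, so the output is locally incorrect. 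If no such pair existed, then for each $\sigma$ and each demanding port $i$ of $\sigma$ every pattern of color $c(\sigma)$ would have out-degree $\ge\Delta-k$ (when $\sigma_i=\out$) or out-degree $\le k$ (when $\sigma_i=\ing$); both cannot hold since $\Delta>2k$, so each demanding-port set is homogeneous and every used color is ``high'' (all its patterns of out-degree $\ge\Delta-k$) or ``low'' (all of out-degree $\le k$) — yet the pattern of out-degree $\lfloor\Delta/2\rfloor$ is colored neither, a contradiction. Either way, no valid $0$-round algorithm exists, contradicting the conclusion of the chain and hence the existence of $\AAA$.

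I expect the main obstacle to be exactly this $0$-round impossibility argument (getting the ``high/low'' dichotomy and the gluing construction right, including checking that the produced bad instance lies in the prescribed graph class), together with the bookkeeping that certifies at every level of the chain that the graph class still meets the girth, $(\le t)$-independence, regularity, orientation and identifier-size hypotheses of Lemma~\ref{lem:supersuper}. The tower-growth estimate converting $o(\logstar\Delta)$ into ``the chain terminates at a $0$-round problem'' should then be routine.
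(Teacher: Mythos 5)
Your proposal is correct and follows essentially the same route as the paper's proof: fix an odd $\Delta$ with $T(\Delta)$ sufficiently far below $\logstar\Delta$, restrict to $\Delta$-regular, high-girth, arbitrarily edge-oriented graphs with unique IDs, invoke the Naor--Stockmeyer order-invariance theorem, pass to the pointer version and hence to superweak $2$-coloring at the cost of one round, iterate Lemma~\ref{lem:supersuper} for $T(\Delta)+1$ steps while verifying $\Delta\ge 2^{4^{k_i}+1}$ at each step, and then rule out the resulting $0$-round algorithm. The only place you genuinely diverge is that terminal $0$-round impossibility. The paper argues from a single balanced orientation pattern (in-degree $(\Delta-1)/2$): by pigeonhole over the ID space two IDs yield the same color, the node must emit at least one demanding pointer, and since $k^*\le(\Delta-3)/2$ the same-patterned neighbor has both an incoming and an outgoing port carrying no $\multimap$, so an adversarial port numbering at the neighbor produces a violated edge. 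You instead quantify over all orientation patterns and, in the ``no bad pair'' case, derive a per-color high/low out-degree dichotomy that the balanced pattern contradicts. Both arguments are sound (your remark that a $0$-round order-invariant algorithm cannot depend on the single ID it sees is valid, while the paper's pigeonhole step does not even need order-invariance there); the paper's version is more economical, yours is more systematic but requires the same realizability facts about patterns and the same gluing of a worst-case instance inside the prescribed graph class. I see no gaps.
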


	\begin{proof}
		Assume for a contradiction that such an $o(\logstar \Delta)$-time algorithm $\AAA$ exists, and, for each odd $\Delta \in \NN$, let $T(\Delta) \in o(\logstar \Delta)$ denote the (worst-case) runtime of $\AAA$ on $\Delta$-regular graphs.
		We can assume w.l.o.g.\ that $T(\Delta) \geq 1$ for all $\Delta \in \NN$ (by choosing our algorithm $\AAA$ suitably).
		Fix $\Delta > 16$ to be an odd positive integer such that $1 \leq T(\Delta) \leq (\logstar \Delta - 7) / 5$ (such a $\Delta$ must exist since $T(\Delta) \in o(\logstar \Delta)$).

		Let $\GG_{n, \Delta}$ be the ($\Sigma$-input-labeled) graph class consisting of all $\Delta$-regular graphs with $n$ nodes, girth at least $2(T(\Delta)+1) + 2$, and arbitrary edge orientations as inputs, where $n$ is a sufficiently large even constant (in particular, the ID space, which depends on $n$, has to be of cardinality at least $4 \Delta^{2 T(\Delta)+2}$ since we want to apply Lemma\ \ref{lem:supersuper}).
		Let $\GG'_{n, \Delta}$ be the class consisting of all graphs obtained by taking a graph from $\GG_{n, \Delta}$ and assigning unique $O(\log n)$-bit IDs as inputs to the nodes of the graph.
		Let $\GG''_{n, \Delta}$ be the graph class obtained from $\GG'_{n, \Delta}$ by removing the edge orientations from the input (as those are not part of the $\LOCAL$ model).
		We will show that our lower bound already holds on $\GG''_{n, \Delta}$.
		Note that the three defined graph classes are all non-empty as assured, e.g., by \cite[Chapter III, Theorem $1.4'$]{bollobas78extremal}.

		Since, by assumption, $\AAA$ solves weak $2$-coloring on $\GG''_{n, \Delta}$ in time $T(\Delta) \leq (\logstar \Delta - 8) / 5$, and since $\Delta$ is a fixed constant, there must also be an \emph{order-invariant} algorithm doing the same, by \cite[Theorem 3.3]{DBLP:conf/stoc/NaorS93} (or, more precisely, by the proof thereof).
		Hence, in the following assume that $\AAA$ is order-invariant.
		Clearly, our order-invariant algorithm $\AAA$ also solves weak $2$-coloring on $\GG'_{n, \Delta}$ in time $T(\Delta) \leq (\logstar \Delta - 8) / 5$, by simply ignoring the additional input information.
		As argued in Section\ \ref{sec:app2col}, any algorithm that solves weak $2$-coloring can be transformed into an algorithm that solves the pointer version of weak $2$-coloring and requires at most one additional round.
		It is straightforward to check that this transformation also preserves order-invariance; for simplicity, denote the algorithm obtained after the transformation also by $\AAA$.
		Moreover, any algorithm that solves the pointer version of weak $2$-coloring also solves superweak $2$-coloring, by the definition of the latter.
		Hence, it follows that $\AAA$ solves superweak $2$-coloring on $\GG'_{n, \Delta}$ in time $T(\Delta) + 1 \leq (\logstar \Delta - 3) / 5$.
		
		Observe that the definition of our graph class $\GG_{n, \Delta}$ ensures that $\GG_{n, \Delta}$ is $(\leq T(\Delta)+1)$-independent.
		Hence we can use Lemma\ \ref{lem:supersuper} to repeatedly speed $\AAA$ up until we obtain an algorithm that solves superweak $k'$-coloring for some large $k'$ in $0$ rounds.
		Set $k_0 := 2$, and define recursively $k_{i+1} := F(F(F(F(F(k_i)))))$, where $F(x) := 2^x$.
		Since $k_{i+1} \geq 2^{2^{5^{k_i}}}$ for all $i$ and any algorithm for superweak $j$-coloring also solves superweak $j'$-coloring if $j' \geq j$, we see that $T(\Delta)+1$ applications of Lemma\ \ref{lem:supersuper} result in a $0$-round algorithm $\AAA^*$ that solves superweak $k^*$-coloring on $\GG'_{n, \Delta}$, where $k^* \leq k_{T(\Delta)+1} \leq \log \Delta$.
		Note that $k_i \leq \log \log \log \log \Delta$ for all $i \leq T(\Delta)$, which implies that the condition $\Delta \geq 2^{4^k+1}$ in Lemma\ \ref{lem:supersuper} is satisfied in each application of the lemma.
		In the following we argue that such a $0$-round algorithm cannot exist.
		
		Consider a node $v$ in some graph from $\GG'_{n, \Delta}$, and assume that $v$ has only incoming edges at the first $(\Delta-1)/2$ ports and only outgoing edges at the remaining $(\Delta+1)/2$ ports.
		Let $\ident_1$ and $\ident_2$ be two IDs such that $v$ would output the same color according to $\AAA^*$ if it has either of the two IDs as input.
		Such two IDs must exist by the pidgeonhole principle.
		Now assume that $v$ is given $\ident_1$ as input, and let $e=\{u, v\}$ be some edge incident to $v$ such that $v$ outputs $\rightarrow$ at $(v,e)$ according to $\AAA^*$.
		Assume that $u$ has the same constraint on the incoming and outgoing edges as $v$ (which is possible due to our choice of $\GG'_{n, \Delta}$) and let $\ident_2$ be the input ID for $u$, which implies that $u$ outputs the same color as $v$.
		Moreover, since $k^* \leq \log \Delta \leq (\Delta-3)/2$ (due to $\Delta > 16$), there are at least one edge $e_{\ing}$ incoming at $u$ and one edge $e_{\out}$ outgoing at $u$ such that $u$ outputs $\multimap$ neither at $(u, e_{\ing})$ nor at $(u, e_{\out})$ (since $u$ has at most $k^*$ pointers $\multimap$ available).
		Clearly, there exists a port numbering at $u$ such that $e_{\ing} = e$ or $e_{\out} = e$.
		In this case, $u$ and $v$ output the same color, $v$ outputs $\rightarrow$ at $(v,e)$, and $u$ does not output $\multimap$ at $(u,e)$, which yields an incorrect output for superweak $k^*$-coloring at $e$.
		Hence, there is no $o(\logstar \Delta)$-time algorithm as described in the theorem statement.	
	\end{proof}

	\section{Conclusion}
	In this work, we developed a new technique for determining (or at least bounding) the time complexity of a given distributed problem, based on the idea of automatically transforming the problem into a problem that can be solved exactly one round faster.
	We proved the viability of the technique by showing that it can be used for reproducing known results in a (semi-)automatic fashion and that it is powerful enough to facilitate answering a long-standing open question that has resisted proof attempts for 25 years.
	Given that the technique can be applied to any locally checkable problem, we expect many other problems to be solved by this technique.
	A first confirmation has already been given by the follow-up work of Balliu et al.\ \cite{DBLP:journals/corr/abs-1901-02441}, where the authors use our speedup technique to prove lower bounds for maximal matching and maximal independent set.
	
	One main difficulty in applying our speedup technique lies in the fact that the complexity of the problem description increases substantially in each speedup step.
	We provided a simplification technique (maximization) and outlined two general approaches (relaxation for a lower bound, making a problem harder for an upper bound) that mitigate this problem.
	Are there other techniques for reducing the description complexity?
	Is there a good way to determine which parts of the description of a derived problem are important and which can be discarded?
	Is finding the correct lower bound only a matter of finding the right relaxation (and to which degree can this be automated)?
	Answering these questions would be an important step forward in establishing the presented speedup as a main technique for determining the time complexity of locally checkable problems.
	
\section*{Acknowledgments}
I would like to thank Alkida Balliu, Dennis Olivetti, and Jukka Suomela for insghtful discussions about weak coloring.

	\bibliographystyle{plain}
	\bibliography{references}
	
\end{document}